\titleformat{\subsection}[runin]
{\normalfont\bfseries}{\thesubsection}{.5em}{}
\titleformat{\subsubsection}[runin]
{\normalfont\itshape}{\thesubsubsection}{.5em}{}
\theoremstyle{plain}
\newtheorem{definition}{Definition}
\definecolor{blendedblue}{rgb}{0.2,0.2,0.7}
\DeclareMathOperator*{\argmin}{arg\,min}
\newtheorem{problem}{Problem}
\newtheorem{theorem}{Theorem}
\newtheorem{lemma}{Lemma}
\definecolor{lightgray}{gray}{0.92}
\newcommand{\multiline}[1]{%
	\begin{tabularx}{\dimexpr\linewidth-\ALG@thistlm}[t]{@{}X@{}}
		#1
	\end{tabularx}
}
\algnewcommand\algorithmicforeach{\textbf{for each}}
\theoremstyle{definition}
\let\@fnsymbol\@arabic
\title{Sparse-group SLOPE: adaptive bi-level selection with FDR-control}
\author{Fabio Feser}
\author{Marina Evangelou}
\affil{Department of Mathematics, Imperial College London}
\date{}
\begin{document}
\maketitle
	
\begin{abstract}
%There are situations where the variables of the input design matrix have a natural grouping and the aims of the analysis of such data is both to select the groups of features and to identify the features drivers of these group associations. An example of such situation is the analysis of pathways (gene-sets) of gene expression data. 
In this manuscript, a new high-dimensional approach for simultaneous variable and group selection is proposed, called sparse-group SLOPE (SGS). SGS achieves false discovery rate control at both variable and group levels by incorporating the SLOPE model into a sparse-group framework and exploiting grouping information. A proximal algorithm is implemented for fitting SGS that works for both Gaussian and Binomial distributed responses. Through the analysis of both synthetic and real datasets, the proposed SGS approach is found to outperform other existing lasso- and SLOPE-based models for bi-level selection and prediction accuracy. Further, model selection and noise estimation approaches for selecting the tuning parameter of the regularisation model are proposed and explored. 

%It is further considered as a selection tool under Gaussian designs and is found to have a higher $\text{F}_1$ score than existing lasso- and SLOPE-based models. Finally, SGS is applied to real data to predict colitis cases, achieving a peak classification accuracy of $97.4\%$, again outperforming other lasso- and SLOPE-based models.

\noindent\textbf{Code:} SGS is implemented in the repository \href{https://github.com/ff1201/sgs}{github.com/ff1201/sgs}. An \texttt{R} package will be available shortly.

\noindent\textbf{Contact:} \href{mailto:ff120@ic.ac.uk}{ff120@ic.ac.uk}
	\end{abstract}

\section{Introduction}
%Consider the typical high-dimensional setting with input features $\mathbf{X}\in \mathbb{R}^{n\times p}$, where $p>>n$. Suppose, we have a response vector, $y \in \mathbb{R}^n$, which has been generated from the linear model $y = \mathbf{X}\beta + \epsilon$, where $\beta \in \mathbb{R}^p$, and $\epsilon \sim \mathcal{N}(0,\sigma^2 >0)$. 

Exploring the relationships between a continuous response, $y \in \mathbb{R}^n$, and a design matrix, $\mathbf{X}\in \mathbb{R}^{n\times p}$, is usually done by fitting a linear regression model $y = \mathbf{X}\beta + \epsilon$, where $\beta \in \mathbb{R}^p$ and $\epsilon \sim \mathcal{N}(0,\sigma^2 >0)$. The problem of identifying the variables that have a non-zero effect on the response $y$ is called variable selection. %In mathematics this is defined as In mathematical terms variable selection is defined as $S = \{j: \beta_j \neq 0 \}\subset \{1,\dots,p\}$. 
One of the most popular approaches for variable selection when working with high-dimensional data, $p>>n$, is the \textit{least absolute shrinkage and selection operator} (\textit{lasso}) proposed by \cite{Tibshirani1996}. The lasso performs variable selection by applying the $\ell_1$ penalty, defined by the norm $\|x\|_1 = \sum_i |x_i|$, that shrinks the coefficients of the features, setting some exactly equal to zero. Over the years, a number of extensions of the lasso have been proposed in the literature for overcoming some of its limitations. The lasso was shown to be inconsistent under certain scenarios in \cite{Zou2006}, who then proposed the adaptive lasso, which achieves the oracle properties by assigning different weights to the features. Further, as a consequence of using only the $\ell_1$ penalty, the lasso can select at most $n$ variables. Thus, the elastic net extension was proposed, which combines the $\ell_1$ and $\ell_2$ penalties, and so does not suffer from this limitation \citep{Zou2005}.  

%Over the years, a number of extensions of the lasso have been proposed in the literature including the adaptive lasso (\citep{Zou2006}) where different weights are assigned to the features, and elastic net (\citep{Zou2005}, which combines the $\ell_1$ and $\ell_2$ penalties (also known as the lasso and ridge penalties). These extensions were introduced to overcome some of the limitations of the lasso, including it is inconsistent under certain scenarios and can select at most $n$ predictors.

One of the challenges of variable selection is controlling the false discovery rate (FDR), as the tests for identifying the associated variables are performed simultaneously, leading to a multiple testing problem. \cite{Bogdan2015} proposed an adaptive extension of the lasso that is considered to be a bridge between the lasso and FDR-control in multiple testing. The proposed method, named 
\textit{sorted L-one penalised estimation} (\textit{SLOPE}), applies the penalty: $\text{pen}_\text{SLOPE}(b) = \sum_{i=1}^{p}\lambda_i |b|_{(i)}$, where $\lambda_1 \geq \dotsc \geq \lambda_p$, $\left|b\right|_{(1)} \geq \dotsc \geq \left|b\right|_{(p)}$. SLOPE reduces to the lasso for $\lambda_1 = \dots = \lambda_p$. It is similar to the adaptive lasso approach, but whilst in the adaptive lasso the penalties tend to decrease with increasing magnitude of the coefficients, the opposite occurs in SLOPE \citep{Bogdan2015}. A direct link to the Benjamini-Hockberg (BH) procedure and FDR-control is found through the choice of the penalty parameters. The BH critical values are used, so that for a variable $i$, $\lambda_i = z(1-i \cdot q_v/2p)$, where $q_v\in (0,1)$ is the desired variable FDR level and $z(\cdot)$ is the quantile function of a standard normal distribution. It has been shown that SLOPE achieves FDR-control under orthogonal designs \citep{Bogdan2015}. Additional useful properties of SLOPE include that it automatically finds the minimum total squared error loss over a range of sparsity classes, which means no a priori knowledge of the degree of sparsity is required, and it is asymptotically minimax \citep{Su2016}. 

Our work proposes an approach for dealing with situations where features arise as members of groups or from different data sources, where the aim is to select the groups and the features within the groups that are associated with the response. Examples of such cases include biological pathways; groups of genes working together for a specific product. When conducting pathway (gene set) analysis of genetics data, the interest is in identifying genes and pathways associated with a change in the risk profile of a disease. \cite{Evangelou2014} illustrated how genes discovered through pathway analysis have often been missed from conventional analyses and can have important biological roles in the development of a disease. Similarly, with the advancements of technology, many studies now involve the generation of multiple data sources, each with different features that describe the samples from alternative angles. As these data sources may contain noise variables, it is imperative that they are shrunk to zero, leaving only the signal features as non-zero. To this end, \cite{Baker2020} proposed a data-integration approach based on the lasso for multi-view feature selection. 

Both the lasso and SLOPE have been extended to selecting groups of variables, rather than just individual variables. Consider some $m$-partition of the input space, $\mathcal{G} = \{ G_1, \dots, G_m\}$ of the set $\{1,\dots, p\}$, such that $G_i \cap G_j = \emptyset$ for $i\neq j$ and $\bigcup_{i=1}^m G_i = \{1,\dots,p\}$, where $p_g := |G_g|$ is the size of group $g$. Then, \textit{Group SLOPE} (\textit{gSLOPE}) is defined by applying the $\ell_2$ norm to the group effects: $\text{pen}_\text{gSLOPE}(b) = \sum_{g=1}^{m} \lambda_g \sqrt{p_g} \| b^{(g)} \|_2$, where $\lambda_1 \geq \dotsc \geq \lambda_m$, $\sqrt{p_1}\|b^{(1)} \|_2 \geq \dotsc \geq \sqrt{p_m}\|b^{(m)} \|_2$, and $b^{(g)}\in \mathbb{R}^{p_g}$ is the vector of coefficients in group $g$ \citep{Gossmann2015,Brzyski2015}. gSLOPE achieves group FDR-control under orthogonal designs \citep{Brzyski2015}. With respect to the lasso, \cite{Yuan2006} introduced the \textit{group lasso} (\textit{gLasso}) approach with penalty: $\text{pen}_\text{gLasso}(b) =\sum_{g=1}^{m}\sqrt{p_g}\|b^{(g)}\|_2$, which reduces to the lasso when each group is a singleton. It creates sparsity at a group level by shrinking whole groups exactly to zero so that each variable within a group is also zero. Further, \cite{Simon2013} introduced the \textit{sparse-group lasso} (\textit{SGL}), which combines the lasso with gLasso to create models with bi-level sparsity. SGL was found to outperform both the lasso and gLasso when applied to predicting breast cancer cases using genomics data \citep{Simon2013}. 

In this manuscript, SLOPE is combined with gSLOPE for obtaining sparse solutions at both variable and group levels. The proposed approach, sparse-group SLOPE (SGS), is presented in $\S$\ref{section:SGS}. SGS works efficiently with high-dimensional data, performs bi-level selection, and simultaneously controls the variable and group FDRs under orthogonal designs; the last of which is not a property shared by SGL. SGS achieves FDR-control by applying more stringent penalisation. This is imperative when dealing with datasets with sparse representations, such as those found in genetics. An efficient algorithm is proposed for fitting SGS through an adaptive three operator splitting approach. In $\S$\ref{section:sgs_FDR} we present new penalty sequences which enable SGS to control the bi-level FDR. Through the analysis of both simulated and real data, it is illustrated how SGS outperforms existing competitive lasso- and SLOPE-based approaches. SGS was found to achieve more accurate variable and group selection than such methods under various scenarios, including random signals, large groups, and under the null model ($\S$\ref{section:sim_studies}). The problem of model selection with regards to SGS is explored in $\S$\ref{section:model_selection}, with a new noise estimation procedure proposed. Finally, SGS achieved higher classification accuracy than existing high-dimensional approaches when applied to predicting colitis and breast cancer cases using real genetic data in $\S$\ref{section:real_data}.
	
\section{Sparse-group SLOPE (SGS)}\label{section:SGS}
To incorporate the SLOPE concept into a sparse-group framework, we define \textit{sparse-group SLOPE} (\textit{SGS}) as the solution to the convex optimisation problem given by
\begin{equation}\label{eqn:sgs}
	\hat{\beta}_\text{SGS} := \argmin_{b\in \mathbb{R}^p}\left\{ \frac{1}{2n}\ell (b ; y, \mathbf{X}) + \lambda \alpha \sum_{i=1}^{p}v_i |b|_{(i)} + \lambda (1-\alpha)\sum_{g=1}^{m}w_g \sqrt{p_g} \|b^{(g)}\|_2 \right\},
\end{equation}
where $\ell(\cdot)$ is the loss function (choices of loss function are described in $\S$\ref{section:algorithm}). SGS can be seen to be a convex combination of SLOPE and gSLOPE (Figure \ref{fig:sgs_3d}), balanced through $\alpha \in [0,1]$, such that it reduces to SLOPE for $\alpha = 0$ and to gSLOPE for $\alpha = 1$. The tuning parameter $\lambda>0$ defines the degree of sparsity, as in the lasso, and can also be used to define a pathwise solution (discussed in $\S$\ref{section:model_selection_path}). SGS uses adaptive penalty weights, with variable weights $v = [v_{1} \;\; \dots \;\; v_{p}]^\top $, where $v_1 \geq \dotsc \geq v_{p}$ are matched with $|b|_{(1)} \geq \dots \geq |b|_{(p)}$, and group weights $w = [w_{1} \;\; \dots \;\; w_{G}]^\top$, where $w_1 \geq \dotsc \geq w_G$ are matched with $\sqrt{p_1}\|b^{(1)}\|_2 \geq \dotsc \geq \sqrt{p_m}\|b^{(m)}\|_2$; the choice of these weights are discussed in $\S$\ref{section:sgs_penalty}. SGS is a generalisation of many existing high-dimensional approaches, including the lasso, gLasso, SGL, SLOPE, and gSLOPE, using certain hyperparameter choices.
\begin{figure}[H]
	\centering
	\begin{subfigure}[b]{0.3\textwidth}
		\centering
		\includegraphics[width=\textwidth]{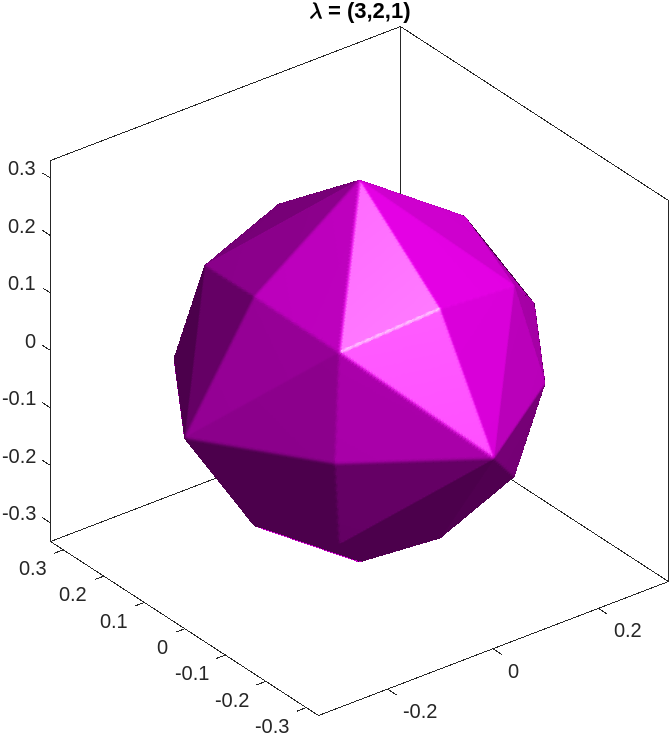}
		\caption[width=0.8\textwidth]{SLOPE}
		\label{fig:y equals x}
	\end{subfigure}
	\hfill
	\begin{subfigure}[b]{0.3\textwidth}
		\centering
		\includegraphics[width=\textwidth]{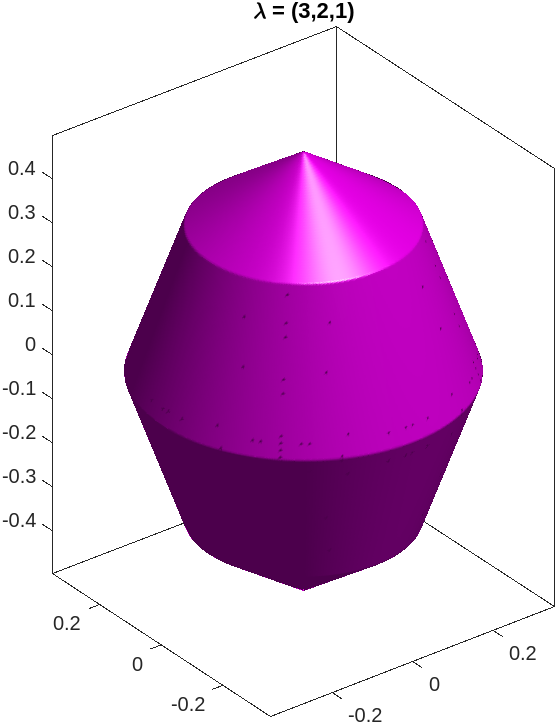}
		\caption[width=0.8\textwidth]{gSLOPE}
		\label{fig:three sin x}
	\end{subfigure}
	\hfill
	\begin{subfigure}[b]{0.3\textwidth}
		\centering
		\includegraphics[width=\textwidth]{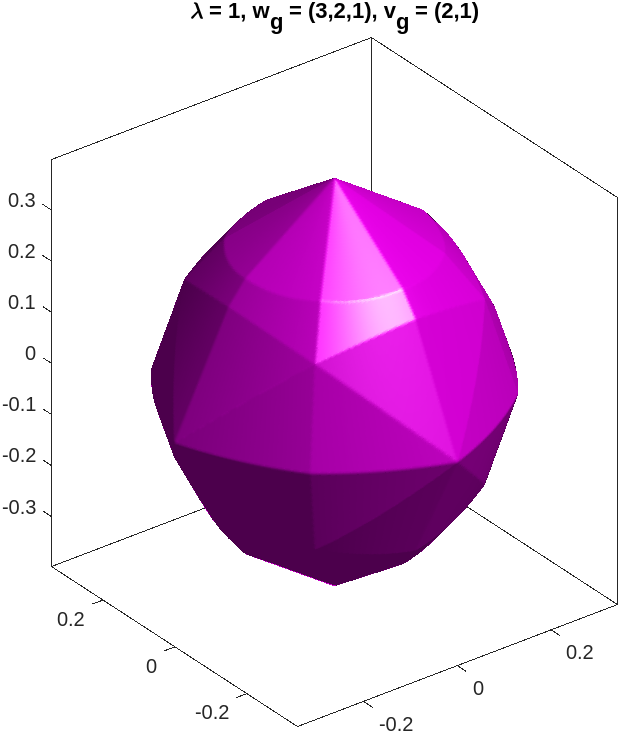}
		\caption[width=0.8\textwidth]{SGS with $\alpha = 0.5$}
		\label{fig:five over x}
	\end{subfigure}
	\caption[width=0.8\textwidth]{Units balls in $\mathbb{R}^3$ for the penalty functions of SLOPE (a), gSLOPE (b), and SGS (c). SGS can be seen to be a convex combination of SLOPE and gSLOPE.}
	\label{fig:sgs_3d}
\end{figure}
	
\subsection{Fitting algorithm.} \label{section:algorithm}
The penalty proposed in Equation (\ref{eqn:sgs}) is convex (the proof is given in $\S$\ref{appendix:fitting_algo}). Both the SLOPE and gSLOPE penalties are non-separable and data-dependent because of the sorting operation used \citep{Bu2019,Zhang2021}. Subsequently, the SGS penalty is also non-separable at both the variable and group level---that is,  $\text{Pen}_v(b) \neq \sum_{i=1}^{p} \text{pen}_v(b_i)$ and $\text{Pen}_g(b) \neq \sum_{g=1}^{m} \text{pen}_g(b^{(g)})$, where $\text{Pen}_v$ and $\text{Pen}_g$ are the SLOPE and gSLOPE penalties, respectively. As a result, blockwise gradient descent, which is used to fit SGL, is not guaranteed to converge to the global optimum \citep{Simon2013}. Instead, to fit SLOPE and gSLOPE, proximal algorithms are used, which do not require any separability assumptions. In proximal algorithms the coordinates are updated simultaneously, in contrast to the cyclic updates used in gradient descent. An upside is that non-separable penalties share information across variables and can detect grouping structures. This makes a non-separable penalty preferable for a group regression setting \citep{Rockova2016}.

Proximal gradient algorithms solve optimisation problems of the form $\min_x g(x) + h(x)$, where $g, h$ are convex functions and $g$ is differentiable. The SGS optimisation problem falls under such a scenario. SLOPE and gSLOPE are fitted using the proximal fast-iterative shrinkage-thresholding algorithm (FISTA) \citep{Beck2009}. Using a proximal algorithm requires being able to evaluate the \textit{proximal mapping}, given by
\begin{equation}\label{eqn:prox_mapping}
	\operatorname{prox}(x) := \argmin_{z}\left\{\frac{1}{2t} \left\|z-x \right\|_2^2 + h(z)\right\}.
\end{equation}
For non-separable penalties, such as the lasso, where it is given by the soft-thresholding operator, the mapping is usually derived using simple calculus. However, for non-separable penalties, finding the mapping is not trivial. Indeed, for both SLOPE and gSLOPE, a separate algorithm is required to compute the mappings, on top of the proximal algorithm \citep{Bogdan2015,Gossmann2015}. Instead of attempting to find the proximal mapping of SGS directly, we can exploit the fact that the mappings of SLOPE and gSLOPE are already known (given by Algorithm 3 in \cite{Bogdan2015} and Algorithm 2 in \cite{Gossmann2015}). To do this, SGS is reconsidered as a problem of the form $\min_{x} f(x) + g(x) + h(x)$, where $f$ is convex and $L_f$-smooth (differentiable with Lipschitz gradient), and both $g$ and $h$ are convex and proximal---that is, we have access to their proximal operator. The function $f$ corresponds to the loss function in Equation (\ref{eqn:sgs}) and the smoothness required is satisfied by the linear and logistic regression loss functions. The former is given by the least squares function, $\ell(b; y, \mathbf{X}) = \left\|y-\mathbf{X}b \right\|_2^2$, and the latter is described in $\S$\ref{appendix:binomial_loss_fcn}. The functions $g$ and $h$ are given by the SLOPE and gSLOPE penalties. To solve such a problem, adaptive three operator splitting (ATOS) \citep{Pedregosa2018} can be used. ATOS requires only evaluation of the gradient of $f$ and the proximal mappings of $g$ and $h$; all of which are already known for SGS.

\subsubsection{Adaptive three operator splitting (ATOS).}\label{section:proximal_algo}
First, the non-adaptive version of ATOS, three operator splitting (TOS), is described. The idea behind TOS is to introduce two auxiliary variables $y$ and $z$ and instead solve $\min_{x,y,z} f(x)+g(y)+h(z)$ subject to the constraint $x=y=z$. This allows the problem to broken down into three smaller (and simpler) sub-problems, keeping the solutions to the sub-problems as close as possible to each other. Formally, the update for step $t$ is given by
\begin{align*}
	&z_{[t]} = 	\operatorname{prox}_{h}(x_{[t]}), \\
	&y_{[t+1]} = 	\operatorname{prox}_{g}(2z_{[t]} - x_{[t]} - \gamma \nabla f(z_{[t]})), \\
	&x_{[t+1]} = x_{[t]} - z_{[t]} + y_{[t+1]},
\end{align*} 
where the subscript $[t]$ indicates the value of a variable at the $t$th iteration and $\gamma>0$ is the step-size \citep{Davis2017}. This is a generalisation of two popular splitting approaches: when $h=0$, we obtain the alternating direction method of multipliers approach, and for $g=0$, the forward-backward proximal splitting \citep{Parikh2014}. Both of these can be used to solve SLOPE and gSLOPE. 

ATOS is a modification to TOS in two ways. First, it applies an adaptive search to the step-size ($\S$\ref{section:backtracking}). Second, it reformulates the optimisation task as a saddle point problem. If we denote $h^*$ as the convex conjugate of $h$, then the optimisation problem can be written as \citep{Pedregosa2018}
\begin{align*}
	\min_x f(x) + g(x) + h(x) &= \min_x \left[ f(x) + g(x) + \max_u\{\langle x,u\rangle  - h^*(u)\} \right] \\
	&= \min_x \max_u \left[ \underbrace{f(x) + g(x) + \langle x,u \rangle - h^*(u)}_{:=L(x,u)} \right].
\end{align*} 
The problem reduces to finding the saddle point, $(x^*, u^*)$, of $L(x,u)$, where $x^*$ is the global minimum of the original optimisation problem. ATOS is given in full detail in Algorithm \ref{alg:cap}. From this, it is clear that ATOS recovers TOS by applying the transformation $x_{[t]} = b_{[t]} + \gamma_{[t]} u_{[t-1]}$ and using a constant step-size.
\begin{algorithm}[H]
	\caption{Adaptive three operator splitting for SGS}\label{alg:cap}
	\begin{algorithmic}
		\State \textbf{input:} $z_{[0]} \in \mathbb{R}^p, u_{[0]} \in \mathbb{R}^p, \gamma_{[0]}>0, \eta \in (0,1), v \in \mathbb{R}^p, w \in \mathbb{R}^m$ 
		\Repeat
		\For{$t=0,1,2,\dots$}
		\While{$f(b_{[t+1]}) > Q_t(b_{[t+1]}, \gamma_{[t]})$}\textcolor{blue}{\algorithmiccomment{Adaptive step-size search ($\S$\ref{section:backtracking})}}
		\State $b_{[t+1]} = \operatorname{prox}_{\text{SLOPE}}(z_{[t]} - \gamma_{[t]} u_{[t]} - \gamma_{[t]} \nabla f(z_{[t]}); \gamma_{[t]} v)$\textcolor{blue}{\algorithmiccomment{Proximal mapping for SLOPE}}
		\State $\gamma_{[t]} = \eta \gamma_t$\textcolor{blue}{\algorithmiccomment{Decrease step-size}}
		\EndWhile
		\State $\mathbf{m}_{[t+1]} =\mathbf{D}  b_{[t]}  + \mathbf{D}^{-1}\gamma_{[t]} u_{[t]}$
		\State $z_{[t+1]} = \operatorname{prox}_{\text{gSLOPE}}(m_{[t+1]} ; \gamma_{[t]} w)$\textcolor{blue}{\algorithmiccomment{Proximal mapping for gSLOPE}}
		\State $z_{[t+1]} = \mathbf{D}^{-1} z_{[t+1]}$
		\State $u_{[t+1]} = u_{[t]} + (b_{[t+1]} - z_{[t+1]})/\gamma_{[t]}$
		\EndFor
		\Until{$\left\|b_{[t+1]}- z_{[t]} \ \right\|_2\leq \epsilon$ or $t>t_\text{max}$ \textcolor{blue}{\algorithmiccomment{Stopping criteria}}}
		\State \textbf{output:} saddle point $(b_{[t+1]}, u_{[t+1]})$, where $b_{[t+1]} \in \mathbb{R}^p$ is the solution to SGS (Equation (\ref{eqn:sgs}).
	\end{algorithmic}
\end{algorithm}
\noindent Algorithm \ref{alg:cap} has the following parameters that can be tuned (stated with their default values):
\begin{itemize}
	\item Initial step-size, $\gamma_0$. Step-sizes are often set to 1 by default, although \cite{Pedregosa2018} recommend the following scheme instead: 1. Set $\epsilon=10^{-3}, \tilde{z}=z_{[0]}-\epsilon \nabla f(z_{[0]})$. 2. Calculate $\epsilon = 0.1\epsilon$ until $f(\tilde{z}) \leq f(z_{[0]})$. 3. Calculate $\gamma_0 = 4(f(z_{[0]}) - f(\tilde{z}_{[0]}))\left\| \nabla f(z_{[0]})\right\|^{-2}$.
	\item Backtracking parameter, $\eta$. \cite{Hastie2015} recommend $0.8$ for proximal algorithms, whilst \cite{Pedregosa2018} recommend $0.7$. The latter is used in this manuscript. 
	\item Relative accuracy (also known as tolerance), $\epsilon = 10^{-4}$.
	\item Maximum number of iterations, $t_\text{max} = 1000$.
	\item Initial values: $z_{[0]}, u_{[0]} = 0$. 
\end{itemize}

\subsubsection{Adaptive step-size search.}\label{section:backtracking}
As part of the update step, a step-size, $\gamma$, is used. A constant step-size may cause the algorithm to converge to a nonstationary point \citep{Hastie2015} and partially motivated the development of ATOS. ATOS uses an adaptive search for calculating the step-size. It works in a similar way to a backtracking line search, which is guaranteed to converge to the global optimum for convex functions \citep{Hastie2015}. To perform the search an initial step-size, $\gamma_{[0]}$, is set and a backtracking parameter, $\eta \in (0,1)$, fixed \citep{Pedregosa2018}. Then, the step-size is decreased using $\gamma_{[t]}=\eta \gamma_{[t]}$, until $f(b_{[t+1]}) > Q_t(b_{[t+1]}, \gamma_{[t]})$, where
\begin{equation}\label{eqn:backtracking}
	Q_{[t]}(x, \gamma) = f(z_{[t]}) + \langle \nabla f(z_{[t]}), x - z_{[t]} \rangle + \frac{1}{2\gamma}\left\|x-z_{[t]} \right\|_2^2.
\end{equation}
\subsubsection{gSLOPE proximal weight adjustment.}\label{section:gslope_transformation}
In the derivation of the proximal operator for gSLOPE, a transformation is applied, so that the gSLOPE proximal operator can not be used directly in the ATOS algorithm. In particular, gSLOPE is defined by the solution to the convex minimisation problem
\begin{equation} \label{eqn:gslope}
	\min_{b \in \mathbb{R}^p} \biggl\{ \frac{1}{2}\|y-  \mathbf{X}b \| _2^2 +\sum_{g=1}^{G} \lambda_g \sqrt{p_g} \| b^{(g)} \|_2 \biggr\},
\end{equation}
where $\lambda_1 \geq \dotsc \geq \lambda_G$, $\sqrt{p_1}\| b^{(1)} \|_2 \geq \dotsc \geq \sqrt{p_G}\|b^{(G)} \|_2$. In \cite{Gossmann2015} this is reformulated using the transformation $c_i = \sqrt{p_i}b_i$. In particular, let $\mathbf{D}$ be the diagonal matrix with entries $\sqrt{p_i}$, so that $c= \mathbf{D} b$. Then, Equation (\ref{eqn:gslope}) can equivalently be written as
\begin{align}
	\min_{c \in \mathbb{R}^p} \biggl\{\frac{1}{2}\| y - \mathbf{X}\mathbf{D}^{-1}c\|_2^2  + \underbrace{\sum_{g=1}^{G}\lambda_g\| c^{(g)}\|_2}_{:=f_g(c)}\biggr\}.
\end{align}
In the fitting algorithm for gSLOPE, the update step for $c$ is given by
\begin{equation}
	c_{[t+1]} = \operatorname{prox}_{f_g}(c_{[t]} - \gamma_{[t]} (\mathbf{X}\mathbf{D}^{-1})^\top (\mathbf{X}b_{[t]}-y)),
\end{equation}
where $\nabla f_g(b_{[t]}) = (\mathbf{X}\mathbf{D}^{-1})^\top (\mathbf{X}b_{[t]}-y)$ is the gradient of $f_g$ and $ \operatorname{prox}_{f_g}$ is the proximal mapping of Equation (\ref{eqn:gslope}) \citep{Gossmann2015}. The proximal mapping returns the vector $c$, instead of the desired vector $b$, and takes $c$ as input to the proximal mapping. As such, we need to apply a transformation onto the input and then undo the transformation after applying the proximal mapping. The gSLOPE update steps are altered as follows:
\begin{align}
	&z_{[t]} = 	\operatorname{prox}_\text{gSLOPE}(b_{[t]} + \gamma_{[t]} u_{[t]}; w)\longrightarrow z_{[t]} = \operatorname{prox}_\text{gSLOPE}( \mathbf{D} b_{[t]}  + \mathbf{D}^{-1}\gamma_{[t]} u_{[t]}; w),\\
	&u_{[t+1]} = u_{[t]} + (b_{[t]} - z_{[t]})/\gamma_{[t]}\;\,\:\;\;\;\;\;\longrightarrow u_{[t+1]} = u_{[t]} + (b_{[t]} -  \mathbf{D}^{-1} z_{[t]})/\gamma_{[t]}.
\end{align}
The transformation $\mathbf{D} b_{[t]}$ ensures $c_{[t]}$ is the input into the operator and the transformation $\mathbf{D}^{-1} z_{[t]}$ recovers $b_{[t]}$. Additionally, the gSLOPE transformation alters the gradient, $\nabla f_g$, to include an additional $\mathbf{D}^{-1}$ term, so the transformation $\mathbf{D}^{-1}\gamma_{[t]}$ accounts for this difference.

\section{FDR-control}\label{section:sgs_FDR}
Applying SGS to Problem \ref{problem:orthogonal_model}, guarantees of the variable and group FDR of the computed estimates $\hat{\beta}_\text{SGS}$ are sought. To do this, new penalty sequences were derived and shown to control bi-level FDR. Theorem \ref{thm:sgs_var_fdr_proof} introduces a new variable penalty sequence which is shown to control the variable FDR. Theorem \ref{thm:sgs_grp_fdr_proof} proposes a group penalty sequence which controls the group FDR (the proofs of both are given in $\S$\ref{appendix:fdr_proof}). Combined, these two penalty sequences guarantee bi-level FDR-control for SGS under orthogonal designs. The theorems are verified through simulations in $\S$\ref{section:ortho_results}.
\begin{problem}\label{problem:orthogonal_model}
	Suppose we have a linear model, $y = \mathbf{X}\beta + \epsilon$, where $\beta \in \mathbb{R}^p$ and $\epsilon \sim \mathcal{N}(0,\sigma^2 >0)$, with orthogonal input $\mathbf{X} \in \mathbb{R}^{n\times p}$. Under orthogonality, we consider the simplified model $\tilde{y} := \mathbf{X}^\top y = \beta + \epsilon$. For ease of notation, we refer to $\tilde{y}$ simply as $y$ in $\S$\ref{section:sgs_FDR} and $\S$\ref{appendix:fdr_proof}. So, $y \sim \mathcal{N}(\beta, \mathbf{I}_p)$ and $\epsilon \sim \mathcal{N}(0,\mathbf{I}_p)$. Further, suppose there exists some $m$-partition of the input space $\mathcal{G} = \{ G_1, \dots, G_m\}$ of the set $\{1,\dots, p\}$, such that $G_i \cap G_j = \emptyset$ for $i\neq j$ and $\bigcup_{i=1}^m G_i = \{1,\dots,p\}$. Find the oracle set $S = \{j: \beta_j \neq 0\}$.
\end{problem}
\subsection{FDR Theorems.}\label{section:fdr_thms}
\begin{theorem}\label{thm:sgs_var_fdr_proof}
	Suppose we apply SGS to Problem \ref{problem:orthogonal_model} using the variable weights given by
	\begin{equation}
		v_i^\text{max} = \max_{j=1,\dots,m} \left\{\frac{1}{\alpha} F_\mathcal{N}^{-1} \left(1-\frac{q_vi}{2p}\right) -   \frac{1}{3\alpha}(1-\alpha) a_j w_j\right\}, \; i=1,\dots,p,
	\end{equation}
    where $F_\mathcal{N}$ is the cumulative distribution function of a standard Gaussian distribution. We test the multiple variable hypotheses given by
	\begin{equation}
		H_i^v: \hat{\beta}_i =0,\; i=1,\dots,p,
	\end{equation}
	and define $V^v$ and $R^v$ to be the number of false and total variable rejections, given by
	\begin{align}
			V^v &= |\{i: \beta_i =0, \hat{\beta} \neq 0\}|,\\ 
			R^v &= |\{i: \hat{\beta}_i \neq 0\}|.
		\end{align}
		SGS has a variable FDR (vFDR) bounded by
		\begin{equation}
			\text{vFDR} := \mathbb{E}\left[\frac{V^v}{\max(R^v,1)}\right] \leq q_v\frac{p_0}{p},
		\end{equation}
		for a specified vFDR level $q_v \in (0,1)$, where $p_0 := |\{i:\beta_i = 0\}|$ is the number of true null hypotheses.
\end{theorem}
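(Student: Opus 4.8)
The plan is to mirror the SLOPE false-discovery-rate argument of \citet{Bogdan2015} under orthogonal designs, adapting it to absorb the extra group penalty. Throughout I work in the reduced model of Problem~\ref{problem:orthogonal_model}, where $y \sim \mathcal{N}(\beta, \mathbf{I}_p)$ and the loss collapses to the separable quadratic $\tfrac12\|y-b\|_2^2$, so that $\hat\beta_\text{SGS}$ is simply the proximal map of the combined SLOPE--gSLOPE penalty evaluated at $y$. First I would write the subdifferential (KKT) characterisation of the optimum and isolate the event $\{\hat\beta_i \neq 0\}$. Because the quadratic loss is separable, coordinatewise optimality reads $y_i = \hat\beta_i + \lambda\alpha\, s_i^v + \lambda(1-\alpha)\, s_i^g$, where $s^v$ is a subgradient of the sorted-$\ell_1$ (SLOPE) penalty and $s^g$ a subgradient of the gSLOPE penalty. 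When $\hat\beta_i = 0$ the magnitude of the SLOPE contribution is controlled by the variable weight at the relevant rank, while the per-coordinate gSLOPE contribution from the group $j\ni i$ is bounded by $\tfrac13 a_j w_j$ (the quantity encoded by $a_j$). Consequently $\hat\beta_i \neq 0$ forces $|y_i|$ to exceed an effective variable threshold $\tau_i$ combining both penalties.

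The central step is to show that this effective threshold dominates the Benjamini--Hochberg critical value. By the construction of the weights,
\begin{equation*}
\alpha v_i^{\max} + \tfrac13(1-\alpha)a_j w_j \;\geq\; F_{\mathcal N}^{-1}\!\left(1-\tfrac{q_v i}{2p}\right) \quad \text{for every group } j,
\end{equation*}
because the inner maximand subtracts precisely the group term and the maximum over $j$ enforces the inequality uniformly. Hence for every null coordinate $i$, where $y_i \sim \mathcal{N}(0,1)$, the rejection event is contained in $\{|y_i| > F_{\mathcal N}^{-1}(1-q_v i/(2p))\}$, whose probability equals exactly $q_v i/p$. This reduces the SGS variable rejections to those of a sorted-$\ell_1$ rule whose thresholds are at least the BH critical values.

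Finally I would invoke the FDR-control argument for sorted-$\ell_1$ penalisation under orthogonality: using the sorted structure of the thresholds together with the exchangeability of the null statistics, the counting argument of \citet{Bogdan2015} bounds $\mathbb{E}[V^v/\max(R^v,1)]$ by $q_v p_0/p$, the factor $p_0/p$ reflecting that only the $p_0$ true nulls can be falsely rejected.

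The hard part will be the decoupling in the second step. Since the gSLOPE term couples all coordinates within a group, $\{\hat\beta_i\neq 0\}$ is not a function of $y_i$ alone, so I must establish the per-coordinate subgradient bound $\tfrac13 a_j w_j$ and show it holds \emph{uniformly} over groups and over the admissible sign and rank configurations of each group. Securing this uniform bound---and verifying that substituting the exact BH thresholds by the larger effective SGS thresholds still yields the sharp $q_v p_0/p$ bound rather than merely some FDR control---is where the genuine work lies.
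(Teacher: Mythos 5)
Your skeleton (reduce to the proximal problem at $y$, characterise rejection by a threshold on $|y_i|$, then run a BH-type counting argument) matches the paper's strategy, but two of your three steps contain genuine gaps, and the step you defer to a citation is precisely where the work lies. First, your threshold containment is stated with the wrong index: you claim the rejection event for a null \emph{coordinate} $i$ is contained in $\{|y_i| > F_{\mathcal N}^{-1}(1-q_v i/(2p))\}$, using the rank-$i$ weight for coordinate $i$. In a sorted-$\ell_1$ penalty the weights attach to ranks, not coordinates, so a null coordinate rejected when there are $R^v=r$ total rejections only has to clear the rank-$r$ threshold, which for $r>i$ is \emph{smaller} than the rank-$i$ one; the containment as written is false. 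The correct statement is the paper's Lemma \ref{lemma:sgs_1}, an equality of \emph{joint} events,
\begin{equation*}
\{H_i^v \text{ rejected and } R^v=r\} = \Bigl\{|y_i| > \alpha v_r + \tfrac{1}{3}(1-\alpha)a_g w_g \text{ and } R^v=r\Bigr\},
\end{equation*}
and the factor $1/r$ in $\mathbb{E}[V^v/\max(R^v,1)]$ is exactly what cancels the $r$ in the resulting bound $q_v r/p$. Without conditioning on $R^v=r$ you get at best a per-comparison error bound, not FDR control.

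Second, the final step cannot be discharged by ``invoking the counting argument of \cite{Bogdan2015}'': even granted the joint-event characterisation, one must factorise $\mathbb{P}\bigl(|y_i|>\tau_r \text{ and } R^v=r\bigr)$, and this goes through a leave-one-out lemma stating that, on this event, the SGS problem with coordinate $i$ deleted (and the weight sequences shifted accordingly) has exactly $r-1$ rejections --- the paper's Lemma \ref{lemma:sgs_split}. This is the part that genuinely cannot be cited from the SLOPE literature, because the gSLOPE term couples all coordinates in a group: deleting a coordinate changes group norms, group sizes and hence the ordering the penalty acts on, and the paper has to re-prove it with a perturbation argument (Lemma \ref{lemma:sgs_2} plus the contradiction arguments). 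Relatedly, your ``per-coordinate subgradient bound $\tfrac13 a_j w_j$'' is not a subgradient bound at all: subgradients of $(1-\alpha)w_j\sqrt{p_j}\|b^{(j)}\|_2$ have per-coordinate magnitude up to $(1-\alpha)w_j\sqrt{p_j}$. The constants $\tfrac13$ and $a_j$ arise instead from the norm-difference inequality $\|x\|_2-\|y\|_2 \geq hm/(3\sqrt{p})$ (Lemma \ref{lemma:norm_inequality}, Bound (i)) applied inside a suboptimality comparison $f(\hat b)-f(c)\leq 0$ for a perturbed feasible point $c$, so a KKT route would have to rediscover this inequality in a different guise. You correctly sense that the group coupling is ``where the genuine work lies,'' but the missing tool is the joint-event and leave-one-out lemmas, not a uniform subgradient bound; as proposed, the argument does not close.
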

	\begin{theorem}\label{thm:sgs_grp_fdr_proof}
		Suppose we apply SGS to Problem \ref{problem:orthogonal_model} using the group weights given by
		\begin{equation}
			w_i^\text{max} =\max_{j=1,\dots,m}\left\{\frac{F_\text{FN}^{-1}(1-\frac{q_gi}{m})-\alpha \sum_{k \in G_j}v_k }{(1-\alpha) p_j}\right\}, \; i=1,\dots,m,
		\end{equation}
        where $F_\mathcal{N}$ is the cumulative distribution function of a folded Gaussian distribution. We test the multiple group hypotheses given by
		\begin{equation}
			H_i^g: \|\hat{\beta}^{(i)}\|_2=0, \; i= 1,\dots,m,
		\end{equation}
		and define $V^g$ and $R^g$ to be the number of false and total group rejections, given by
		\begin{align}
			V^g &= |\{i: \|\beta^{(g)}\|_2=0, \|\hat{\beta}^{(g)}\|_2 \neq 0\}|,\\ 
			R^g &= |\{i: \|\hat{\beta}^{(g)}\|_2 \neq 0\}|.
		\end{align}
		SGS has a group FDR (gFDR) bounded by
		\begin{equation}
			\text{gFDR} := \mathbb{E}\left[\frac{V^g}{\max (R^g, 1)}\right]\leq q_g \frac{m_0}{m},
		\end{equation}
		for a specified gFDR level $q_g \in (0,1)$, where $m_0 := |\{i:\|\beta^{(i)}\|_2 = 0\}|$ is the number of true null hypotheses.
	\end{theorem}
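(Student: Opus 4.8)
The plan is to mirror the template used for the SLOPE and gSLOPE false discovery rate proofs \citep{Bogdan2015,Brzyski2015}, adapting each step to the bi-level penalty of SGS and to the group-level null statistic. Throughout I work within Problem \ref{problem:orthogonal_model}, so that $y \sim \mathcal{N}(\beta, \mathbf{I}_p)$ and the least-squares loss makes the estimator $\hat{\beta}_\text{SGS}$ the proximal map of the SGS penalty evaluated at $y$. The argument proceeds through four stages: (i) convert the optimality conditions into an explicit group-rejection rule; (ii) reduce the null-group rejection event to a one-dimensional folded-Gaussian threshold crossing via the weight formula; (iii) establish a group-removal stability lemma that decouples the statistic of a given null group from the total number of rejections; and (iv) assemble these into the stated bound by a Benjamini--Yekutieli-style telescoping sum.

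First I would write down the subdifferential (KKT) conditions for Equation (\ref{eqn:sgs}) at the solution. Using separability of the subdifferential of the group-norm term across blocks and of the sorted-$\ell_1$ term across coordinates (after matching the sort order), the condition $\|\hat{\beta}^{(g)}\|_2 = 0$ is equivalent to the existence of a decomposition of $y^{(g)}$ into a coordinatewise part bounded by the variable penalties and a block part bounded by the group penalty. Eliminating the coordinatewise part shows that group $g$ is rejected precisely when the group data, after soft-thresholding by the variable weights, has Euclidean norm exceeding the group threshold; that is, $\|\hat{\beta}^{(g)}\|_2 \neq 0$ only if a group statistic $T_g$ exceeds $\alpha\sum_{k\in G_g} v_k + (1-\alpha) p_g w_r$ when the group is assigned sorted rank $r$. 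This is the step where the non-separability of both penalties must be handled carefully, and I expect it to be the main obstacle: one must verify that the sorted weight attached to a rejected null group is no smaller than the rank-$r$ weight, so that the crossing event is contained in a single clean threshold event rather than a data-dependent mixture.

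Second, for a null group $\beta^{(g)} = 0$ we have $y^{(g)} \sim \mathcal{N}(0, \mathbf{I}_{p_g})$, and the construction of $w_i^\text{max}$ is designed exactly so that the combined threshold $\alpha\sum_{k\in G_g} v_k + (1-\alpha) p_g w_r$ equals the folded-Gaussian quantile $F_\text{FN}^{-1}(1 - q_g r/m)$. I would then show that the null group statistic $T_g$ is stochastically dominated by a folded Gaussian, so that $\mathbb{P}(T_g \geq F_\text{FN}^{-1}(1 - q_g r / m)) \leq q_g r / m$. Establishing this stochastic domination --- relating the soft-thresholded block norm to a single folded-normal variable through the $\sqrt{p_g}$ and $p_g$ scalings in the penalty --- is the delicate distributional point and sits alongside stage (i) as the crux of the proof.

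Third, following the decoupling device of the SLOPE proof, I would compare the full problem with the one obtained by deleting group $g$, and prove a monotonicity lemma bounding the number of group rejections $R^g$ by $R^g_{-g} + 1$, where $R^g_{-g}$ (the number of rejections in the reduced problem) is a function of $\{y^{(h)}\}_{h \neq g}$ alone and hence independent of $y^{(g)}$. Finally I would write $\text{gFDR} = \sum_{g:\,\|\beta^{(g)}\|_2 = 0} \mathbb{E}[\mathbf{1}(g \text{ rejected})/\max(R^g,1)]$, decompose each summand over the value $r$ of $R^g$, bound $\mathbb{P}(g \text{ rejected}, R^g = r)$ using stage (ii) together with the independence from stage (iii), and telescope the resulting sum $\sum_r \frac{1}{r}\cdot\frac{q_g r}{m}\,\mathbb{P}(R^g_{-g} = r-1) \leq q_g/m$. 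Summing over the $m_0$ null groups yields $\text{gFDR} \leq q_g\, m_0/m$, matching the structure of Theorem \ref{thm:sgs_var_fdr_proof}.
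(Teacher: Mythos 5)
Your four--stage skeleton (rejection rule, null threshold--crossing bound tied to the weight formula, group--removal decoupling, telescoping sum) is the same architecture as the paper's proof, and your stages (iii)--(iv) do correspond to Lemma \ref{lemma:sgs_split_grp} and the concluding computation. The gaps are in stages (i) and (ii) --- exactly the steps you flag as the crux --- where the specific claims you make would fail.

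For stage (i), your premise that the subdifferential of the sorted $\ell_1$ term separates across coordinates and that of the sorted group--norm term separates across blocks ``after matching the sort order'' is false: subdifferentials of SLOPE--type penalties are characterised by cumulative (majorisation) conditions that couple all coordinates, which is precisely why the paper avoids KKT conditions altogether. It instead derives the rejection rule by a perturbation argument (Lemma \ref{lemma:sgs_4}): compare $f(\hat b)$ with $f(c)$ for feasible suboptimal $c$ obtained by shifting a block of entries by a small $h>0$, divide by $h$, let $h \to 0$, and use the norm bounds of Lemma \ref{lemma:norm_inequality}; this produces the cumulative--sum conditions (\ref{eqn:accept_condition_grp}) and (\ref{eqn:reject_condition_grp}), from which Lemma \ref{lemma:sgs_3_grp} follows. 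For stage (ii), your statistic is the wrong one: you take $T_g$ to be the Euclidean norm of the soft--thresholded group data, which is the sparse--group--lasso rejection rule for \emph{separable} penalties, whereas the rule the paper establishes for SGS is that a null group at sorted rank $r$ is rejected (with $R^g=r$) iff $\sum_{i\in G_r}|y_i| > \alpha\sum_{i\in G_r} v_i + (1-\alpha)w_r p_r$, i.e.\ the group $\ell_1$ norm of the \emph{raw} data crosses the rank--$r$ threshold. This identification is what makes the probability bound immediate: under the null, $\sum_{i\in G_r}|y_i|$ is a sum of folded standard normals, whose distribution function is exactly the $F_\text{FN}$ used to define $w^{\max}$, so the crossing probability is at most $q_g r/m$ by construction of the weights, with no stochastic--domination lemma needed --- and it is not clear such a domination could be proved for your $\ell_2$--of--soft--thresholded statistic, since its null law does not match the quantile in the weight formula. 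Finally, a smaller repair: your decoupling lemma $R^g \leq R^g_{-g}+1$ gives only one containment, while the bound $\mathbb{P}(H_i^g \text{ rejected},\, R^g=r) \leq \mathbb{P}\bigl(\sum_{i\in G_r}|y_i| > \alpha\sum_{i\in G_r}v_i + (1-\alpha)w_r p_r\bigr)\,\mathbb{P}(\tilde R^g = r-1)$ used in the telescoping step requires the exact event identity of Lemma \ref{lemma:sgs_split_grp}, which the paper again proves by perturbation and contradiction rather than by monotonicity alone.
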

\subsection{Penalty sequences.}\label{section:sgs_penalty}
The penalty sequences for SLOPE and gSLOPE \citep{Bogdan2015,Brzyski2015} are respectively given by
\begin{align}
	&v_i = F_\mathcal{N}^{-1}(1-q_vi/2p), \; \text{for} \; i = 1,\dots,p,\label{eqn:slope_bh_seq}\\
	&w_i^\text{max} = \max_{j=1,\dots,m}\left\{\frac{1}{\sqrt{p_j}} F^{-1}_{\chi_{p_j}} (1-q_gi/m)\right\}, \; \text{for} \; i=1,\dots,m, \label{eqn:gslope_pen_max}
\end{align}
where $q_v, q_g\in(0,1)$ are the desired variable/group FDR levels, and $F_{\chi_{p_j}}$ is the cumulative distribution function of a $\chi$ distribution with $p_j$ degrees of freedom. These sequences are referred to as the SLOPE BH and gSLOPE max sequences. Both were derived under the orthogonal case to provide FDR-control. For more general settings, a modified sequence, termed the \textit{Gaussian sequence}, was derived for SLOPE in \cite{Bogdan2015}, but it reduces to the lasso when $p\gg n$ \citep{Larsson2020}, which is the primary focus of this manuscript and the sequence is not considered further. One could use these penalties for SGS (termed \textit{SGS Original}), but this would be a rather naive approach as they were not derived specifically for SGS. Indeed, applying SGS Original to orthogonal data (with the set-up from $\S$\ref{section:ortho_results}) does not achieve bi-level FDR-control (seen in Figures \ref{fig:even_sgs_org_double} and \ref{fig:uneven_sgs_org_double}). An alternative approach would be to set $\alpha = 0.5$ and $\lambda = 2$ in SGS Original, to apply both penalties in their original form (termed \textit{SGS Double}), however this was also found to be unsatisfactory, as too much penalisation is applied, so that the FDR-sensitivity trade-off is not optimised (seen in Figures \ref{fig:even_sgs_org_double} and \ref{fig:uneven_sgs_org_double}).

Considering Theorems \ref{thm:sgs_var_fdr_proof} and \ref{thm:sgs_grp_fdr_proof}, the penalty sequences which guarantee bi-level FDR-control for SGS are given by
\begin{align}
	&v_i^\text{max} = \max_{j=1,\dots,m} \left\{\frac{1}{\alpha} F_\mathcal{N}^{-1} \left(1-\frac{q_vi}{2p}\right) -   \frac{1}{3\alpha}(1-\alpha) a_j w_j\right\}, \; i=1,\dots,p, \label{eqn:sgs_var_pen_max}\\
	&w_i^\text{max} =\max_{j=1,\dots,m}\left\{\frac{F_\text{FN}^{-1}(1-\frac{q_gi}{m})-\alpha \sum_{k \in G_j}v_k }{(1-\alpha) p_j}\right\}, \; i=1,\dots,m,\label{eqn:sgs_grp_pen_max}
\end{align}
where $a_j$ is a quantity to be estimated (discussed in $\S$\ref{section:ortho_results}). A key aspect of these sequences is that they depend on each other, accommodating bi-level FDR-control. A relaxation of these penalty sequences is possible. For the gSLOPE sequence, \cite{Brzyski2015} applies a relaxation to obtain the gSLOPE mean sequence
\begin{align}
	&   w_i^\text{mean} = \overline{F}^{-1}_{\chi_{p_j}} (1-q_gi/m), \; \text{for} \; i=1,\dots,m,\label{eqn:gslope_pen_mean}\\
	&\text{where} \;\overline{F}_{\chi_{p_j}}(x):= \frac{1}{m}\sum_{j=1}^{m}F_{\chi_{p_j}}(\sqrt{p_j}x).    
\end{align}
To see how a similar relaxation for SGS is feasible, observe that in the proof for Theorem \ref{thm:sgs_var_fdr_proof} ($\S$\ref{appendix:var_proof}), Equation (\ref{eqn:strict_condition}) can be recast as
\begin{align}
	&\frac{1}{m}\sum_{j=1}^{m} \left(1-F_\mathcal{N}\left(\alpha v_i + \frac{1}{3}(1-\alpha)  a_j w_j\right)\right) \leq \frac{q_vi}{2p},\\
	&\implies \frac{1}{m}\sum_{j=1}^{m} F_\mathcal{N}\left(\alpha  v_i +\frac{1}{3}(1-\alpha) a_j w_j\right) \geq 1- \frac{q_vi}{2p}.
\end{align}
So we can pick
\begin{equation}\label{eqn:sgs_var_pen_mean}
	v_i^\text{mean} = \overline{F}_\mathcal{N}^{-1}\left(1-\frac{q_vi}{2p}\right), \; \text{where}\; \overline{F}_\mathcal{N}(x) := \frac{1}{m}\sum_{j=1}^{m} F_\mathcal{N}\left(\alpha x +  \frac{1}{3}(1-\alpha) a_j w_j\right), \; i\in \{1,\dots,p\}.
\end{equation}
Applying a similar relaxation to $w_i^\text{max}$ (Equation (\ref{eqn:sgs_grp_pen_max})) gives
\begin{equation}\label{eqn:sgs_grp_pen_mean}
	w_i^\text{mean} = \overline{F}_\text{FN}^{-1}\left(1-\frac{q_gi}{p}\right), \; \text{where}\; \overline{F}_\text{FN}(x) := \frac{1}{m}\sum_{j=1}^{m} F_\text{FN}\left((1-\alpha) p_j x + \alpha \sum_{k \in G_j} v_k\right), \; i\in \{1,\dots,m\}.
\end{equation}
The derived sequences, Equations (\ref{eqn:sgs_var_pen_max}), (\ref{eqn:sgs_grp_pen_max}), (\ref{eqn:sgs_var_pen_mean}), (\ref{eqn:sgs_grp_pen_mean}), will be referred to as the \textit{vMax}, \textit{gMax}, \textit{vMean}, and \textit{gMean} sequences, respectively. The relaxed penalty sequences are visualised in Figure \ref{fig:penalty_seq}.
\begin{figure}[H]
	\includegraphics[width=1\textwidth]{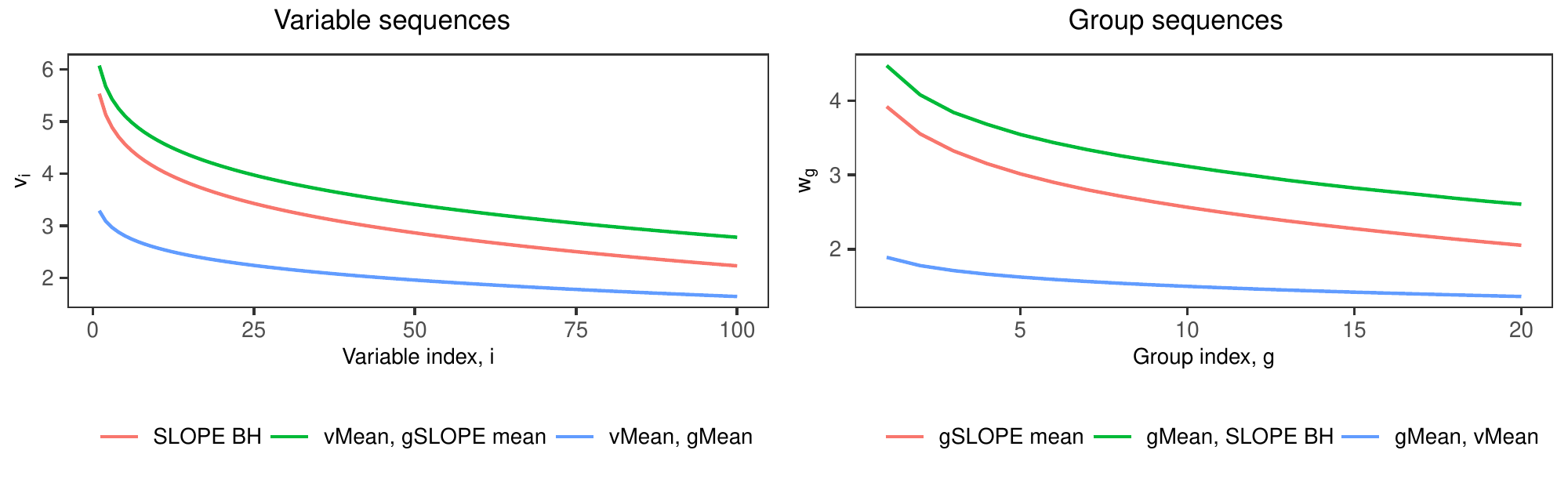}  
	\caption[width=0.8\textwidth]{Variable and group sequences shown for $m=20$ uneven groups of sizes $\{3,\dots,7\}$ with $p=100$ and $\alpha = 0.5$. The SLOPE BH, gSLOPE mean, vMean, and gMean sequences correspond to Equations (\ref{eqn:slope_bh_seq}), (\ref{eqn:gslope_pen_mean}), (\ref{eqn:sgs_var_pen_mean}) and (\ref{eqn:sgs_grp_pen_mean}), respectively.}
		\label{fig:penalty_seq}
\end{figure}

\subsection{Computational experiments.}\label{section:ortho_results}
To verify the bi-level FDR-control computationally, an orthogonal design matrix $\mathbf{X} = \boldsymbol{I}_{1000}$ was generated. Two cases were considered: even and uneven groups. In the even case, 200 groups were used, each of size 5, and for the uneven case, 40 groups of each size $\{3,\dots,7\}$ (so that there were also 200 groups in total). Within an active group, $60\%$ of the variables were randomly set to active (so, $\alpha$ was set to $0.6$ for SGS). For both cases, the variable and group sparsity proportions of the true signal varied from $1$ to $0.75$ and $0.59$, where sparsity proportion refers to the proportion of inactive variables/groups in the true model. The true effects were set to $\beta =5\delta \sqrt{2\log{p}}$, where $\delta \sim \mathcal{N}(0,1)$, because the expected value of the maximum of $p$ independent standard normal variables is approximately $\sqrt{2\log{p}}$ \citep{Cai2014}. The response was generated using the linear Gaussian model $y = \beta + \epsilon$, where $\epsilon \sim \mathcal{N}(0,1)$. In both cases, the hyperparameters $q_v, q_g =0.05,0.1,0.2$ and $\lambda = 1/n$ were used, with $1000$ Monte Carlo (MC) repetitions performed per sparsity proportion considered.
 
To apply SGS, the quantity $a_j$ in the variable sequences needs to be estimated (Equations (\ref{eqn:sgs_var_pen_max}) and (\ref{eqn:sgs_var_pen_mean})). The quantity represents the number of active variables within an active group (as shown in Theorem \ref{thm:sgs_var_fdr_proof}). A suitable estimator is given by $\hat{a}_j := \lfloor\alpha p_j\rfloor$, illustrated in Figure \ref{fig:var_max_ag_all}. For the even case, the highest sensitivity, whilst maintaining bi-level FDR-control, was achieved when $\hat{a}_j=3$ (which is $\alpha p_j = 0.6 \cdot 5$).

For the group sequences, complications arise from the quantity $\sum_{k \in G_j}v_k$ (Equations (\ref{eqn:sgs_grp_pen_max}) and (\ref{eqn:sgs_grp_pen_mean})). Whilst the variable sequence is known, we do not have prior information about the exact mappings of the penalties to the variables, and so to the groups to which the variables belong. As such, we have made an assumption that the highest ranking groups (those with the largest $\sqrt{p_j}\|\beta^{(j)}\|_2$ values) are those with the largest group size, so that they are assigned the largest variable penalties. So, for the highest ranking group (say, of size $p_j$), the variable penalty values $\{v_1,\dots,v_{p_j}\}$ are used. 
    
\subsubsection{Even groups.}
SGS achieves bi-level FDR-control using the vMax and gMax sequences (Figure \ref{fig:even_gmax_vmax}). Using the relaxed sequences, vMean and gMean, the bi-level FDR is kept close to the desired level, but FDR-control is not achieved (shown in Figure \ref{fig:sim_3_even_gmean_vmean}). The best balance between FDR and sensitivity was found using the vMean sequence with the gSLOPE mean sequence (Figure \ref{fig:even_gmax_vmax}), where it can be observed that bi-level FDR-control is obtained, even with the SGS variable relaxed sequence.

\begin{figure}[H]
\includegraphics[width=1\textwidth]{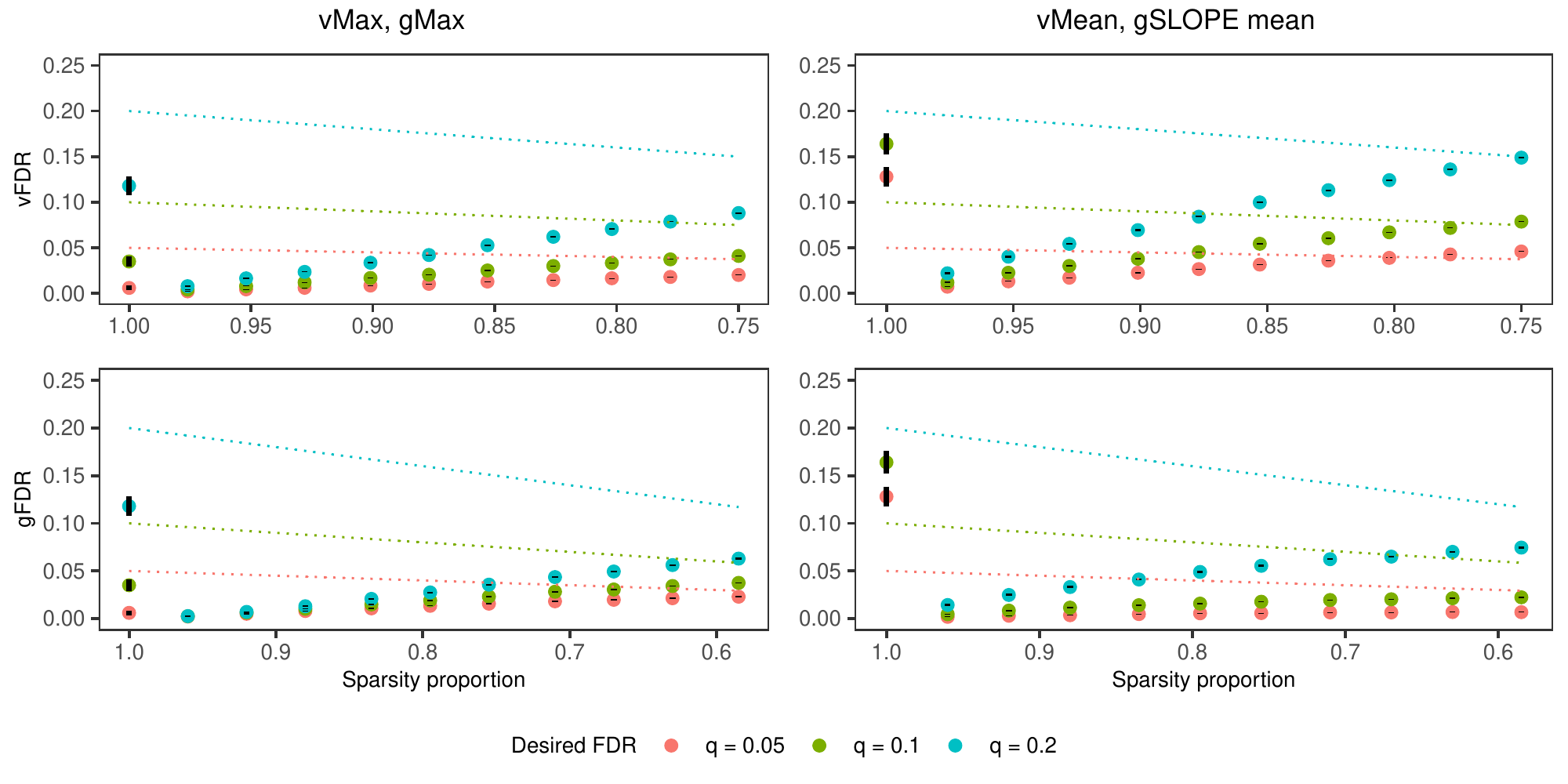}
	\caption[width=0.8\textwidth]{vFDR and gFDR shown for SGS with the vMax, gMax and the vMean, gSLOPE mean sequences under orthogonal design with even groups, as a function of decreasing sparsity proportion. 1000 MC repetitions performed per sparsity proportion. The sensitivity is given in Figure \ref{fig:sim_3_even_sgs_sens_final}.}
	\label{fig:even_gmax_vmax}
\end{figure}

\subsubsection{Uneven groups.}
The active groups were chosen to ensure the true model had a similar sparsity pattern as the even case. Under uneven groups, bi-level FDR-control is achieved using gMax and vMax penalty sequences (Figure \ref{fig:uneven_gmax_vmax}). However, using the relaxed sequences, FDR-control was again not obtained (shown in Figure \ref{fig:sim_3_uneven_gmean_vmean}). The best results came from using the vMean and gSLOPE mean sequences (Figure \ref{fig:uneven_gmax_vmax}), where bi-level FDR-control occurs. For the rest of the manuscript, SGS will use the vMean sequences for the variables and the gSLOPE mean sequences for the groups.
\begin{figure}[H]
\includegraphics[width=1\textwidth]{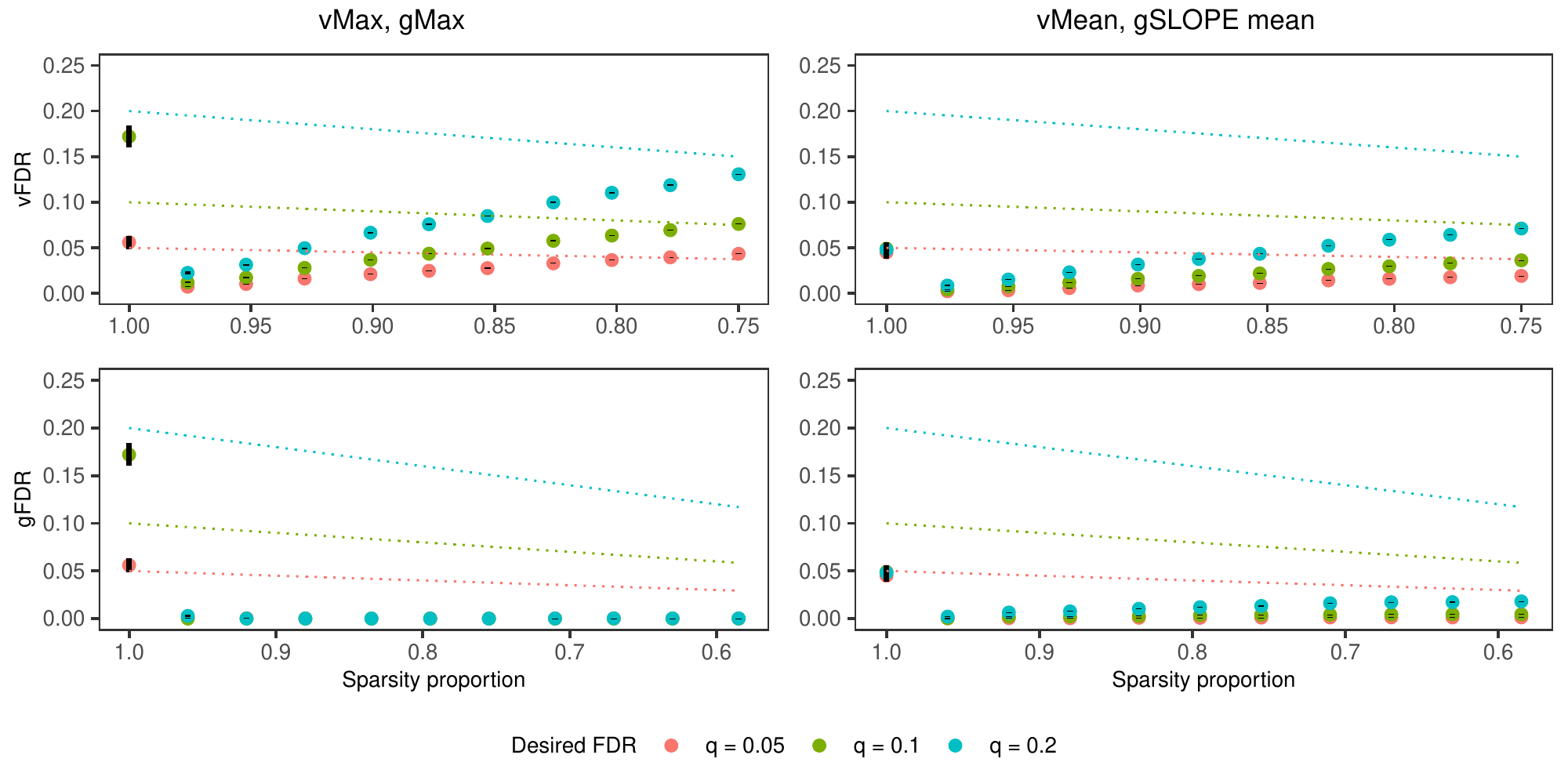}  
	\caption[width=0.8\textwidth]{vFDR and gFDR shown for SGS with the vMax, gMax and the vMean, gSLOPE mean sequences under orthogonal design with uneven groups, as a function of decreasing sparsity proportion. 1000 MC repetitions performed per sparsity proportion. The sensitivity is given in Figure \ref{fig:sim_3_uneven_sgs_sens_final}.}
		\label{fig:uneven_gmax_vmax}
\end{figure}
\section{Simulation study}\label{section:sim_studies}
An extensive simulation study was conducted to investigate the variable and group selection performance of SGS under non-orthogonal data. SGS is compared to the lasso, gLasso, SGL, SLOPE, and gSLOPE under various scenarios. First, we consider how the methods perform under a fixed signal strength, which represents an easier detection case ($\S$\ref{section:fixed_signal}). We then consider how the detection changes as the amount of sparsity in the true model decreases ($\S$\ref{section:increase_sparsity}). Further, the performance under a random signal is explored, as it is reflective of a real scenario ($\S$\ref{section:random_signal}). Of particular interest is how SGS adapts to detection under the presence of large groups, indicative of a genomics scenario, which is explored in $\S$\ref{section:large_groups}. Finally, the impact of changing the $p/n$ ratio is investigated ($\S$\ref{section:pn_ratio}) and estimates of the type I error are calculated ($\S$\ref{section:null_model}). 

\subsection{Synthetic data.}
The design matrix $\mathbf{X} \sim \mathcal{N}(0,\boldsymbol\Sigma) \in \mathbb{R}^{200\times 800}$ was used with correlation matrix $\boldsymbol\Sigma$. Three cases of within-group correlation are considered: no, medium, and high correlation, corresponding to $\rho=0,0.3,0.9$, for $\Sigma_{i,j} = \rho$, where $i\neq j$ and $i$ and $j$ belong to the same group. The response was generated using the linear model $y = \mathbf{X} \beta + \epsilon$, with Gaussian noise $\epsilon \sim \mathcal{N}(0,\sigma^2)$ and $\sigma$ chosen adaptively so that the signal-to-noise ratio was set at $6$. The variables were split into $160$ non-overlapping groups of sizes $\{3,\dots,7\}$, with variable and group sparsity proportions set to 0.95 and 0.92, and the proportion of active variables within an active group set to 0.6. For each correlation case $600$ MC repetitions were performed.

The $\text{F}_1$ score is used a primary comparison metric (defined formally in Definition \ref{defn:f1_score}), as it provides a balance between sensitivity and FDR, with a high $\text{F}_1$ score being preferable. SGS and SGL were both applied using $\alpha = 0.95$, and $q_v, q_g = 0.1$ for SGS. For each model, the data was $\ell_2$ standardised and an intercept fit, using $10$-fold cross-validation (CV) along a log-linear path of $20$ $\lambda$ values, and the 1se model was chosen\footnote{The \texttt{glmnet} \citep{Friedman2010b} \texttt{R} package was used to fit the lasso, \texttt{SLOPE} \citep{SLOPEpackage} package for SLOPE, and \texttt{SGL} \citep{SGLpackage} package for SGL. SGS and gSLOPE were fitted using the \texttt{sgs} GitHub repository.}.

\subsection{Results.}	
\subsubsection{Fixed signal.}\label{section:fixed_signal}
\vspace{-20pt}
The first scenario considered a strong signal, so that the detection was not particularly challenging. The signal strength was fixed at $\beta = 5$ for the active variables. SGS achieves a substantiality higher $\text{F}_1$ score and lower FDR score than both SLOPE and gSLOPE methods, across all correlation values (Figure \ref{fig:slope_models_non_orthog_bd_fixedsnr_summary}). Clearly, in this case, the grouping information is useful in selecting the relevant variables and groups. In general, we see improved performance as the correlation increases, especially for the group selection, as the grouping information becomes more important. As an illustration of the downside of selecting all variables within an active group when using gSLOPE, we note that the variable $\text{F}_1$ score and FDR of gSLOPE are $0.44$ and $0.69$, in comparison to $0.74$ and $0.37$ for SGS, averaged across all correlation cases.
\begin{figure}[H]
	\vspace{-5pt}
	\includegraphics[width=1\textwidth]{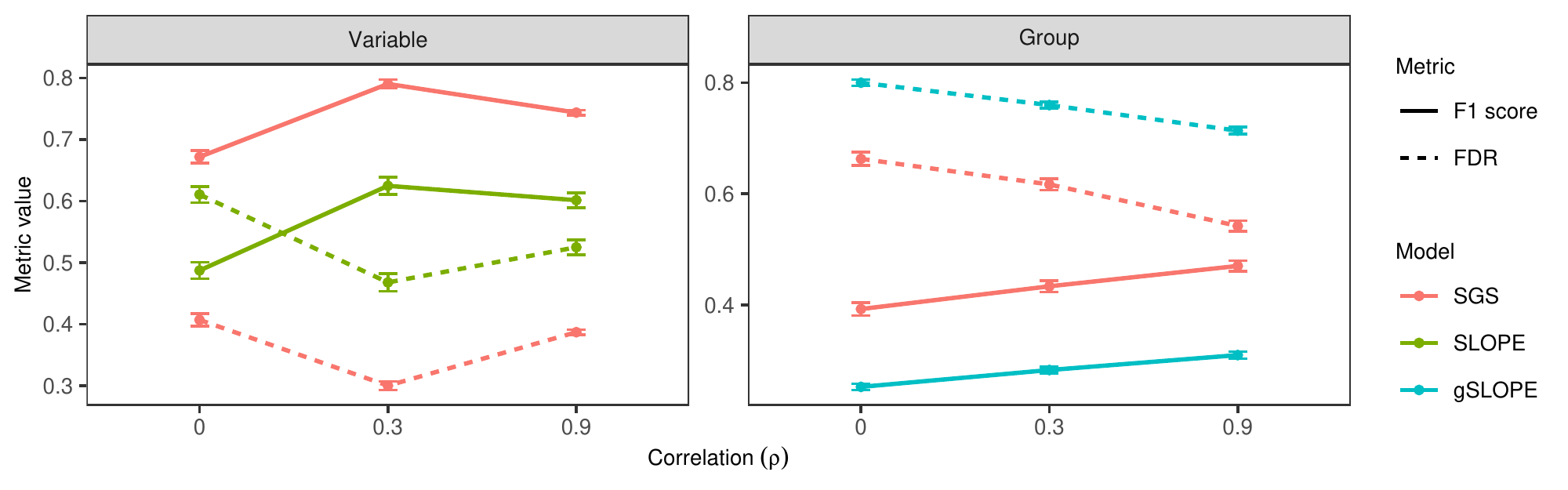}
	\vspace{-25pt}
	\caption[width=0.8\textwidth]{$\text{F}_1$ score and FDR for SLOPE-based models, shown for $\rho =0, 0.3, 0.9$, split by the type of selection, with standard errors shown. 600 MC repetitions performed per correlation case.}
	%\vspace{-10pt}
	\label{fig:slope_models_non_orthog_bd_fixedsnr_summary}
\end{figure}

SGS is further compared to the lasso and SGL, to determine whether the additional sparsity induced by SGS improves variable and group selection (Figure \ref{fig:lasso_models_non_orthog_bd_fixedsnr_summary}). SGS has an almost identical $\text{F}_1$ score to the lasso, although surprisingly higher FDR. This illustrates the downside of using CV for model selection when aiming to obtain FDR-control (discussed in Section $\S$\ref{section:model_selection}). 
\begin{figure}[H]
	\vspace{-5pt}
	\includegraphics[width=1\textwidth]{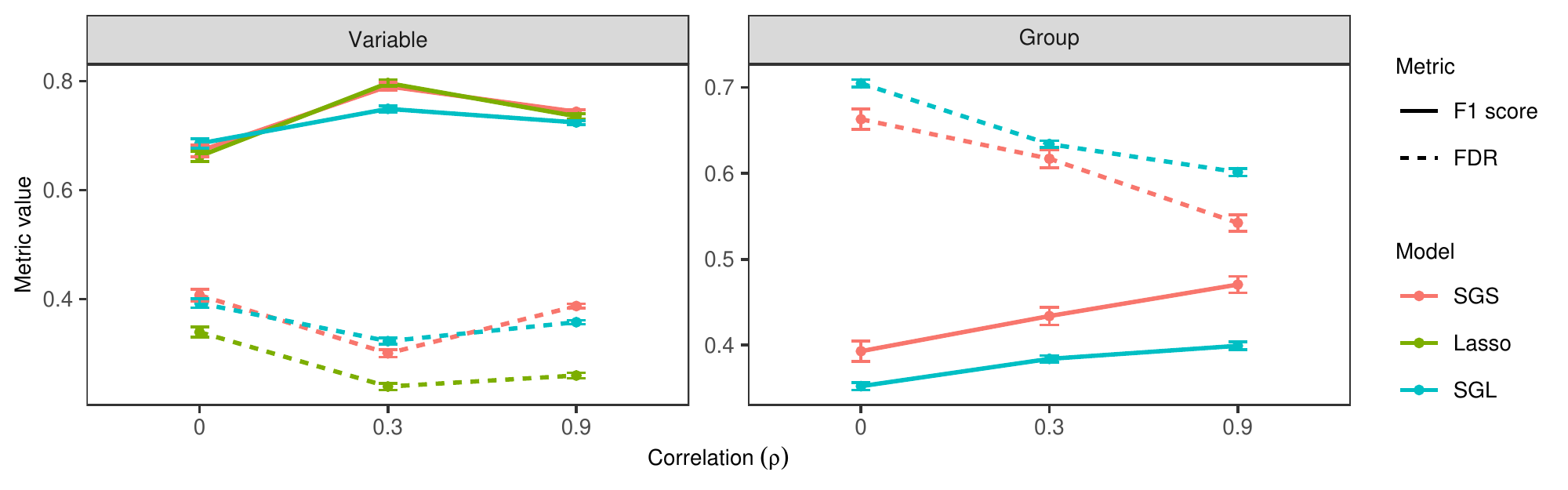}
	\vspace{-25pt}
	\caption[width=0.8\textwidth]{$\text{F}_1$ score and FDR for SGS, lasso, SGL,  shown for $\rho =0, 0.3, 0.9$, split by the type of selection, with standard errors shown. 600 MC repetitions performed per correlation case.}
	\vspace{-10pt}
\label{fig:lasso_models_non_orthog_bd_fixedsnr_summary}
\end{figure}
Comparing the two bi-level selection approaches, SGS clearly outperforms SGL, for both selection types. Interestingly, the difference in performance increases as the correlation increases for group selection, providing evidence that SLOPE-based models perform stronger under correlated designs, which is in agreement with findings presented in \cite{Zeng2014}. Averaging across all correlation cases, SGS obtains the highest mean variable $\text{F}_1$ score ($0.74\pm0.02$) and mean group $\text{F}_1$ score ($0.43\pm0.01$) of all the models considered. The full results averaged across the three correlation classes are presented in Table \ref{tbl:all_models_non_orthog_bd_fixedsnr_summary}.
\begin{table}[H]
		\centering
		
		\begin{tabular}{l|cc|ccc|ccc}
			\hline
			% & model & spar\_1 & spar\_2 & spar\_3 & spar\_4 & spar\_5 & spar\_6 & avg\_mse & avg\_mae \\ 
			\multirow{2}{*}{} & \multicolumn{2}{c}{Distance from $\beta$}&\multicolumn{3}{c}{Variable mean}&\multicolumn{3}{c}{Group mean}\\
	Model&MSE $\downarrow$&MAE $\downarrow$&$\text{F}_1$ $\uparrow$&FDR $\downarrow$&Sens. $\uparrow$&$\text{F}_1$ $\uparrow$&FDR $\downarrow$&Sens. $\uparrow$\\
			\hline
			SGS & 0.48\scriptsize\textcolor{gray}{$\pm 0.01$} & \textbf{0.16}\scriptsize\textcolor{gray}{$\pm 0.00$} & \textbf{0.74}\scriptsize\textcolor{gray}{$\pm 0.02$} & 0.37\scriptsize\textcolor{gray}{$\pm 0.01$} & 0.95\scriptsize\textcolor{gray}{$\pm 0.02$}& \textbf{0.43}\scriptsize\textcolor{gray}{$\pm 0.01$} & \textbf{0.61}\scriptsize\textcolor{gray}{$\pm 0.01$} &\textbf{0.57}\scriptsize\textcolor{gray}{$\pm 0.01$} \\ 
			SLOPE & 0.43\scriptsize\textcolor{gray}{$\pm 0.01$} & \textbf{0.16}\scriptsize\textcolor{gray}{$\pm 0.01$} & 0.57\scriptsize\textcolor{gray}{$\pm 0.01$} & 0.54\scriptsize\textcolor{gray}{$\pm 0.01$} & \textbf{0.96}\scriptsize\textcolor{gray}{$\pm 0.01$} &-&-&-\\ 
			gSLOPE & \textbf{0.35}\scriptsize\textcolor{gray}{$\pm 0.01$} & \textbf{0.16}\scriptsize\textcolor{gray}{$\pm 0.00$} &-&-&-& 0.28\scriptsize\textcolor{gray}{$\pm 0.01$} & 0.76\scriptsize\textcolor{gray}{$\pm 0.02$} & 0.54\scriptsize\textcolor{gray}{$\pm 0.01$} \\ 
			Lasso & 0.65\scriptsize\textcolor{gray}{$\pm 0.02$} & 0.17\scriptsize\textcolor{gray}{$\pm 0.00$} & 0.73\scriptsize\textcolor{gray}{$\pm 0.02$} & \textbf{0.28}\scriptsize\textcolor{gray}{$\pm 0.01$} & 0.83\scriptsize\textcolor{gray}{$\pm 0.02$} &-&-&- \\ 
			SGL & 91.0\scriptsize\textcolor{gray}{$\pm 2.14$} & 1.90\scriptsize\textcolor{gray}{$\pm 0.04$} & 0.72\scriptsize\textcolor{gray}{$\pm 0.02$} & 0.36\scriptsize\textcolor{gray}{$\pm 0.01$} & 0.91\scriptsize\textcolor{gray}{$\pm 0.02$}& 0.38\scriptsize\textcolor{gray}{$\pm 0.01$} & 0.65\scriptsize\textcolor{gray}{$\pm 0.02$} & 0.52\scriptsize\textcolor{gray}{$\pm 0.01$}  \\ 
			\hline
		\end{tabular}
		\caption{Mean squared error (MSE), mean absolute error (MAE), $\text{F}_1$ score, FDR, and sensitivity, averaged over all correlation cases, for SLOPE and Lasso-based models, with standard errors shown in grey. 1800 MC repetitions performed.} \label{tbl:all_models_non_orthog_bd_fixedsnr_summary}
	\end{table}	
\subsubsection{Decreasing sparsity.} \label{section:increase_sparsity}
Varying the sparsity proportion from the null model to a model with variable and group sparsity proportions of $0.90$ and $0.84$, grants investigation into how the performance of SGS changes as a function of the sparsity proportion. Under such a scenario, the $\text{F}_1$ score drops as the sparsity proportion decreases, with the FDR in turn increasing (Figure \ref{fig:slope_models_non_orthog_bd_fixedsnr}). This decrease is slowed as correlation increases, but is still present. This pattern is present for all of the models. The results are unsurprising, as decreasing the sparsity in the underlying model means there are more true signals for the models to detect, which generally means that obtaining a higher $\text{F}_1$ score is more challenging.

A limitation of the lasso is that it can select at most $n$ predictors \citep{Zou2005}, which is not a limitation for the SLOPE-based models. However, as this case illustrates, once the underlying true model is no longer strongly sparse, the performance of the SLOPE-based models drops. Therefore, these methods are probably not suitable for such cases, rendering this limitation of the lasso as relatively insignificant in comparison to the SLOPE-based models.

\begin{figure}[H]
	\vspace{-5pt}
	\includegraphics[width=1\textwidth]{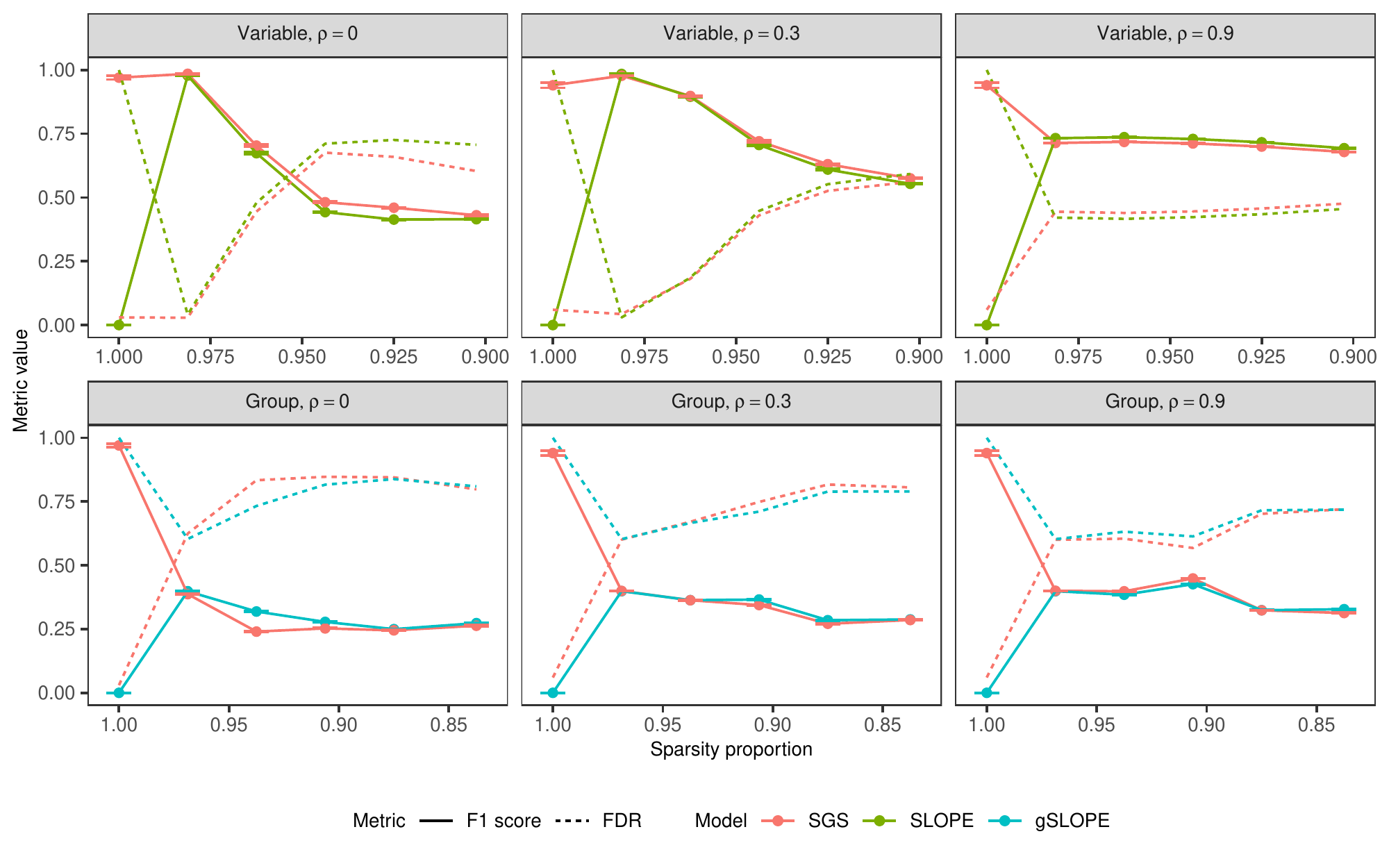}
	\vspace{-25pt}
	\caption[width=0.8\textwidth]{$\text{F}_1$ score and FDR shown as a function of decreasing sparsity proportion, for the SLOPE-based models. This is shown for the different correlation cases and split by the type of selection, with standard errors shown. 100 MC repetitions performed per sparsity proportion and correlation case. The sensitivity is shown in Figure \ref{fig:slope_models_non_orthog_bd_fixedsnr_sens}.}
	\vspace{-10pt}
	\label{fig:slope_models_non_orthog_bd_fixedsnr}
\end{figure}

\subsubsection{Random signal.}\label{section:random_signal}
So far, a fixed signal $\beta = 5$ and relatively high sparsity in the underlying model were used, and we have observed strong performance for SGS under such conditions. Here, a more realistic case of a random signal, $\beta \sim \mathcal{N}(0,5^2)$, and a lower average variable and group sparsity of $0.88$ and $0.80$ are explored. This case was designed to be more challenging, as the signal is weaker. A clear drop-off in performance in terms of the $\text{F}_1$ score can be observed, for all models, as is to be expected (Table \ref{tbl:all_models_non_orthog_gg_fixedsnr_summary}). SGS again has the highest $\text{F}_1$ score amongst all models, for both types of selections. The lasso has the lowest FDR and sensitivity. Interestingly, the SLOPE-based models have lower MSE than the lasso-based ones, in particular compared to SGL, which suffers from inflated $\hat{\beta}$ estimates. As the SLOPE-based models apply stronger penalisation, and therefore introduce additional bias, one would expect this trend to go the other way. 

We also take this case to illustrate the difference between SGS Original and SGS using the sequences derived in $\S$\ref{section:sgs_penalty}. We find that SGS with the derived sequences obtains far superior performance than the naive application of SGS Original, highlighting the importance of using theory to extract the full performance of SGS.

\begin{table}[H]
\centering
\begin{tabular}{l|cc|ccc|ccc}
	\hline
	% & model & spar\_1 & spar\_2 & spar\_3 & spar\_4 & spar\_5 & spar\_6 & avg\_mse & avg\_mae \\ 
	\multirow{2}{*}{} & \multicolumn{2}{c}{Distance from $\beta$}&\multicolumn{3}{c}{Variable mean}&\multicolumn{3}{c}{Group mean}\\
	Model&MSE $\downarrow$&MAE $\downarrow$&$\text{F}_1$ $\uparrow$&FDR $\downarrow$&Sens. $\uparrow$&$\text{F}_1$ $\uparrow$&FDR $\downarrow$&Sens. $\uparrow$\\
	\hline
	SGS & 1.93\scriptsize\textcolor{gray}{$\pm 0.04$} & \textbf{0.43}\scriptsize\textcolor{gray}{$\pm 0.01$} & \textbf{0.55}\scriptsize\textcolor{gray}{$\pm 0.00$} & 0.46\scriptsize\textcolor{gray}{$\pm 0.00$} & 0.58\scriptsize\textcolor{gray}{$\pm 0.01$} & \textbf{0.46}\scriptsize\textcolor{gray}{$\pm 0.01$} & \textbf{0.58}\scriptsize\textcolor{gray}{$\pm 0.01$} & \textbf{0.56}\scriptsize\textcolor{gray}{$\pm 0.01$}\\ 
	SLOPE & \textbf{1.77}\scriptsize\textcolor{gray}{$\pm 0.04$} & \textbf{0.43}\scriptsize\textcolor{gray}{$\pm 0.01$} & 0.39\scriptsize\textcolor{gray}{$\pm 0.00$} & 0.64\scriptsize\textcolor{gray}{$\pm 0.01$} & \textbf{0.61}\scriptsize\textcolor{gray}{$\pm 0.00$}&-&-&- \\ 
	gSLOPE & 2.03\scriptsize\textcolor{gray}{$\pm 0.04$}& 0.48\scriptsize\textcolor{gray}{$\pm 0.01$}   &-&-&-  & 0.30\scriptsize\textcolor{gray}{$\pm 0.00$} & 0.74\scriptsize\textcolor{gray}{$\pm 0.00$} & 0.55\scriptsize\textcolor{gray}{$\pm 0.01$} \\ 
	Lasso & 2.03\scriptsize\textcolor{gray}{$\pm 0.04$} & 0.41\scriptsize\textcolor{gray}{$\pm 0.01$} & 0.45\scriptsize\textcolor{gray}{$\pm 0.00$} & \textbf{0.36}\scriptsize\textcolor{gray}{$\pm 0.00$} & 0.43\scriptsize\textcolor{gray}{$\pm 0.01$} &-&-&-\\ 
	SGL & 145\scriptsize\textcolor{gray}{$\pm 2.16$} & 2.60\scriptsize\textcolor{gray}{$\pm 0.04$} & 0.43\scriptsize\textcolor{gray}{$\pm 0.00$} & 0.51\scriptsize\textcolor{gray}{$\pm 0.00$} & 0.47\scriptsize\textcolor{gray}{$\pm 0.01$} & 0.38\scriptsize\textcolor{gray}{$\pm 0.00$} & 0.63\scriptsize\textcolor{gray}{$\pm 0.00$} & 0.49\scriptsize\textcolor{gray}{$\pm 0.01$} \\ 
	SGS Original & 3.28\scriptsize\textcolor{gray}{$\pm 0.05$} & 0.55\scriptsize\textcolor{gray}{$\pm 0.01$} & 0.25\scriptsize\textcolor{gray}{$\pm 0.01$} & 0.70\scriptsize\textcolor{gray}{$\pm 0.01$} & 0.23\scriptsize\textcolor{gray}{$\pm 0.01$} & 0.36\scriptsize\textcolor{gray}{$\pm 0.01$} &\textbf{0.58}\scriptsize\textcolor{gray}{$\pm 0.01$}& 0.35\scriptsize\textcolor{gray}{$\pm 0.01$} \\ 
	\hline
\end{tabular}
\caption{Mean squared error (MSE), mean absolute error (MAE), $\text{F}_1$ score, FDR, and sensitivity, averaged over all correlation cases, for SLOPE and Lasso-based models, with standard errors shown in grey. 1800 MC repetitions performed.}
\label{tbl:all_models_non_orthog_gg_fixedsnr_summary}
\end{table}	

\subsubsection{Larger groups.}\label{section:large_groups}
To gain an indication of how SGS performs for larger groups, $25$ groups of sizes $\{5,\dots,75\}$ were generated. The number of active groups was varied from the null model to $4$, and the proportion of active variables within an active group was randomly sampled from $\mathcal{U}[0.2,0.6]$; otherwise the set-up remained as described earlier (Section $\S$\ref{section:fixed_signal}). The results are shown in Table \ref{tbl:all_models_non_orthog_lg_fixedsnr_summary}. In comparison to Figure \ref{fig:slope_models_non_orthog_bd_fixedsnr_summary}, we observe a drop in the variable $\text{F}_1$ score for SGS, but a large increase in the group score. As there were fewer groups present, a false group discovery was less likely, leading to lower group FDR. Here, SGS obtains a higher $\text{F}_1$ score than the other two models, as well as lower FDRs, giving us confidence that SGS is well suited to work with datasets with large group sizes. %This is explored further with real genetics data in $\S$\ref{section:real_data}.
	\begin{table}[H]
		\centering
		\begin{tabular}{l|cc|ccc|ccc}
			\hline
			% & model & spar\_1 & spar\_2 & spar\_3 & spar\_4 & spar\_5 & spar\_6 & avg\_mse & avg\_mae \\ 
			\multirow{2}{*}{} & \multicolumn{2}{c}{Distance from $\beta$}&\multicolumn{3}{c}{Variable mean}&\multicolumn{3}{c}{Group mean}\\
		Model&MSE $\downarrow$&MAE $\downarrow$&$\text{F}_1$ $\uparrow$&FDR $\downarrow$&Sens. $\uparrow$&$\text{F}_1$ $\uparrow$&FDR $\downarrow$&Sens. $\uparrow$\\
			\hline
			SGS & 0.27\scriptsize\textcolor{gray}{$\pm 0.01$} & 0.11\scriptsize\textcolor{gray}{$\pm 0.00$} & \textbf{0.59}\scriptsize\textcolor{gray}{$\pm 0.01$} & \textbf{0.51}\scriptsize\textcolor{gray}{$\pm 0.01$} & \textbf{0.99}\scriptsize\textcolor{gray}{$\pm 0.00$}& \textbf{0.71}\scriptsize\textcolor{gray}{$\pm 0.01$} & \textbf{0.30}\scriptsize\textcolor{gray}{$\pm 0.01$} & 0.98\scriptsize\textcolor{gray}{$\pm 0.00$} \\ 
			SLOPE & 0.26\scriptsize\textcolor{gray}{$\pm 0.01$} & \textbf{0.10}\scriptsize\textcolor{gray}{$\pm 0.00$} & 0.56\scriptsize\textcolor{gray}{$\pm 0.01$} & 0.55\scriptsize\textcolor{gray}{$\pm 0.01$} & \textbf{0.99}\scriptsize\textcolor{gray}{$\pm 0.00$}&-&-&- \\ 
			gSLOPE & \textbf{0.19}\scriptsize\textcolor{gray}{$\pm 0.01$} & \textbf{0.10}\scriptsize\textcolor{gray}{$\pm 0.00$}&-&-&-& 0.69\scriptsize\textcolor{gray}{$\pm 0.01$} & 0.35\scriptsize\textcolor{gray}{$\pm 0.01$} & \textbf{1.00}\scriptsize\textcolor{gray}{$\pm 0.00$} \\ 
			\hline
		\end{tabular}
		\caption{Mean squared error (MSE), mean absolute error (MAE), $\text{F}_1$ score, FDR, and sensitivity, averaged over all correlation cases, for SLOPE-based models, with standard errors shown in grey. 1800 MC repetitions performed.}
		\label{tbl:all_models_non_orthog_lg_fixedsnr_summary}
	\end{table}	
 
\subsubsection{Decreasing the p/n ratio.}\label{section:pn_ratio}
In the simulation studies considered so far, $p$ and $n$ have both been set to give a $p/n$ ratio of $4$. Here, the performance of SGS is explored as this ratio decreases to $1$, which reflects the scenario of obtaining more observations. The variable/group sparsity proportions were set to $0.94$ and $0.9$ respectively. Figure \ref{fig:slope_models_non_orthog_bd_in_fixedsnr} shows how for no correlation, the $\text{F}_1$ score increases linearly as the ratio decreases, whilst the FDR quickly decreases. The increase is apparent, but less dramatic, for $\rho = 0.3$. Under high correlation ($\rho = 0.9$), the $\text{F}_1$ score stagnates as the ratio decreases. The stagnation of the score under high correlation is similar to the trend seen in Figure \ref{fig:slope_models_non_orthog_bd_fixedsnr}, where the $\text{F}_1$ score stays the same under decreasing sparsity proportion. In terms of model performance, SGS tends to have stronger performance at a higher $p/n$ ratio, in comparison to SLOPE and gSLOPE, but the gap decreases with the ratio, showing that SGS provides a clear advantage when there are less observations available.

\begin{figure}[H]
	\vspace{-5pt}
	\includegraphics[width=1\textwidth]{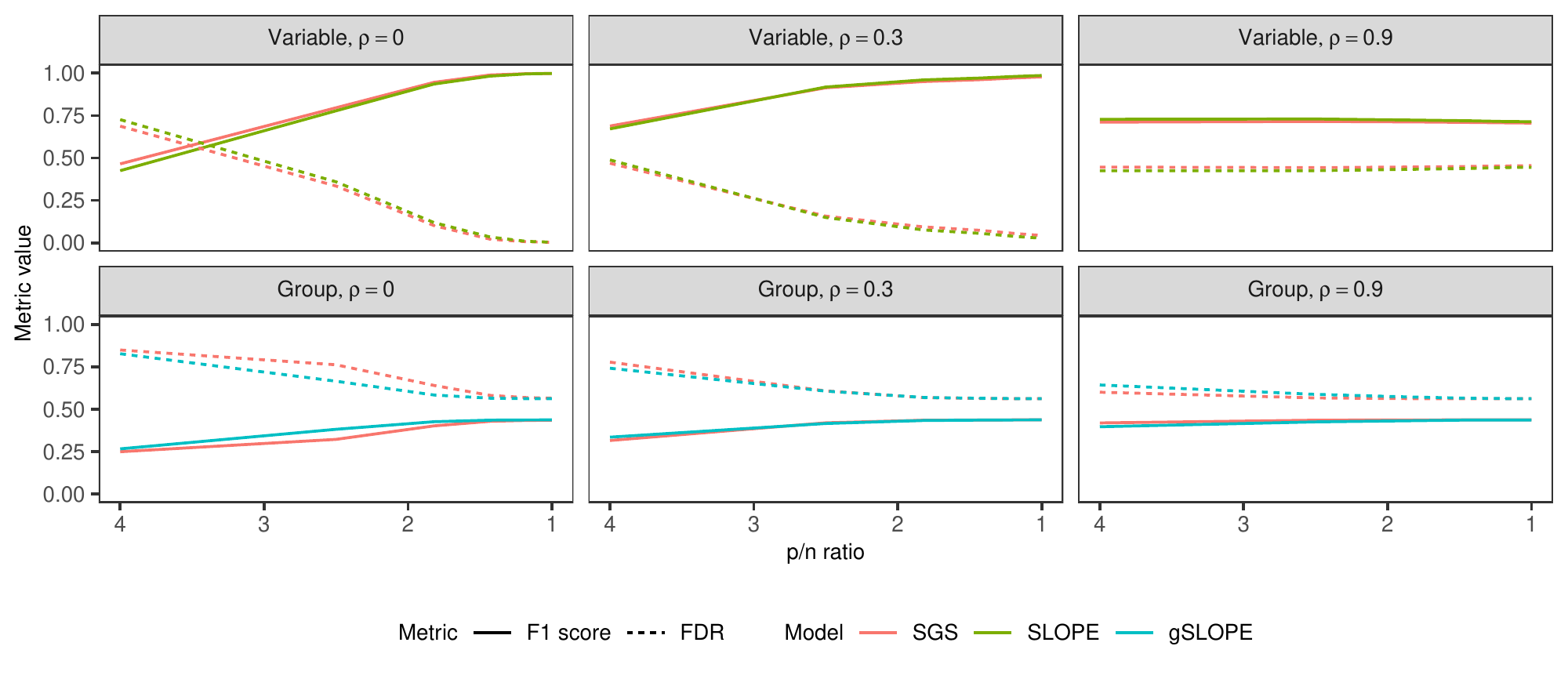}
	\vspace{-25pt}
	\caption[width=0.8\textwidth]{$\text{F}_1$ score and FDR shown as a function of decreasing $p/n$ ratio, for the SLOPE-based models. This is shown for the different correlation cases and split by the type of selection. The sensitivity is shown in Figure \ref{fig:slope_models_non_orthog_bd_in_fixedsnr_sens}. 100 MC repetitions performed per $p/n$ ratio and correlation case.}
	\vspace{-10pt}
	\label{fig:slope_models_non_orthog_bd_in_fixedsnr}
\end{figure}
\newpage
\subsubsection{Detection under the null model.}\label{section:null_model}
\begin{wraptable}{r}{7.3cm}
		\vspace{0pt}
		\begin{tabular}{l|cc}
			\hline
			% & model & spar\_1 & spar\_2 & spar\_3 & spar\_4 & spar\_5 & spar\_6 & avg\_mse & avg\_mae \\ 
			%\multirow{2}{*}{Model \ $\rho$}
			
	    & \thead{Type I\\ error rate $\downarrow$} &\thead{Mean number\\ selected $\downarrow$} \\
			\hline
			SGS &$9\times 10^{-4}$\scriptsize\textcolor{gray}{$\pm  2 \times 10^{-6}$}&$0.51$\scriptsize\textcolor{gray}{$\pm  0.20$} \\ 
			SLOPE & $0.03$\scriptsize\textcolor{gray}{$\pm 1 \times 10^{-5}$} & $23.2$\scriptsize\textcolor{gray}{$\pm 1.54$}\\ 
			gSLOPE &$0.20$\scriptsize\textcolor{gray}{$\pm 2 \times 10^{-5}$}&$208$\scriptsize\textcolor{gray}{$\pm 3.60$}\\ 
			Lasso &$\mathbf{5\times 10^{-4}}$\scriptsize\textcolor{gray}{$\pm 1 \times 10^{-6}$} & \textbf{0.40}\scriptsize\textcolor{gray}{$\pm 0.06$}\\
			SGL & $0.01$\scriptsize\textcolor{gray}{$\pm 1\times 10^{-5}$} & $8.33$\scriptsize\textcolor{gray}{$\pm 1.01 $}  \\
			\hline
		\end{tabular}
		\caption{Type I error rate and the mean number of selected variables, with standard errors shown in grey. 300 MC repetitions performed.}
		\label{tbl:models_under_null}
	\end{wraptable}
 In $\S$\ref{section:increase_sparsity}, the SLOPE-based models were applied under the null model. This gives insight into whether the approaches detect signal when none is present, allowing for calculation of the type I error rate. The lasso-based models were further applied to the null case and the results were averaged over the three correlation cases (Table \ref{tbl:models_under_null}). SGS and the lasso have the lowest type I errror rate. This case illustrates the downside of applying only group sparsity in gSLOPE, as the method had the highest rate, selecting all variables in a group as active, leading to a large number of inactive variables being selected as false positives. 
\section{Model selection}\label{section:model_selection}
In most regularisation approaches, including SGS, the tuning parameter $\lambda$ controls the level of sparsity in the fitted model and can also be seen to be proportional to the noise level of the underlying data-generating process \citep{Sun2012}. In most situations this is an unknown quantity. As shown in $\S$\ref{section:ortho_results} for orthogonal designs, the choice of $\lambda=1$ gives bi-level FDR-control. For non-orthogonal designs this quantity needs to be estimated. This section presents two common approaches for estimating the tuning parameter. The first set of approaches describe how models can be generated by fitting across a path of $\lambda$ values. The second set of approaches presented illustrate how to simultaneously estimate the noise and the coefficients. The section ends with a comparison of the performance of the different approaches. 
	
\subsection{Model selection on a path.}\label{section:model_selection_path}
By fitting models for a path of $\lambda$ values, a pathwise solution is created. This raises the question of which model to pick along the path, as two objectives can be model discovery and predictive performance. These two objectives are known to be in conflict with one another and may not lead to the same choice of tuning parameter \citep{Leng2006, Yang2005}. There is no clear consensus on which approach to use to discriminate between models on a path. In genetics, the type I error is often desirable to use as a discrimination tool, but there are no finite sample guarantees for type I errors with current model selection strategies \citep{Bogdan2015}. In general, CV is the most widely used \citep{Freijeiro-Gonzalez2022}, with the optimum model chosen as the model with the largest value of $\lambda$ such that the mean-squared error is within one standard error of the minimum error (also known as the 1se model).
	
However, whilst CV may pick the best predictive model, it does not aim for FDR-control and can potentially introduce bias \citep{Moscovich2019}. The lack of FDR-control is confirmed through our experimental results (Figure \ref{fig:fdr_control_slope_vs_sgs}). SLOPE and SGS were applied using CV to a simulated dataset, varying the FDR parameter $q$ (for SGS, $q_v = q_g = q$). Along the path for SGS, the models are able to achieve vFDR levels close to the desired level for most choices of $q$, but the chosen CV models tend to have amongst the highest vFDR levels.

This raises two questions: 1. does the true model exist on the path (also known as path consistency \citep{Hastie2015})? and 2. how is the tuning parameter picked to give the desired FDR? To achieve FDR-control, we require use of an additional method to work in conjunction SGS for non-orthogonal designs. A number of such methods have been proposed in the literature including post-inference, model selection, and variable selection approaches. One variable selection approach proposed in the literature is Knockoff. \cite{Barber2015} introduced \textit{Knockoff} as a FDR-controlling variable selection approach that can be used alongside high-dimensional regression approaches. Knockoff introduces pseudovariables, called knockoff variables, into the fitting process. The number of knockoff variables selected provides an estimate for the number of false positives. The initial version of Knockoff was shown to attain exact FDR-control but works only when $n>p$. \cite{Barber2019} extended Knockoff for the high-dimensional setting.

\begin{figure}[H]
	\vspace{-5pt}
	\includegraphics[width=.9\textwidth]{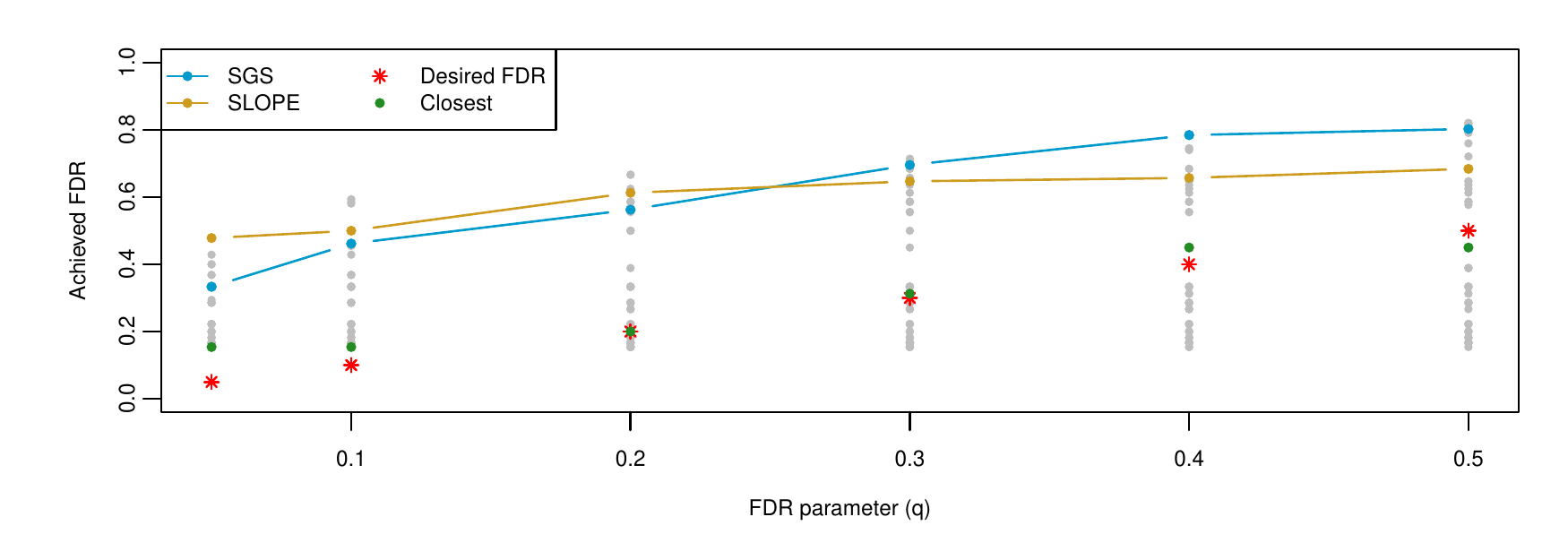}  
	\vspace{-15pt}
	\caption[width=0.7\textwidth]{vFDR levels achieved for SLOPE and SGS. The lines show the vFDR levels for the 1se CV models. Each grey dot represents the vFDR for a $\lambda$ value along the path of SGS models, with the green dots representing the value closest to the desired level. The design matrix used was i.i.d $\mathcal{N}(0,1)$ with $n=100$, $p=250$, $m=50$, and within-group correlation of $\rho = 0.3$.}
	\vspace{-10pt}
	\label{fig:fdr_control_slope_vs_sgs}
\end{figure}

\subsection{Estimating the noise.}
Alternative model selection approaches involve estimating the tuning parameter directly. One such approach comes from using scaled sparse regression, which jointly estimate the coefficients and noise \citep{Sun2012}. When $n>p$, this is easily done using unbiased estimators. However, when $p\geq n$, iterative procedures are required. An example of such a procedure for the lasso is the \textit{scaled lasso} \citep{Sun2012}, which iteratively estimates the noise using the mean residual square and scales the tuning parameter in proportion to the estimated noise. This procedure was adapted to SLOPE in Algorithm 5 in \cite{Bogdan2015}. We further adapt it here for our proposed SGS method by calculating $\hat{\beta}$ using SGS instead of SLOPE. The approach is named \textit{scaled SGS}.

%A related idea comes from noticing that $\lambda$ is assumed to be 1 without loss of generality in Theorems \ref{thm:sgs_var_fdr_proof} and \ref{thm:sgs_grp_fdr_proof} when deriving the penalty sequences. If we instead make no such assumption, 

Without loss of generality for Theorems \ref{thm:sgs_var_fdr_proof} and \ref{thm:sgs_grp_fdr_proof}, the assumption of $\lambda=1$ was made. If no such assumption is made, the penalty sequences are derived as:
\begin{align}
	&v_i^\text{max}(\lambda) = \max_{j=1,\dots,m} \left\{\frac{F^{-1}_\mathcal{N} \left(1-\frac{q_vi}{2p}\right) -   \frac{1}{3}(1-\alpha) \lambda a_j w_j}{\alpha \lambda}\right\}, \; i= 1,\dots,p, \label{eqn:sgs_var_pen_max_2}\\
	&w_i^\text{max}(\lambda) =\max_{j=1,\dots,m}\left\{\frac{F^{-1}_\text{FN}(1-\frac{q_gi}{m})-\alpha \lambda \sum_{k \in G_j}v_k }{(1-\alpha)\lambda p_j}\right\}, \; i = 1,\dots,m.\label{eqn:sgs_grp_pen_max_2}
\end{align}
Hence, an update of $\lambda$ would result in an adaptive update of the penalty sequences. This can be incorporated into an iterative procedure as described in Algorithm \ref{alg:noise_est} and is named \textit{adaptively scaled SGS} (\textit{AS-SGS}). An interesting consequence of this proposed noise estimation approach is that it is agnostic to the choice of $\alpha$. By applying AS-SGS to the simulation set-up from $\S$\ref{section:fixed_signal}, we observed that the solutions produced along a path of $\alpha$ values were all identical to each other. This property does not hold for scaled SGS/SLOPE.

\begin{algorithm}[H]
	\caption{Adaptively scaled SGS (AS-SGS)}\label{alg:noise_est}
	\begin{algorithmic}
		\State \textbf{input:} $y, \mathbf{X}.$ 
		\State Set $\hat{S}_+ = \emptyset.$
		\Repeat
		\State Set $\hat{S} = \hat{S}_+.$
		\State \multiline{Set $\hat{\lambda} = \text{RSS}/(n-|\hat{S}|-1)$, where RSS (residual sum of squares) is calculated using a linear model with $y$ and $\mathbf{X}$ restricted to the variables in $\hat{S}$.}
		\State Generate $v^\text{max}(\hat{\lambda})$ and $w^\text{max}(\hat{\lambda})$ with $\hat{\lambda}$ using Equations (\ref{eqn:sgs_var_pen_max_2}) and (\ref{eqn:sgs_grp_pen_max_2}).
		\State Compute $\hat{\beta}$ using SGS (Equation (\ref{eqn:sgs})) with $\hat{\lambda}$, $v^\text{max}(\hat{\lambda})$, and $w^\text{max}(\hat{\lambda}).$
		\State Set $\hat{S}_+=\{j: \hat{\beta}_j \neq 0\}$.
		\Until{$\hat{S}_+ = \hat{S}$.}
		\State \textbf{output:}  $\hat{S}, \lambda_\text{est}, v^\text{max}, w^\text{max}$.
	\end{algorithmic}
\end{algorithm}

\subsection{Comparing model selection approaches.}\label{section:model_selection_results}
Four different approaches for tackling the model selection task have been discussed: CV, Knockoff, scaled SGS, and AS-SGS. The performance of these approaches is investigated using synthetic data generated by the set-up described in $\S$\ref{section:fixed_signal}. The approaches all worked in conjunction with SGS and the FDR-control parameters were set to $q_v = q_g = 0.1$. The $\text{F}_1$ score is used as the primary comparison metric.

 Whilst AS-SGS is a definite improvement over scaled SGS in terms of selection, CV still produces higher $\text{F}_1$ scores for both types of selection, so is best for general selection (Table \ref{tbl:models_selection_results}). However, AS-SGS produces estimates closer to the true $\beta$ values than the other approaches, whilst scaled SGS achieves the best FDR-control. The results provide useful information for a practitioner wishing to apply such methods, as different methods perform better for different metrics, but also illustrate the general need for further development of model selection approaches, as none of the approaches considered were able to obtain an FDR level below the set threshold of 0.1. %Some promising alternative approaches include stability selection (\citep{Meinshausen2010,Ahmed2011}), ET-Lasso (\citep{Yang2018}), and complementary pairs stability selection (\citep{Shah2013}). 
	\begin{table}[H]
	\centering
	\begin{tabular}{l|cc|ccc|ccc}
		\hline
		% & model & spar\_1 & spar\_2 & spar\_3 & spar\_4 & spar\_5 & spar\_6 & avg\_mse & avg\_mae \\ 
		\multirow{2}{*}{} & \multicolumn{2}{c}{Distance from $\beta$}&\multicolumn{3}{c}{Variable mean}&\multicolumn{3}{c}{Group mean}\\
		Model& MSE $\downarrow$& MAE $\downarrow$ & $\text{F}_1$ $\uparrow$ & FDR $\downarrow$& Sens. $\uparrow$& $\text{F}_1$ $\uparrow$ & FDR $\downarrow$& Sens. $\uparrow$\\
		\hline
		CV &$0.48$\scriptsize\textcolor{gray}{$\pm 0.01$} & $0.16$\scriptsize\textcolor{gray}{$\pm 0.00$}&$\textbf{0.74}$\scriptsize\textcolor{gray}{$\pm 0.00$}&$0.37$\scriptsize\textcolor{gray}{$\pm 0.01$}&$0.95$\scriptsize\textcolor{gray}{$\pm 0.00$}& $\textbf{0.43}$\scriptsize\textcolor{gray}{$\pm 0.01$}&$0.61$\scriptsize\textcolor{gray}{$\pm 0.01$}&$0.57$\scriptsize\textcolor{gray}{$\pm 0.01$} \\
		AS-SGS &$\textbf{0.28}$\scriptsize\textcolor{gray}{$\pm 0.01$}&$\textbf{0.13}$\scriptsize\textcolor{gray}{$\pm 0.00$} &$0.58$\scriptsize\textcolor{gray}{$\pm 0.01$} & $0.53$\scriptsize\textcolor{gray}{$\pm 0.01$}& $\textbf{0.98}$\scriptsize\textcolor{gray}{$\pm 0.00$}&$0.41$\scriptsize\textcolor{gray}{$\pm 0.01$}&$0.65$\scriptsize\textcolor{gray}{$\pm 0.01$}&$\textbf{0.67}$\scriptsize\textcolor{gray}{$\pm 0.01$} \\
		Scaled SGS &$0.75$\scriptsize\textcolor{gray}{$\pm 0.02$}&$0.18$\scriptsize\textcolor{gray}{$\pm 0.00$}&$0.42$\scriptsize\textcolor{gray}{$\pm 0.01$}&$\textbf{0.21}$\scriptsize\textcolor{gray}{$\pm 0.01$}&$0.57$\scriptsize\textcolor{gray}{$\pm 0.01$}& $0.30$\scriptsize\textcolor{gray}{$\pm 0.01$}&$\textbf{0.30}$\scriptsize\textcolor{gray}{$\pm 0.01$}&$0.35$\scriptsize\textcolor{gray}{$\pm 0.01$}\\  
		Knockoff & - & - & $0.53$\scriptsize\textcolor{gray}{$\pm 0.01$}&$0.50$\scriptsize\textcolor{gray}{$\pm 0.01$}&$0.76$\scriptsize\textcolor{gray}{$\pm 0.01$}&$0.07$\scriptsize\textcolor{gray}{$\pm 0.00$}&$0.93$\scriptsize\textcolor{gray}{$\pm 0.00$}&$0.23$\scriptsize\textcolor{gray}{$\pm 0.01$}\\
		\hline
	\end{tabular}
	\caption{Mean squared error (MSE), mean absolute error (MAE), $\text{F}_1$ score, FDR, and sensitivity, averaged over all correlation cases, for various model selection approaches applied using SGS, with standard errors shown in grey. 1800 MC repetitions performed. \textit{Note:} Knockoff does not produce $\hat{\beta}$ estimates.}
	\label{tbl:models_selection_results}
\end{table}	
	
\section{Real data}\label{section:real_data}
In this section, the use of SGS as a prediction tool is explored through its application to two real datasets. The classification performance of SGS is compared to both lasso- and SLOPE- based models. 

The first dataset includes $127$ individuals with $85$ colitis patients and $42$ controls \citep{Burczynski2006}. The expression of 22283 genes were microarrayed across the individuals. The second dataset contains data from $60$ patients who had suffered from early-stage estrogen receptor-positive breast cancer and had been treated with tamoxifen \citep{Ma2004}. The patients were classified on whether the cancer had recurred. The initial dataset contained over $22575$ genes, but had a high level of missingness. Genes with over $50\%$ missingness were removed, resulting in $12071$ remaining genes and mean imputation on those genes was applied. Both datasets were accessed using the \texttt{GEOquery} R function\footnote{Accessed on 08/03/2023.}. The two datasets have previously been analysed in \cite{Simon2013}, where the authors applied lasso-based models to them. However, as the dataset sources have been updated since this publication, the analysis is repeated here, using the same cleaning steps as in \cite{Simon2013}.

The 9 major collections of gene-sets, C1-C8 and H, of the Human Molecular Signatures Database (MSigDB)\footnote{\href{https://www.gsea-msigdb.org/gsea/msigdb/human/collections.jsp}{gsea-msigdb.org/gsea/msigdb/human/collections.jsp}. Accessed on 08/03/2023.} were downloaded for grouping the genes of the two datasets into pathways. Table \ref{tbl:real_data_full_results} presents the number of pathways and their allocated genes for each dataset and each collection. As the pathways contain overlapping genes, we opted to duplicate the overlapping genes into the different pathways that they belong \citep{Jacob2009,Tang2018}. 

For both datasets, the samples were split into training and test sets, and the classification rate of the test set was computed using the trained models. All 9 collections were analysed for both datasets (see Table \ref{tbl:real_data_full_results}). Below, the results from the gene-set collection that achieved the highest peak classification are presented for the two datasets. 

The 127 samples of the colitis dataset were split into $50/77$ train/test set observations following the work of \cite{Simon2013}. The C3 pathway collection shared $12031$ genes with the dataset. Each model was applied to a log-linear path of $100$ $\lambda$ values, starting at a value of $\lambda_\text{max}$ which generates a null model and terminating at $\lambda_\text{min} = 0.1 \lambda_\text{max}$. 

SGS achieved the highest peak classification of the six models considered, at $97.4\%$, and was applied using $\alpha = 0.99$, showing that inducing only a small amount of group sparsity is enough to improve upon the peak of $94.8\%$ for SLOPE (Figure \ref{fig:real_data_slope_models}). The much lower peak of $84.4\%$ for gSLOPE highlights the downside of selecting all variables within a group, as often noise variables will enter the prediction. Interestingly, the lasso was found to have a higher classification peak than SGL (which also used $\alpha = 0.99$), with $93.5\%$ compared to $92.2\%$, but with both being lower than SGS (Table \ref{tbl:real_data_summary} and Figure \ref{fig:real_data_lasso_models}). This further illustrates the benefit of inducing stringent bi-level sparsity. Further fitting information, including the correct classification rate as a function of the number of predictors and the decision boundaries, is shown in Figure \ref{fig:colitis_solution}.

At the peak index of $37$, the SGS model selected $9$ genes from $7$ pathways (Table \ref{tbl:colitis_solution}). The gene \textit{NCK2} was found to be most strongly associated with a change in risk of colitis, which is in agreement to the findings of \cite{Burczynski2006}. Amongst the other genes found by SGS, \textit{TMEM158} and \textit{BASP1} were found to be up-regulated for the development of colitis by \cite{Xu2020}.

\begin{figure}[H]
		\vspace{-5pt}
		\includegraphics[width=1\textwidth]{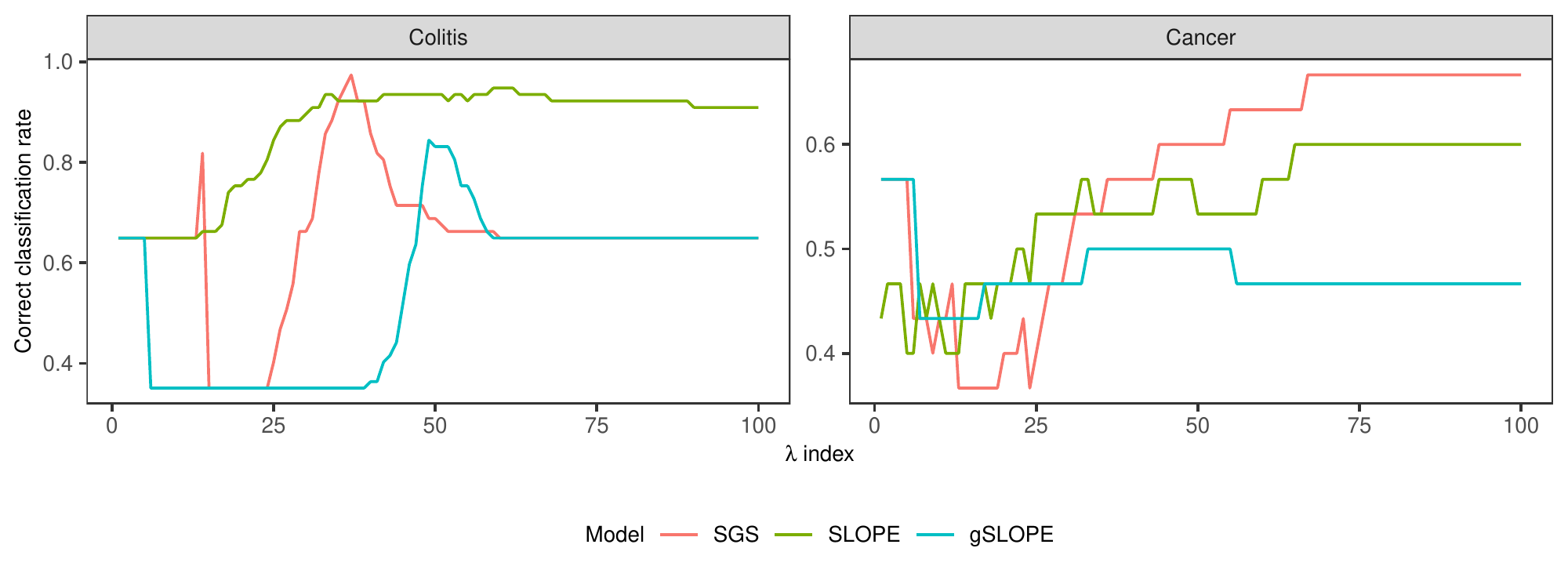}
		\vspace{-25pt}
		\caption[width=0.8\textwidth]{Correct classification rate (\%) ($\uparrow$) shown for SGS, SLOPE, and gSLOPE applied to the colitis and cancer datasets, along a $100$-$\lambda$ regularisation path.}
		\vspace{-10pt}
		\label{fig:real_data_slope_models}
\end{figure}

The $60$ patients of the breast cancer dataset were split evenly to train/test sets. From our conducted analysis, the C8 cell type signature gene sets collection gave the best classification results. The final dataset contains $6375$ genes that are grouped into $550$ pathways with sizes in the range $[1,533]$. For this dataset, the path was extended further to $\lambda_\text{max} = 0.01 \lambda_\text{min}$ to allow for denser models. The development of breast cancer follows a complex genetic landscape \citep{Skol2016}, so more genes are required for better predictive accuracy.

SGS is again found to outperform both SLOPE and gSLOPE, obtaining a peak accuracy of $66.7\%$, in comparison to $60.0\%$ and $50.0\%$ for SLOPE and gSLOPE (Figure \ref{fig:real_data_slope_models}). The optimal SGS model was found at the peak index of $67$ and selected $32$ genes from $20$ pathways (Table \ref{tbl:cancer_solution}). SGS is found to have the highest peak amongst the models considered (Table \ref{tbl:real_data_summary}). From the most associated genes found by SGS, \textit{COX6A1} and \textit{SUSD3} have also been shown previously to have an association with breast cancer \citep{Iacopetta2010, Yu2015}.

Table \ref{tbl:real_data_summary} presents the peak classification rates for each method considered. For the lasso-based models, we observe that the lasso outperforms SGL for both datasets, showing that for SGL, the grouping information provided no useful information for classification, but instead just increased the model variance. In constrast, by inducing more sparsity SGS is able to extract relevant grouping information, whilst discarding noisy variables, to improve predictive performance over SLOPE.
\begin{table}[H]
		\centering
		\begin{tabular}{l|lll|lll}
			\hline
			% & model & spar\_1 & spar\_2 & spar\_3 & spar\_4 & spar\_5 & spar\_6 & avg\_mse & avg\_mae \\ 
			\multirow{2}{*}{} & \multicolumn{3}{c}{SLOPE-based models} & \multicolumn{3}{c}{Lasso-based model}\\
			Dataset& SGS &SLOPE & gSLOPE & SGL & Lasso & gLasso\\
			\hline
			Colitis & \textbf{97.4}&94.8&84.4&92.2&93.5&89.6\\
			Cancer & \textbf{66.7}&60.0&56.7&50.0&56.7&36.7\\
			\hline
		\end{tabular}
		\caption{Correct classification rate (\%) ($\uparrow$) for the SLOPE- and lasso-based models applied to the colitis and cancer datasets.}
		\label{tbl:real_data_summary}
	\end{table}
 
These two data examples highlight the challenges and rewards of applying SGS to real data. In comparison to SGL, SGS has adaptive penalty weights, which require two additional hyperparameters to specify ($q_v$ and $q_g$), the choice of which influence the ultimate performance of SGS. Indeed, when optimising the performance of SGS for the colitis data, it obtained the highest peak ($97.4\%$), but beyond the peak it had a lower classification rate than SLOPE. Setting both FDR-control parameters to $0.01$ for SGS, we achieve consistently higher accuracy along the path than for the values $q_v =  10^{-4}$ and $q_g= 10^{-10} $ used in Figure \ref{fig:real_data_slope_models}, but with a peak slightly lower at $96.1\%$ (shown in Figure \ref{fig:real_data_slope_models_q_example}). As such, care is required in specifying the hyperparameters for SGS.

\begin{figure}[H]
		\vspace{-5pt}
		\includegraphics[width=1\textwidth]{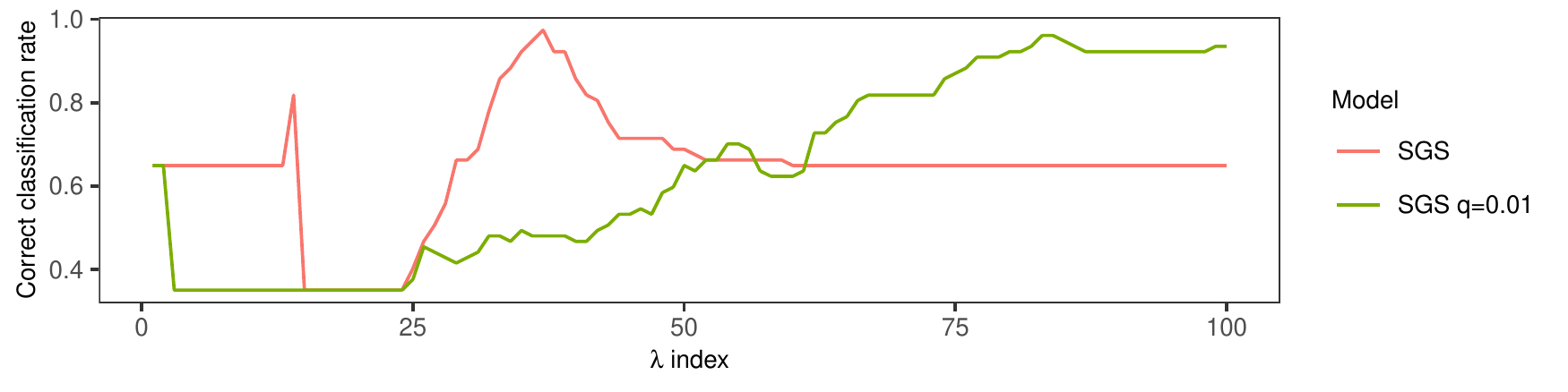}
		\vspace{-25pt}
		\caption[width=0.8\textwidth]{Correct classification rate (\%) ($\uparrow$) shown for SGS with $q_v = 10^{-4}, q_g = 10^{-10}$ and $q_v=q_g=0.1$, applied to the colitis dataset along a $100$-$\lambda$ path.}
		\vspace{-10pt}
		\label{fig:real_data_slope_models_q_example}
	\end{figure}
    
\section{Discussion}
This manuscript presents SGS, a new approach for bi-level selection based on incorporating SLOPE into a sparse-group framework. SGS aims to make use of the advantages of SLOPE with regards to FDR-control, whilst also integrating grouping information. SGS was shown to control bi-level FDR under orthogonal designs, using new penalty sequences derived specifically for SGS. The proposal has a convex and non-separable penalty. Due to the non-separability of the penalty, a proximal algorithm, ATOS, was applied to fitting SGS, which exploits knowledge of the proximal operators of SLOPE and gSLOPE. %ATOS requires the loss function to be differentiable with Lipschitz gradient, which is satisfied for both linear and logistic regression. %Future work should consider how to incorporate additional loss functions into such a framework, to allow for modelling of a wider class of cases, including Poisson and Cox regression. 

Through an extensive simulation study with grouped data, the performance of SGS was explored and compared to both lasso- and SLOPE-based methods. 
%The results presented in $\S$\ref{section:sim_studies} 
The conducted study showed that SGS achieves stronger bi-level selection performance than other lasso- and SLOPE-based models. SGS achieves higher performance by using grouping information and applying more stringent penalisation to discard noise variables. In particular, SGS was found to maintain strong performance under highly correlated designs, in comparison to the lasso and SGL, highlighting the benefit of adaptive penalisation. SGS was also found to perform very well under the null model; selecting very few variables as being significant. In comparison, gSLOPE was found to select many false variables, illustrating the downside of applying only group-wise sparsity. 

SGS was further applied to two real datasets and was assessed as a prediction tool. For both datasets, SGS achieved the highest peak classification accuracy, showing the benefit of applying both bi-level sparsity and more penalisation. In particular, SGL struggled in comparison to the lasso, showing that, unlike SGS, it was not able to utilise the grouping information. From the conducted analyses, genes linked with both colitis and breast cancer were identified. %of $97.4\%$ and $66.7\%$ for the colitis and cancer datasets, respectively. These peak scores were higher than the other methods considered, again highlighting the benefit of applying both bi-level sparsity and more penalisation. In particular, SGL struggled in comparison to the lasso, showing that, unlike SGS, it was not able to utilise the grouping information.

One of the challenges when working with regularised regression models is the selection of the tuning parameter, $\lambda$. One of the most widely used approaches for selecting the tuning parameter is through cross-validation, where the chosen value is the one that minimises the prediction error. The problem of model selection under a pathwise solution is a topic that has been extensively studied \citep{Giraud2012,Lee2016,Homrighausen2018} and in this manuscript was explored for SGS with a focus on finding an approach that encourages FDR-control under non-orthogonal designs. Knockoff was considered as an approach for FDR-control in conjunction with SGS, but failed to achieve the desired control, nor strong selection performance. A new algorithm for estimating jointly estimating the coefficients and $\lambda$, AS-SGS, was proposed and was shown to obtain the least biased estimates of the approaches considered. However, as with Knockoff, this approach failed to achieve exact FDR-control and was outperformed by cross-validation in terms of selection performance. Future research is required to develop model selection approaches to achieve exact FDR-control for SLOPE-based models under non-orthogonal designs. Such approaches will be able to extract the full potential of these models. %Some promising alternative approaches include stability selection (\citep{Meinshausen2010,Ahmed2011}), ET-Lasso (\citep{Yang2018}), and complementary pairs stability selection (\citep{Shah2013}). %In particular, the difficulty of choosing a parameter $\lambda$ which allows for FDR-control under non-orthogonal designs was explored. 

Similarly to both elastic-net and SGL, alongside $\lambda$, the hyperparameter $\alpha$ needs to be defined for SGS. In most cases, $\alpha$ tends to be set subjectively or found using a grid-search approach with cross-validation. Exploring more advanced approaches for the joint optimisation of both $\alpha$ and $\lambda$, possibly through the use of bi-level optimisation with FDR-control in mind, would be an exciting next step. An alternative future direction of work would be the implementation of screening rules, as the ones developed for SLOPE in \cite{Larsson2020}, for optimising the computational time of fitting an SGS model.

%The proposed SGS is found both theoretically and computationally to control FDR at both variable and group levels, and to outperform other regularisation approaches. 

%In general, SGS has the potential to outperform the other models considered in this manuscript for both prediction and selection, as these approaches can be seen as special cases of SGS. However, we also saw that SGS can often be challenging to implement, due to the increased number of hyperparameters. The work presented in this manuscript is just the first step in the use of SGS. Future work should investigate methods for extracting the full potential of SGS by exploring approaches for optimising the hyperparameters. A similar issue is faced in the application of elastic-net and SGL, which both require the selection of the $\alpha$ and $\lambda$ parameters, the former of which tends to be set subjectively or found using a grid-search approach with cross-validation. Exploring more advanced approaches for the joint optimisation of $\alpha$ and $\lambda$, perhaps through the use of bi-level optimisation, would be an exciting next step, using FDR-control as a primary goal of the optimisation. Further, in the timing tests (Table \ref{tbl:timing_results_real_data}), SGS was found to scale poorly for larger datasets. This highlights the need to develop screening rules specific to SGS, to reduce the dimension of the parameter space before fitting. The screening rule developed for SLOPE in \citep{Larsson2020} can be taken as a starting point.
\subsection*{Funding.}
FF gratefully acknowledges funding provided by the Engineering and Physical Sciences Research Council's Modern Statistics and Statistical Machine Learning Centre for Doctoral Training.
\subsection*{Conflict of interest.}
The authors declare they have no competing interests.
%%%%%%%%%%%%%%%%%%%%%%%%%%%%%%%%%%%%%%%%%%%%
\bibliographystyle{plainnat}
\bibliography{sgs_preprint}
\newpage
\appendix
\counterwithin{figure}{section}
\counterwithin{table}{section}
\renewcommand\thefigure{\thesection\arabic{figure}}
\renewcommand\thetable{\thesection\arabic{table}}

\section{Definitions}\label{appendix:definitions}
    \subsection{Definitions.}
    Let $TP$, $TN$, $FP$, and $FN$ define the number of true positives, true negatives, false positives, and false negatives respectively.
    \begin{definition}[Type I error]
        A type I error in hypothesis testing is the mistaken rejection of an actually true null hypothesis.
    \end{definition}
    \begin{definition}[Sensitivity]
    The sensitivity of a variable selection event is defined as the probability of rejecting the null of no effect, given that the variable is a true signal. Formally, it is given by 
    \begin{equation}
        Sensitivity = \frac{TP}{TP + FN}.
    \end{equation}
    \end{definition}
    \begin{definition}[False discovery rate (FDR)]
    The false discovery rate (FDR) defines the rate of type I errors when conducting multiple testing. Formally, it is defined as 
    \begin{equation}
        FDR = \frac{FP}{FP+TP}.
    \end{equation}
    \end{definition}
    \begin{definition}[$F_1$ score]\label{defn:f1_score}
    The $F_1$ score is a measure of a test's accuracy, with it being the harmonic mean of precision and sensitivity. Formally, it is given by
    \begin{equation}
        F_1 = \frac{TP+TN}{TP+TN+FP+FN}.
    \end{equation}
    \end{definition}
\section{Sparse-group SLOPE (SGS)}\label{appendix:sgs}
\subsection{Binomial loss function.} 
\label{appendix:binomial_loss_fcn}
To apply ATOS to a Binomial response, the loss function needs to be convex and $L_f$-smooth. The loss function for logistic regression satisfies these constraints, given by $\ell(b;y,\mathbf{X})=-1/n \log(\mathcal{L}(b; y, \mathbf{X}))$, where $\mathcal{L}$ is the log-likelihood of a binomial distribution, given by
\begin{equation}\label{eqn:sgs_log}
	\mathcal{L}(b; y, \mathbf{X}) = \sum_{i=1}^{n}\left\{y_i b^\intercal x_i - \log(1+\exp(b^\intercal x_i)) \right\}.
\end{equation}
The negative of the log-likelihood is used as this is equivalent to maximising the likelihood.
\subsection{Fitting algorithm.}\label{appendix:fitting_algo}
\begin{theorem}\label{appendix:convexity_proof}
The SGS penalty (Equation \ref{eqn:sgs}) is convex.
\end{theorem}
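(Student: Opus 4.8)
The plan is to reduce the claim to the convexity of the two penalty terms separately, and to treat each by exhibiting it as a pointwise maximum of convex functions. Since $\lambda>0$ and $\alpha\in[0,1]$, the scalars $\lambda\alpha$ and $\lambda(1-\alpha)$ are nonnegative, and a nonnegative linear combination of convex functions is convex; it therefore suffices to show that the sorted variable term $\sum_{i=1}^p v_i|b|_{(i)}$ and the sorted group term $\sum_{g=1}^m w_g\sqrt{p_g}\|b^{(g)}\|_2$ are each convex in $b$. I would first record that the weight sequences produced by the quantile-based rules of $\S$\ref{section:sgs_penalty} satisfy $v_1\ge\dots\ge v_p\ge 0$ and $w_1\ge\dots\ge w_m\ge 0$, since nonnegativity of the weights is exactly what makes the individual terms below convex.

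For the variable term, the key observation is that, because the weights are sorted in decreasing order, the rearrangement inequality yields
\[
\sum_{i=1}^p v_i\,|b|_{(i)} \;=\; \max_{\sigma\in S_p}\;\sum_{i=1}^p v_{\sigma(i)}\,|b_i|,
\]
as pairing the decreasing weights with the decreasingly sorted magnitudes attains the maximum over all assignments. For each fixed permutation $\sigma$, the map $b\mapsto\sum_{i=1}^p v_{\sigma(i)}|b_i|$ is a nonnegatively weighted sum of the convex functions $b\mapsto|b_i|$, hence convex; a pointwise maximum over the finite set $S_p$ of convex functions is convex, which settles this term.

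For the group term I would argue identically, replacing magnitudes by the group Euclidean norms. Using $w_1\ge\dots\ge w_m\ge 0$ and the nonnegativity of the quantities $\sqrt{p_g}\|b^{(g)}\|_2$, the same rearrangement identity gives
\[
\sum_{g=1}^m w_g\,\sqrt{p_g}\,\|b^{(g)}\|_2 \;=\; \max_{\tau\in S_m}\;\sum_{g=1}^m w_{\tau(g)}\,\sqrt{p_g}\,\|b^{(g)}\|_2 .
\]
Each summand is a nonnegative combination of the convex maps $b\mapsto\|b^{(g)}\|_2$, hence convex, and the maximum over $S_m$ is again convex. Combining the two parts shows the SGS penalty is convex. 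The main point requiring care, and the one I would write out explicitly, is the rearrangement identity together with the nonnegativity of the weights: without sortedness the displayed equalities fail, and the whole argument rests on recasting the apparently non-smooth, sorting-dependent penalty as a maximum of ordinary weighted norms, after which convexity is immediate.
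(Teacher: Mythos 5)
Your proof is correct, but it takes a genuinely different route from the paper. The paper's own proof is a three-line citation argument: the SLOPE penalty is convex by \cite{Bogdan2015}, the gSLOPE penalty is convex by \cite{Brzyski2015}, and a (nonnegatively weighted) sum of convex functions is convex. You share the same top-level decomposition into the two sorted terms, but instead of invoking the literature you prove each term convex from first principles, via the rearrangement identity
\begin{equation*}
\sum_{i=1}^p v_i\,|b|_{(i)} \;=\; \max_{\sigma\in S_p}\sum_{i=1}^p v_{\sigma(i)}\,|b_i|,
\qquad
\sum_{g=1}^m w_g\,\sqrt{p_g}\,\|b^{(g)}\|_2 \;=\; \max_{\tau\in S_m}\sum_{g=1}^m w_{\tau(g)}\,\sqrt{p_g}\,\|b^{(g)}\|_2,
\end{equation*}
and the fact that a finite pointwise maximum of convex functions is convex. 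This is essentially the standard argument that underlies the cited results, so in a sense you have inlined their proofs. What your version buys is self-containedness and transparency: it makes explicit exactly where the two structural hypotheses enter --- the weights must be \emph{sorted} (otherwise the rearrangement identity fails) and \emph{nonnegative} (otherwise the individual summands $b\mapsto\sum_i v_{\sigma(i)}|b_i|$ are not convex; e.g.\ with $v=(1,-1)$ the sorted sum $\bigl||b_1|-|b_2|\bigr|$ is not convex). The paper leaves nonnegativity entirely implicit, both in Equation (\ref{eqn:sgs}) and in its proof. What the paper's version buys is brevity and a clean separation of concerns, delegating the sorted-norm convexity to the references where it is established once. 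One small caution on your side: the theorem concerns the penalty in Equation (\ref{eqn:sgs}) with generic sorted nonnegative weights, so verifying nonnegativity of the particular quantile-based sequences of $\S$\ref{section:sgs_penalty} is not logically required for the statement --- it is a reasonable sanity check, but it should be phrased as a standing assumption on $v$ and $w$ rather than as a property derived from those specific formulas (which, as written, are not guaranteed to be nonnegative without truncation at zero).
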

	\begin{proof}
		The SLOPE penalty is convex \citep{Bogdan2015}. Similarly, the group SLOPE penalty is also convex \citep{Brzyski2015}. Finally, the sum of convex functions is convex. Hence, the penalty function for SGS is convex.
	\end{proof}
\section{FDR-control}\label{appendix:fdr_control}
\begin{figure}[H]		
 \includegraphics[width=1\textwidth]{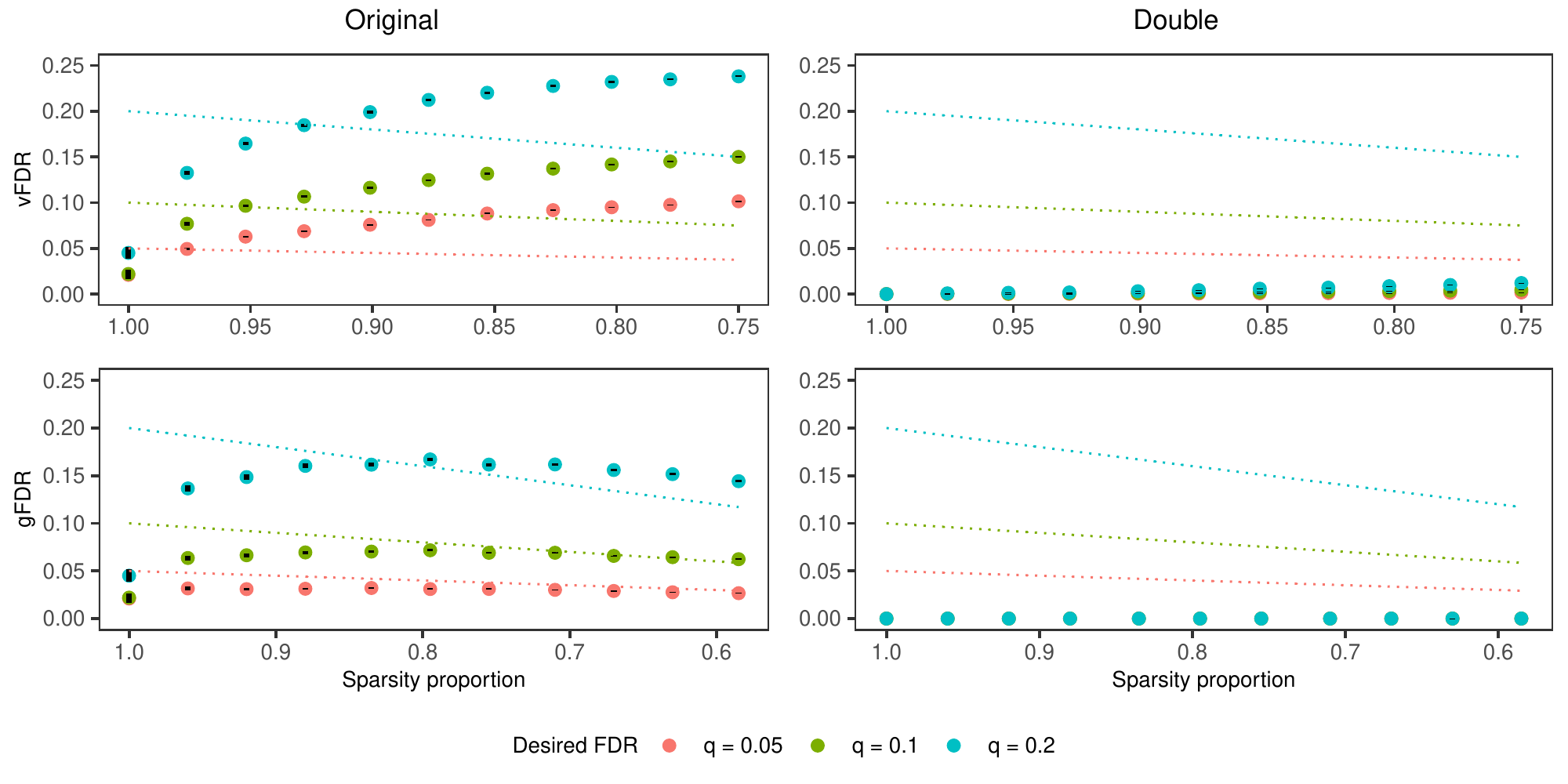}  
	\caption[width=0.8\textwidth]{vFDR and gFDR shown for SGS Original and SGS Double (both Max) under orthogonal design with even groups, as a function of decreasing sparsity proportion. $100$ MC repetitions performed per sparsity proportion. The sensitivity is given in Figure \ref{fig:even_sgs_org_double_sens}.}
	\label{fig:even_sgs_org_double}
\end{figure}
\begin{figure}[H]
\includegraphics[width=1\textwidth]{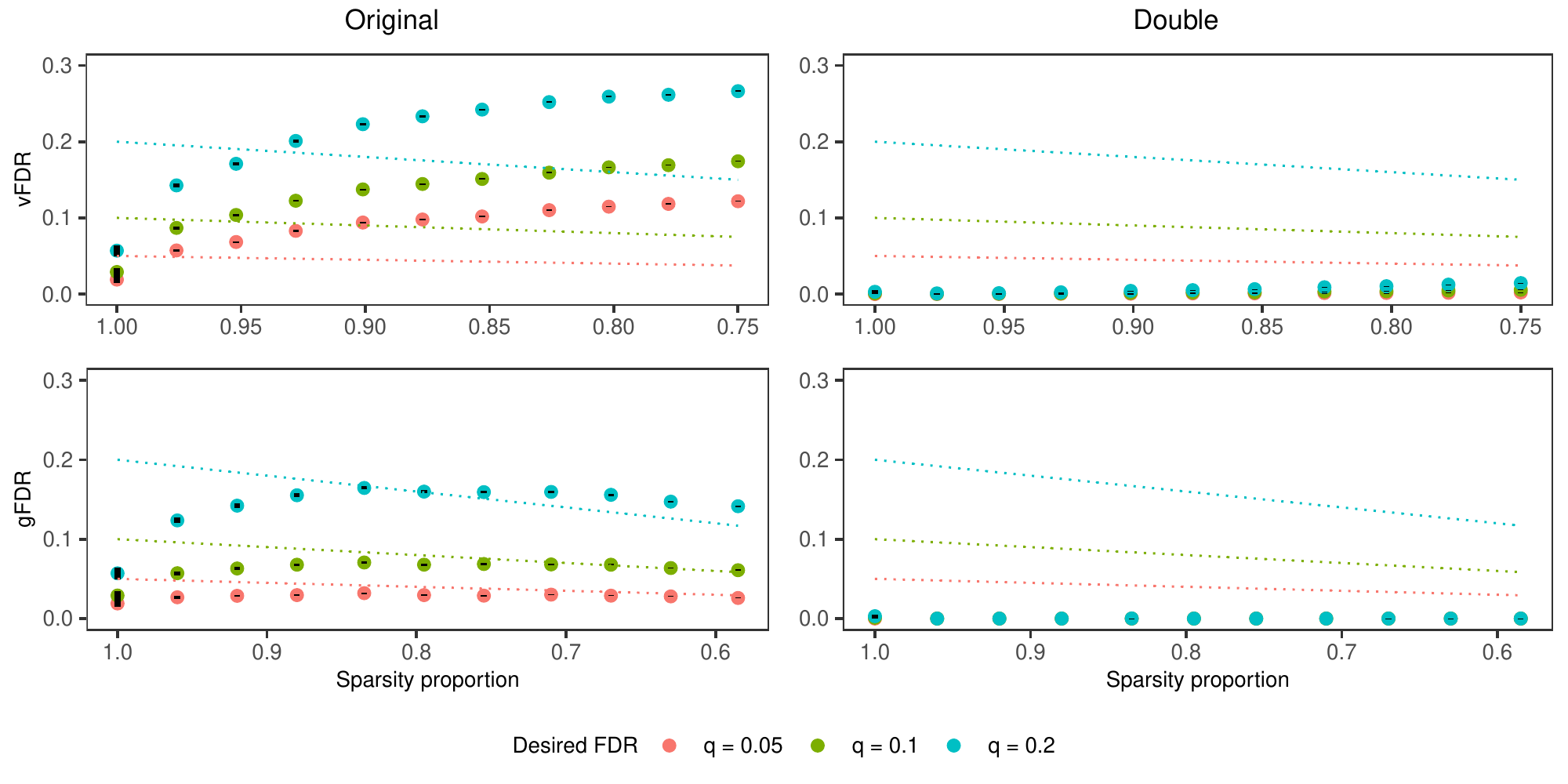}    
	\caption[width=0.8\textwidth]{vFDR and gFDR shown for SGS Original and SGS Double (both Max) under orthogonal design with uneven groups, as a function of decreasing sparsity proportion. $100$ MC repetitions performed per sparsity proportion. The sensitivity is given in Figure \ref{fig:uneven_sgs_org_double_sens}.}
\label{fig:uneven_sgs_org_double}
\end{figure}
\begin{figure}[H]
\includegraphics[width=1\textwidth]{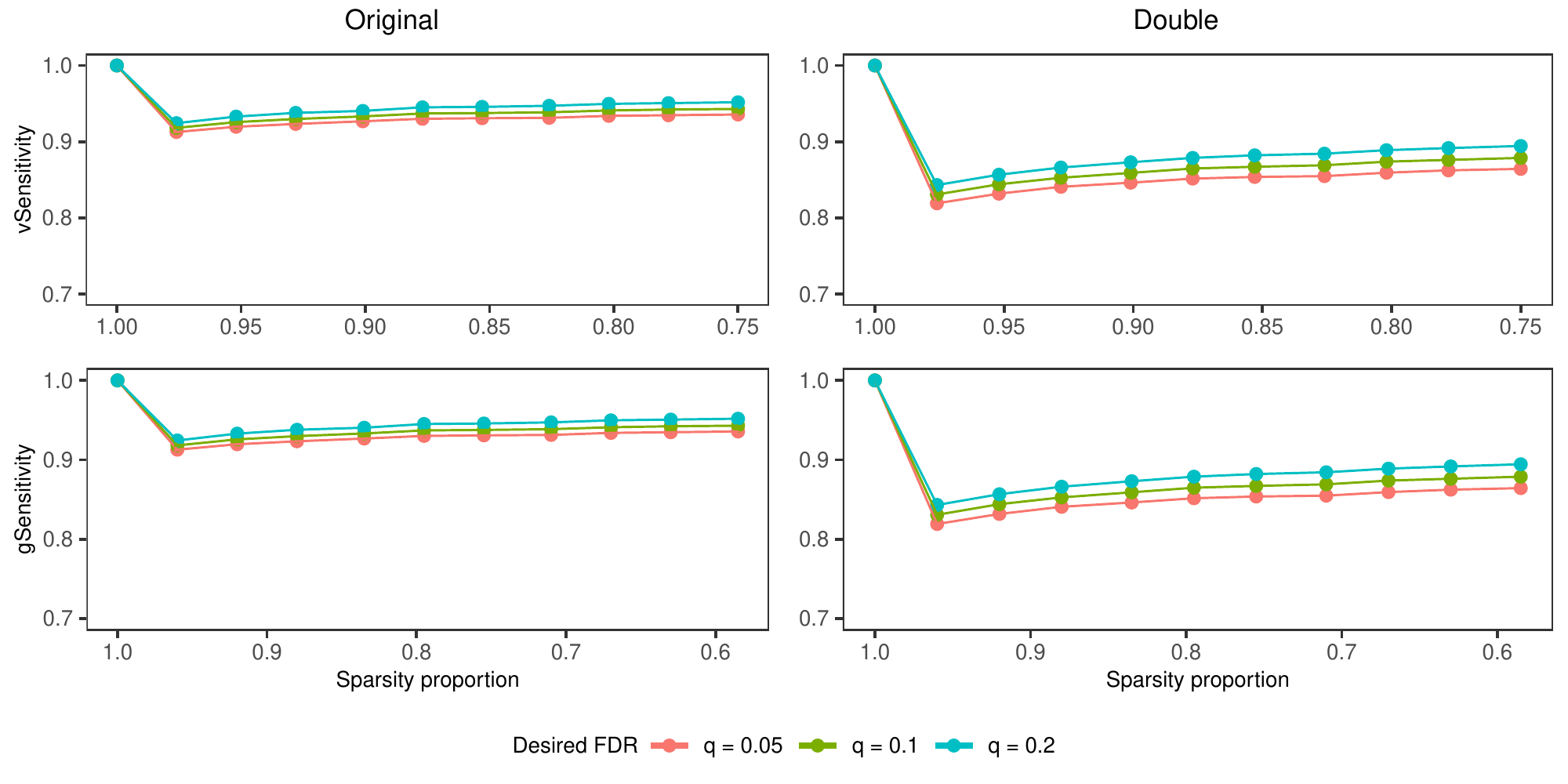} 
		\caption[width=0.8\textwidth]{vSensitivity and gSensitivity shown for SGS Original and SGS Double (both Max) under orthogonal design with even groups, as a function of decreasing sparsity proportion. $100$ MC repetitions performed per sparsity proportion.}
		\label{fig:even_sgs_org_double_sens}
	\end{figure}
 	\begin{figure}[H]
		\includegraphics[width=1\textwidth]{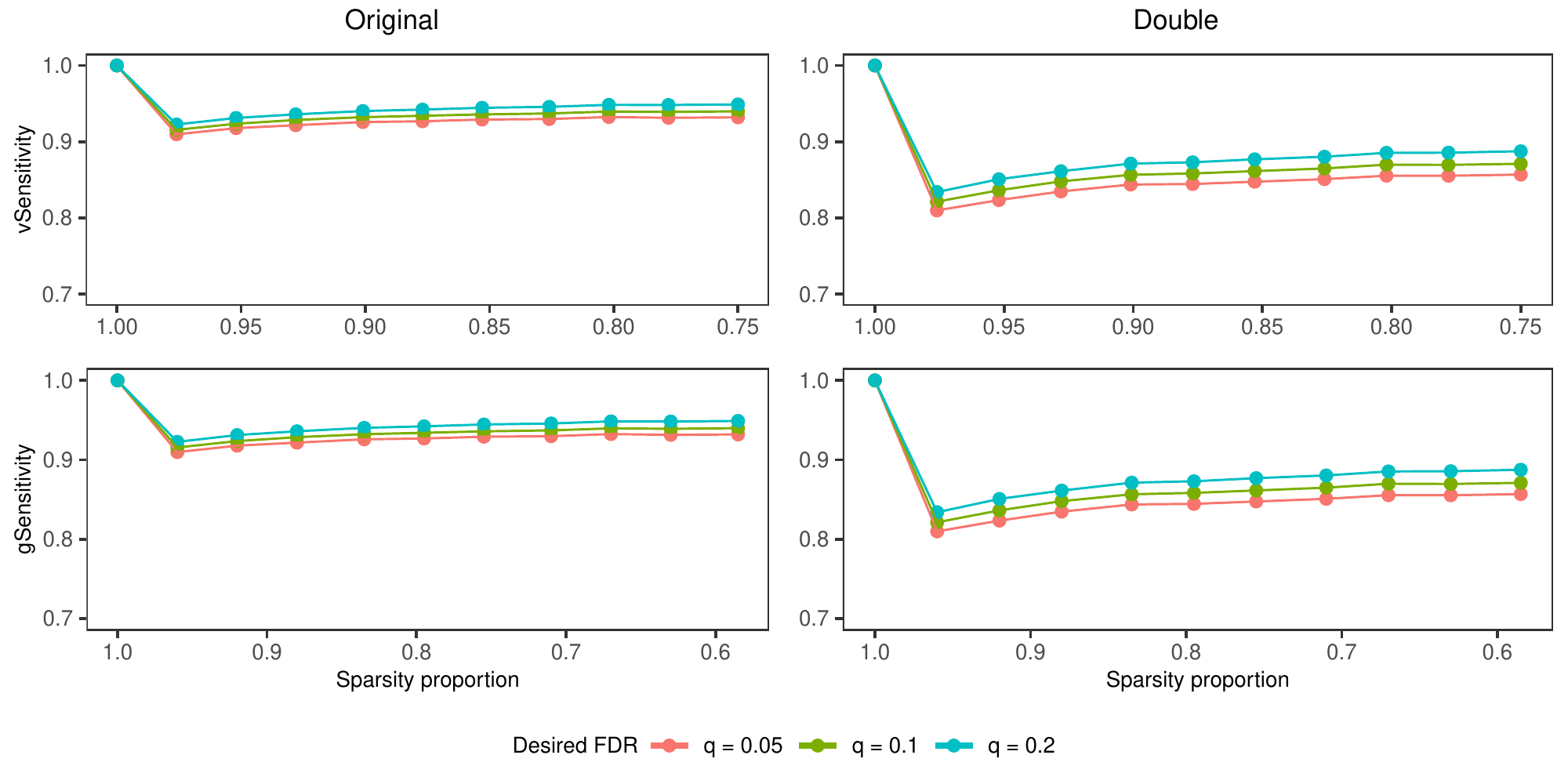}   
		\caption[width=0.8\textwidth]{vSensitivity and gSensitivity shown for SGS Original and SGS Double (both Max) under orthogonal design with uneven groups, as a function of decreasing sparsity proportion. $100$ MC repetitions performed per sparsity proportion.}
		\label{fig:uneven_sgs_org_double_sens}
	\end{figure}
    	\begin{figure}[H]
		\includegraphics[width=1\textwidth]{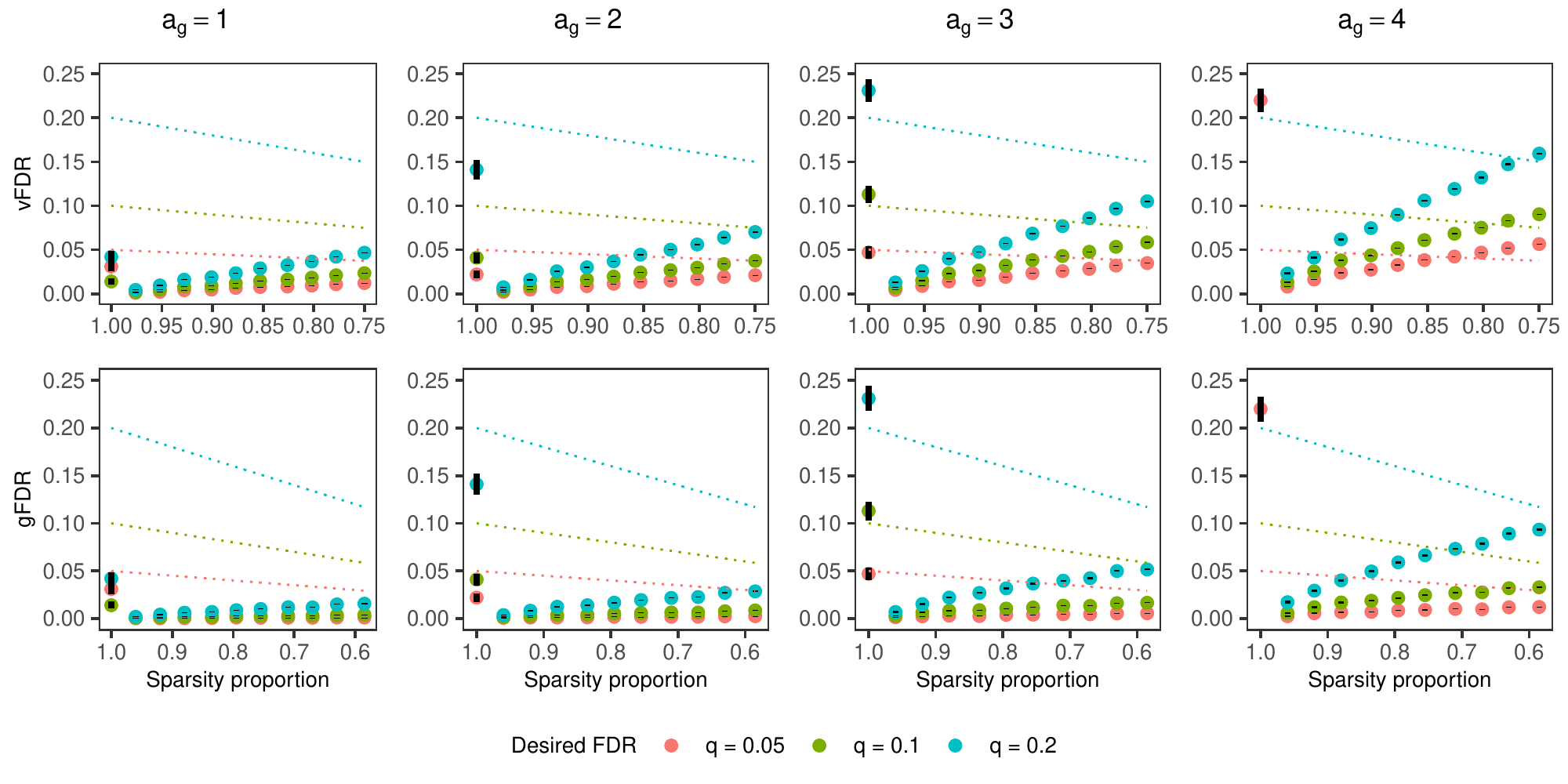}   
		\caption[width=0.8\textwidth]{vFDR and gFDR shown for SGS with vMax and gSLOPE mean sequences under orthogonal design with even groups, for different values of $a_g$, as a function of decreasing sparsity proportion. $100$ MC repetitions performed per sparsity proportion. The sensitivity is given in Figure \ref{fig:var_max_ag_all_sens}.}
		\label{fig:var_max_ag_all}
	\end{figure}
	\begin{figure}[H]
		\includegraphics[width=1\textwidth]{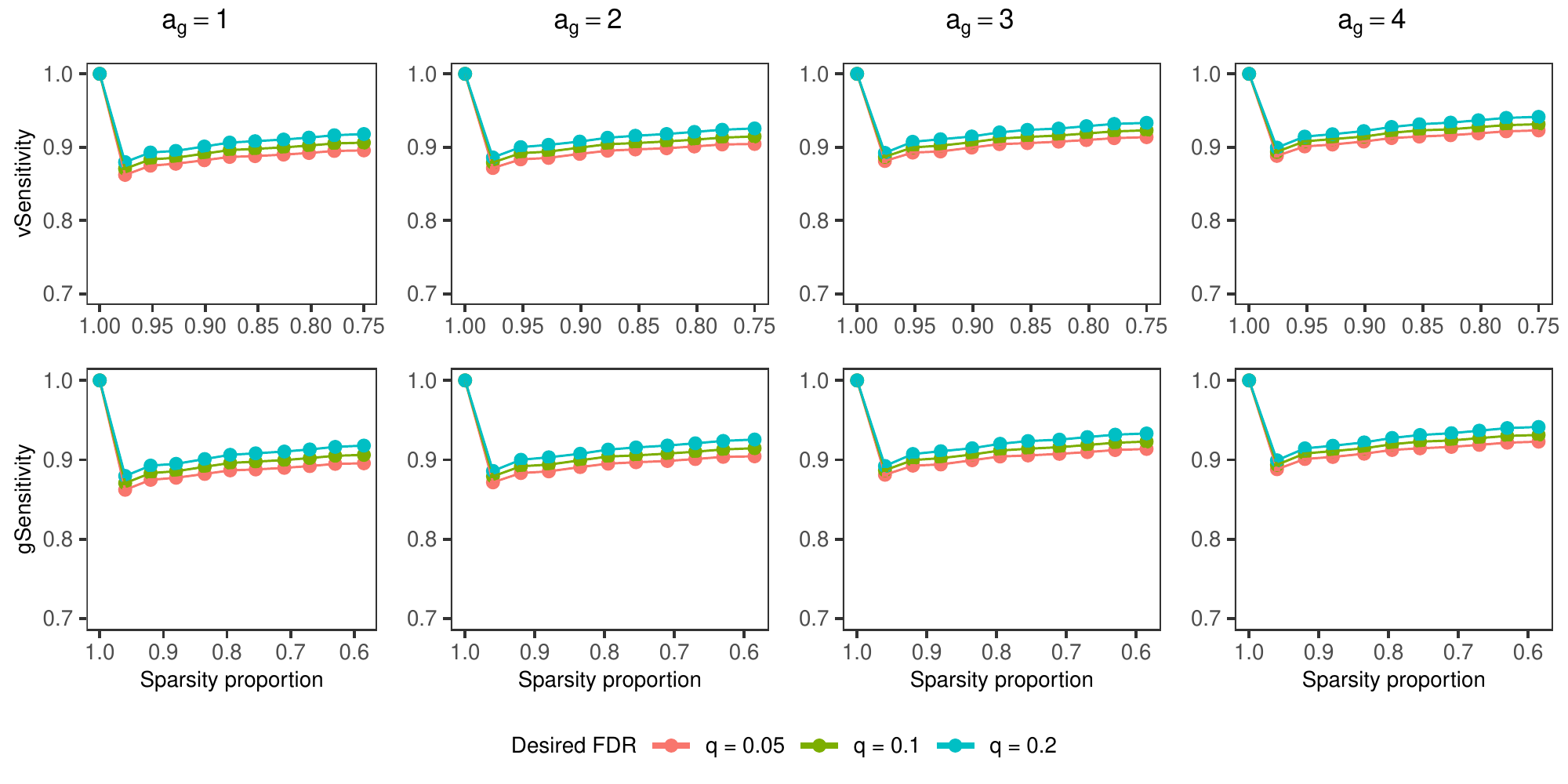}  
		\caption[width=0.8\textwidth]{vSensitivity and gSensitivity shown for SGS with vMax and gSLOPE mean sequences under orthogonal design with even groups, for different values of $a_g$, as a function of decreasing sparsity proportion. $100$ MC repetitions performed per sparsity proportion.}
		\label{fig:var_max_ag_all_sens}
	\end{figure}
   \begin{figure}[H]
		\includegraphics[width=1\textwidth]{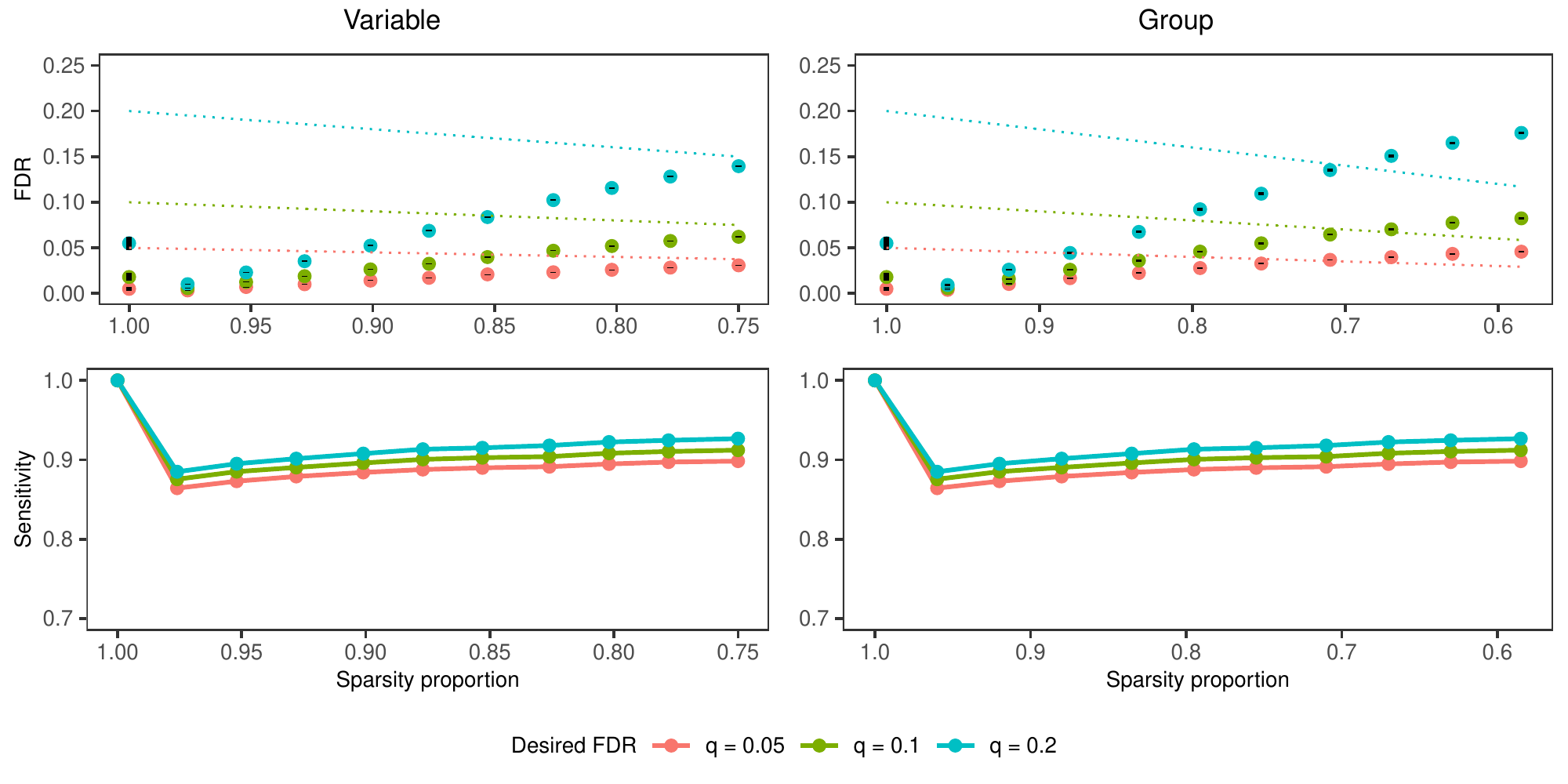} 
		\caption[width=0.8\textwidth]{Variable and group metrics shown for SGS with vMean and gMean sequences under orthogonal design with even groups, as a function of decreasing sparsity proportion. $100$ MC repetitions performed per sparsity proportion.}
		\label{fig:sim_3_even_gmean_vmean}
	\end{figure}
 \begin{figure}[H]
		\includegraphics[width=1\textwidth]{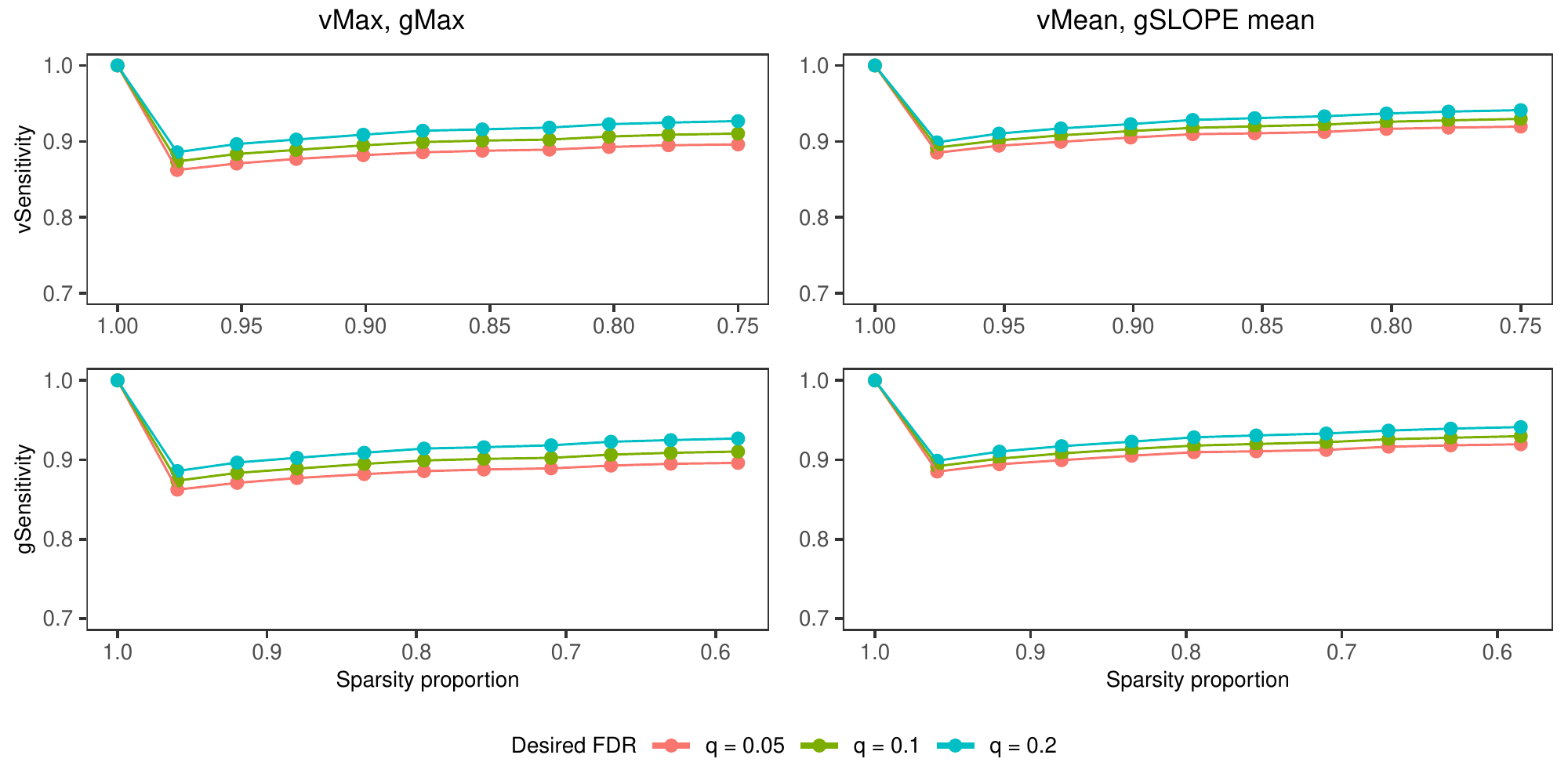}   
		\caption[width=0.8\textwidth]{vSensitivity and gSensitivity shown for SGS with the vMax, gMax and the vMean, gSLOPE mean sequences under orthogonal design with even groups, as a function of decreasing sparsity proportion. $100$ MC repetitions performed per sparsity proportion.}
		\label{fig:sim_3_even_sgs_sens_final}
\end{figure}
   \begin{figure}[H]
		\includegraphics[width=1\textwidth]{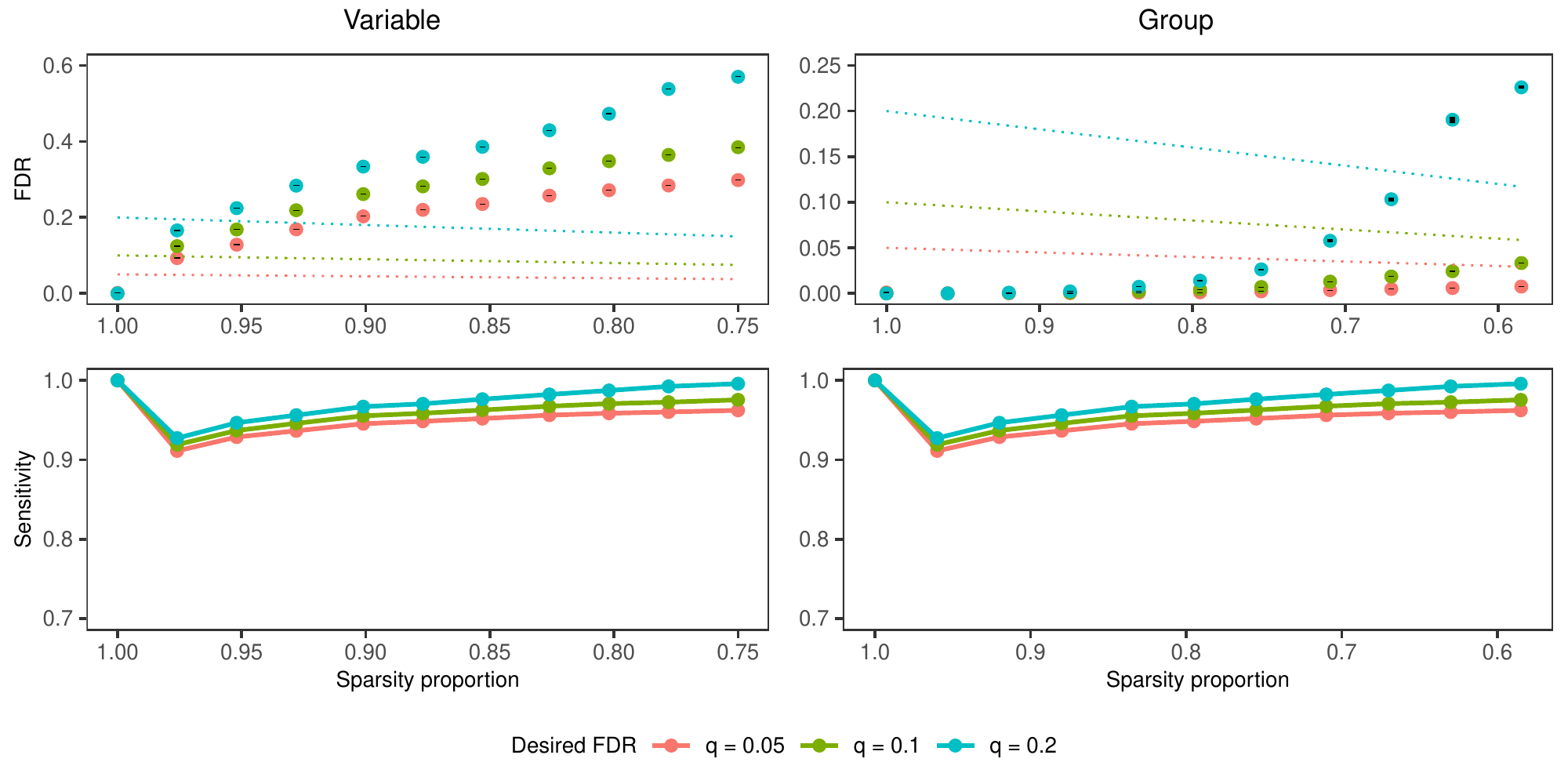} 
		\caption[width=0.8\textwidth]{Variable and group metrics shown for SGS with vMean and gMean sequences under orthogonal design with uneven groups, as a function of decreasing sparsity proportion. $100$ MC repetitions performed per sparsity proportion.}
		\label{fig:sim_3_uneven_gmean_vmean}
	\end{figure}
 \begin{figure}[H]
		\includegraphics[width=1\textwidth]{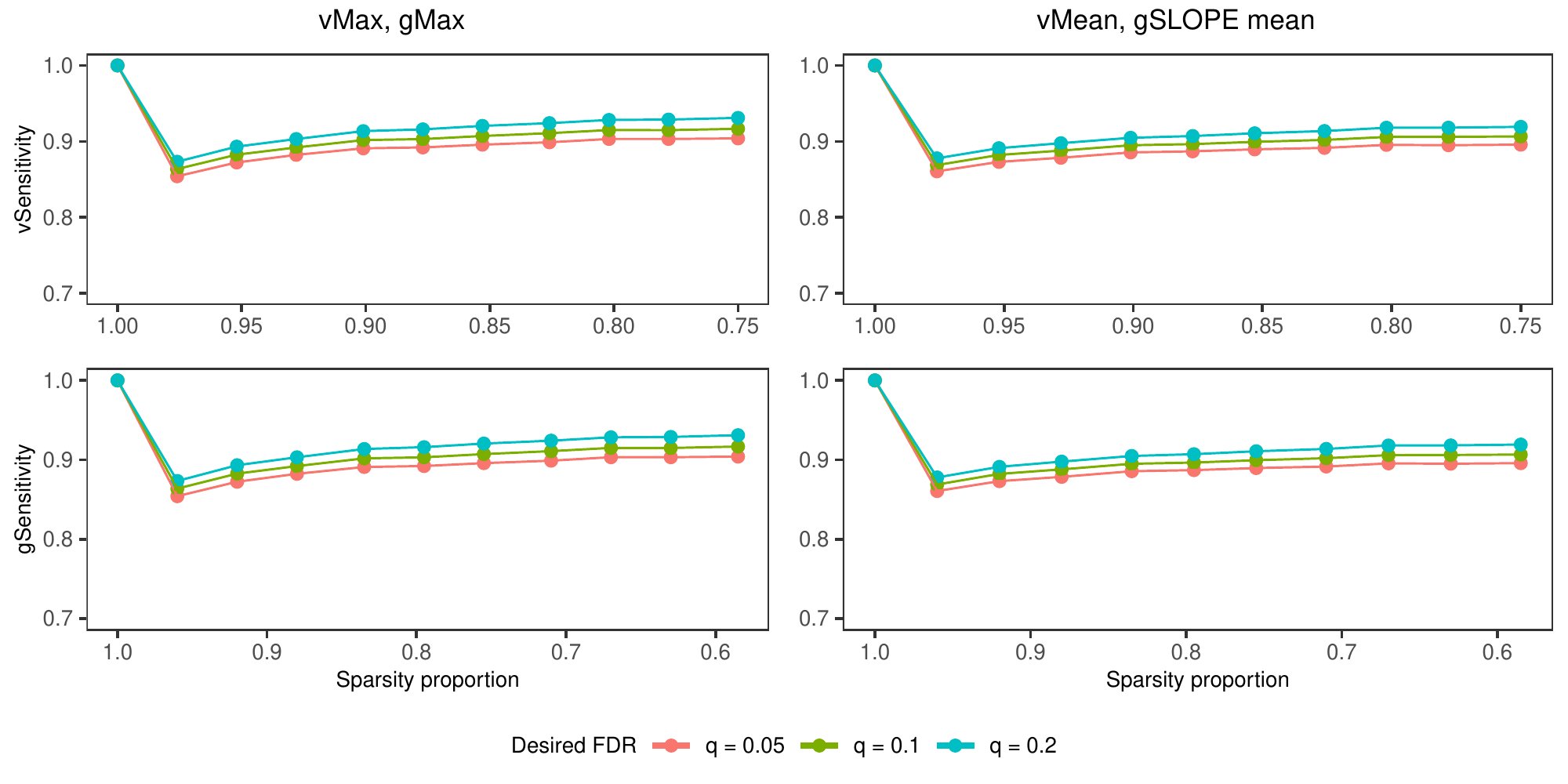}   
		\caption[width=0.8\textwidth]{vSensitivity and gSensitivity shown for SGS with the vMax, gMax and the vMean, gSLOPE mean sequences under orthogonal design with even groups, as a function of decreasing sparsity proportion. $100$ MC repetitions performed per sparsity proportion.}
		\label{fig:sim_3_uneven_sgs_sens_final}
\end{figure}

\subsection{FDR-control proofs.}\label{appendix:fdr_proof}
In both proofs, we assume without loss of generality that $\lambda = 1$. Additionally, the $1/n$ factor in Equation (\ref{eqn:sgs}) can be absorbed into $\lambda$ and is omitted from the proofs for simplicity. Hence, in the simulation studies, $\lambda$ is set to $1/n$ rather than $1$ as in the proofs. 
\newpage
\subsection{Variable FDR proof}\label{appendix:var_proof}
	\begin{proof}[Proof of Theorem \ref{thm:sgs_var_fdr_proof}]
		Under orthogonality, we can rewrite the response as $\tilde{y} := \mathbf{X}^\top y = \beta + \epsilon$. This has distribution $\tilde{y} \sim N(\beta, \mathbf{I}_p)$. Hence, SGS can be reduced to 
		\begin{equation} \label{eqn:sgs_convex_proof}
			\hat{\beta} = \argmin_{b \in \mathbb{R}^{p}}\left\{ \frac{1}{2}\left\|y- b\right\|_2^2 + \alpha \sum_{i=1}^{p} v_i \left| b \right|_{(i)} +(1-\alpha)\sum_{g=1}^{m} w_g \sqrt{p_g} \|b^{(g)}\|_2 \right\},
		\end{equation}
		Hence, from here it suffices to consider the scenario where $p=n$ and $y \sim \mathcal{N}(\beta, \mathbf{I}_p)$. 
For the hypotheses, we have that
		\begin{equation}
			\mathbb{P}(H_i^v \;  \text{rejected}) = \mathbb{P}(\hat{\beta}_i \neq 0 \; | \;\|\hat{\beta}^{(g)} \|_2 \neq 0, i\in G_g)
		\end{equation}
		Without loss of generality, we set the problem up so that the first $p_0$ hypotheses are null, i.e, $\beta_i = 0$ for $i\leq p_0$. The variable FDR is given as
		\begin{equation}\label{eqn:var_fdr}
			\text{vFDR} = \mathbb{E}\left[\frac{V^v}{\max(R^v,1)}\right] = \sum_{i=1}^{p}\mathbb{E}\left[\frac{V^v}{r}\mathbbm{1}_{\{R^v=r\}} \right] = \sum_{r=1}^{p}\frac{1}{r} \sum_{i=1}^{p_0} \mathbb{P}(H_i^v \; \text{rejected and} \; R^v=r).
		\end{equation}
		To bound the key quantity, $\mathbb{P}(H_i^v \; \text{rejected and}\; R^v=r)$, we use the following two lemmas (the proofs of which are given in $\S$\ref{appendix:lemma_proofs}).
		\begin{lemma} \label{lemma:sgs_1}
			Let $H_i^v$ be a null hypothesis, so that $i\leq p_0$ and $i \in G_g$, and let $r\geq 1$. Then,
			\begin{equation}
				\left\{y: H_i^v \; \text{rejected and} \; R^v=r\right\} =\left\{y: |y_i|>\alpha  v_r +\frac{1}{3}(1-\alpha)  a_g w_g  \; \text{and} \; R^v=r \right\}.
			\end{equation}
		\end{lemma}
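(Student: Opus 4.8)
The plan is to derive the stated set identity from the stationarity (KKT) conditions of the SGS program in Equation~(\ref{eqn:sgs_convex_proof}). Because the quadratic term $\frac12\|y-b\|_2^2$ is strictly convex, the minimiser $\hat\beta$ is unique, so membership of $y$ in $\{y: H_i^v \text{ rejected and } R^v=r\}$ is equivalent to the statement that this unique optimum has exactly $r$ nonzero coordinates and that coordinate $i$ is one of them. I would write the optimality condition $0 \in (\hat\beta - y) + \alpha\,\partial\mathrm{Pen}_v(\hat\beta) + (1-\alpha)\,\partial\mathrm{Pen}_g(\hat\beta)$ and read off the condition on $y_i$ coordinate by coordinate, keeping the event $\{R^v=r\}$ fixed throughout so that the relevant sorted weights are pinned down.

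First I would exploit the fact that, under orthogonality, the SGS proximal map factorises as a composition of the SLOPE and gSLOPE proximal maps applied to $y$ — this is precisely the operator splitting that Algorithm~\ref{alg:cap} is built on. Concretely, a variable $i\in G_g$ is rejected if and only if two nested thresholding events occur: (i) it survives the sorted-$\ell_1$ (SLOPE) soft-thresholding at level $\alpha v_{(\cdot)}$, and (ii) its group $g$ survives the subsequent group-level (gSLOPE) soft-thresholding. I would treat the two steps separately: the SLOPE survival of coordinate $i$ is characterised by $|y_i| > \alpha v_r$ when the total number of surviving coordinates is $r$ (so the attaching sorted weight is $v_r$), while group survival is governed by a condition of the form $\|\tilde b^{(g)}\|_2 > (1-\alpha) w_g \sqrt{p_g}$, where $\tilde b$ denotes the post-SLOPE vector.

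The remaining task is to fold these two conditions into a single scalar threshold on $|y_i|$. The stationarity condition for an active coordinate contributes the group term $(1-\alpha) w_g \sqrt{p_g}\,\hat\beta_i/\|\hat\beta^{(g)}\|_2$, and I would control the per-coordinate share $\sqrt{p_g}/\|\hat\beta^{(g)}\|_2$ through the number $a_g$ of active variables in group $g$, obtaining the additive group offset $\tfrac13(1-\alpha) a_g w_g$ and hence the claimed threshold $\alpha v_r + \tfrac13(1-\alpha) a_g w_g$. The intersection with $\{R^v = r\}$ is carried along at every step, which is what fixes the SLOPE weight at $v_r$.

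I expect the main obstacle to be exactly this last folding step: reconciling two non-separable sorted penalties, whose subgradients both depend on the global ordering of coefficient magnitudes and of group norms, into one clean coordinatewise inequality, and in particular justifying the precise form of the group offset — the constant $\tfrac13$ and the appearance of $a_g$ rather than $\sqrt{p_g}$. Getting the ordering bookkeeping right (which sorted weights $v_r$ and $w_g$ attach to coordinate $i$ and to group $g$) and verifying that the group-contribution bound is sharp enough to yield set \emph{equality} rather than mere inclusion is where the real work lies.
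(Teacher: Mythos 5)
Your proposal has a genuine gap, and it sits exactly where you suspected the ``real work'' lies. The first problem is the claim that, under orthogonality, the SGS proximal map factorises as a composition of the SLOPE and gSLOPE proximal maps applied to $y$, and that this is what Algorithm \ref{alg:cap} is built on. It is not: ATOS is an \emph{iterative} splitting scheme that only requires access to the two individual proximal operators precisely because $\operatorname{prox}_{g+h} \neq \operatorname{prox}_{g} \circ \operatorname{prox}_{h}$ in general. A composition identity of this kind is known for the unsorted sparse-group lasso penalty, but the sorting operations in SLOPE and gSLOPE make both penalties non-separable and data-dependent, and nothing in the paper (or in your outline) licenses the factorisation on which your two ``nested thresholding events'' rely.

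The second, more fundamental problem is the folding step, which you defer but which is the entire content of the lemma. If you write the stationarity condition at an active coordinate $i \in G_g$, the group penalty contributes $(1-\alpha) w_g \sqrt{p_g}\, \hat\beta_i / \|\hat\beta^{(g)}\|_2$, which is proportional to $|\hat\beta_i|$ and can be arbitrarily close to zero; no coordinatewise lower bound of the form $\tfrac{1}{3}(1-\alpha) a_g w_g$ can be read off from the subgradient, so the KKT route cannot produce the stated threshold. The paper derives it by a different mechanism: a perturbation argument (Lemma \ref{lemma:sgs_2}) comparing $f(\hat b)$ with a feasible competitor $c$ in which coordinates $j,\dots,r$ are reduced by a small $h>0$. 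The change in each group norm is controlled by Bound (i) of Lemma \ref{lemma:norm_inequality}, namely $\|\hat b^{(g)}\|_2 - \|c^{(g)}\|_2 \geq h a_g/(3\sqrt{p_g})$ --- this norm inequality, not the subgradient, is where the constant $\tfrac{1}{3}$ and the count $a_g$ originate. Dividing by $h$ and letting $h \to 0$ yields conditions on \emph{sums} of sorted entries of $y$ (Equations (\ref{eqn:accept_condition}) and (\ref{eqn:reject_condition})), and the set equality of Lemma \ref{lemma:sgs_1} then follows by pairing the accept condition at $j=r$ (giving one inclusion, since a rejected coordinate satisfies $|y_i| \geq |y|_{(r)}$) with the reject condition at $j=r+1$ (giving the reverse inclusion). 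Without an argument of this variational type, your outline establishes neither inclusion, let alone the equality.
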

		\begin{lemma}\label{lemma:sgs_split}
			Consider applying SGS to $\tilde{y} = (y_1,\dots, y_{i-1},y_{i+1},\dots,y_p)$ with weights $\tilde{v} = (v_2, \dots, v_p)$ and $\tilde{w}$ (which is $w$ if variable $i$ is not a singleton group and $w$ with its corresponding group penalty removed if it is), and let $\tilde{R}^v$ be the number of rejections generated. Then, for $r\geq 1$ and $i \in G_g$,
			\begin{equation}
				\left\{y: |y_i|>\alpha  v_r +\frac{1}{3}(1-\alpha)  a_g w_g \; \text{and} \; R^v=r \right\} = \left\{y: |y_i|>\alpha  v_r + \frac{1}{3}(1-\alpha)  a_g w_g \; \text{and} \; \tilde{R}^v=r-1 \right\}.
			\end{equation}
	\end{lemma}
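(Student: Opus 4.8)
The plan is to prove the two sets are equal by a \emph{leave-one-out} argument, following the reduction that \citet{Bogdan2015} use for SLOPE and extending it to the group term of SGS. Both sides share the conditioning event $\{y : |y_i| > \alpha v_r + \tfrac{1}{3}(1-\alpha) a_g w_g\}$, on which Lemma \ref{lemma:sgs_1} guarantees that coordinate $i$ is a rejection precisely when the full problem yields $R^v = r$. Hence the whole statement collapses to a single claim: on this event, $R^v = r$ if and only if the reduced problem (on $\tilde y$, with variable weights $\tilde v = (v_2,\dots,v_p)$ and group weights $\tilde w$) produces $\tilde R^v = r-1$. Equivalently, I would show that the rejection set of the reduced problem equals the rejection set of the full problem with coordinate $i$ deleted.

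The heart of the argument is the \textbf{peeling property} for the sorted penalties. For the variable part this rests on the order-preserving structure of the SLOPE proximal map: the set of rejected coordinates is governed by the sorted magnitudes measured against the sorted weights, so deleting one rejected coordinate $i$ while simultaneously discarding the largest weight $v_1$ lowers the total count by exactly one without flipping any surviving coordinate. I would make this precise through the KKT/subgradient characterisation of the minimiser of Equation (\ref{eqn:sgs_convex_proof}), which is unique by strict convexity of the quadratic loss; discarding $v_1$ and reindexing the remaining weights is exactly calibrated so that, although the fitted magnitudes may shift, the rejection status of every surviving coordinate is preserved and only coordinate $i$ is lost. A short two-direction check (deleting $i$ to pass from $R^v=r$ to $\tilde R^v = r-1$, and reinserting the large-magnitude $i$ to pass back) then closes the equivalence.

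The main obstacle is the group penalty, for which the clean ``top-$r$ magnitudes'' picture of pure SLOPE no longer holds, since gSLOPE can annihilate an entire group at once. When $i$ is \emph{not} a singleton, deleting it changes both $\|\hat\beta^{(g)}\|_2$ and the group size $p_g$, and I would have to verify that the group-level subgradient condition for the remaining members of $G_g$ is undisturbed; this is exactly where the conditioning threshold is used, the term $\tfrac{1}{3}(1-\alpha) a_g w_g$ (with the constant $\tfrac13$ and the count $a_g$ inherited from Lemma \ref{lemma:sgs_1}) bounding the group's pull on coordinate $i$ so that the group stays active with an unchanged within-group rejection pattern. When $i$ \emph{is} a singleton its deletion empties $G_g$, so the matching group weight must be struck from $\tilde w$, and I would check separately that removing that single group term perturbs no other group's status. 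Confirming that the gSLOPE weight ordering is still respected after deletion in both cases is the delicate part.

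Granting the peeling property, the nonzero coordinates of the reduced solution are exactly those of $\hat\beta$ with $i$ removed; since $i$ is itself a rejection on the conditioning event, the rejection count drops by precisely one, i.e.\ $\tilde R^v = R^v - 1$. Intersecting with the common threshold event $\{|y_i| > \alpha v_r + \tfrac13(1-\alpha)a_g w_g\}$ then gives the asserted equality of the two sets.
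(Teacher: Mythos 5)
Your reduction of the lemma to a single claim (on the threshold event, $R^v=r$ if and only if $\tilde{R}^v=r-1$) is fine, but the proof of that claim is missing: the ``peeling property'' you invoke is not a known fact you can cite --- it \emph{is} the lemma, and your proposal asserts it rather than proves it. You claim that deleting coordinate $i$ while discarding $v_1$ is ``exactly calibrated'' so that no surviving coordinate flips, i.e.\ that the support of the reduced solution equals the support of $\hat{\beta}$ with $i$ removed. That support-preservation statement is strictly stronger than what the lemma asserts (which concerns only the \emph{count} of rejections), and it is far from obvious under the SGS penalty: deleting $i \in G_g$ changes $\|b^{(g)}\|_2$, changes the group size from $p_g$ to $p_g-1$ (and hence the coefficient $w_g\sqrt{p_g}$ appearing in the reduced objective in Equation (\ref{eqn:lemma_programme})), and can change the rank of $G_g$ among the groups, so the group-level optimality conditions for the surviving members of $G_g$ are genuinely perturbed. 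Precisely at this point your proposal says ``I would have to verify'' and ``this is the delicate part'' --- the hardest step is deferred, not carried out. The SLOPE part has the same problem: because the sorted-$\ell_1$ prox averages over tied blocks, rejection status is not a simple comparison of sorted magnitudes against sorted weights, so ruling out a borderline coordinate flipping requires an actual argument, not a calibration heuristic.

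For contrast, the paper's proof avoids support preservation entirely and establishes only the count, by contradiction in two halves (at least $r-1$ and at most $r-1$ nonzeros). In each half it constructs an explicit suboptimal competitor $c$ that shifts a block of coordinates by a small $h>0$ (raising zeros to $h$, or lowering positive entries by $h$), and combines the optimality inequality $g(\hat{b})-g(c)\leq 0$ with the accept/reject threshold inequalities inherited from the full problem --- Lemma \ref{lemma:sgs_2} in the form of Equation (\ref{eqn:alt_conditions}), the norm bounds of Lemmas \ref{lemma:norm_inequality} and \ref{lemma:norm_inequality_2}, and Equation (B.4) of \cite{Bogdan2015} --- to reach a contradiction as $h\rightarrow 0$. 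If you want to salvage your KKT route, you would need to actually prove the monotonicity/support-preservation property of the SGS solution under deletion of a coordinate, which is a substantially harder statement than the lemma itself; as written, the proposal assumes the conclusion at its central step.
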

 Hence, using these lemmas, we calculate 
		\begin{align}
			\mathbb{P}(H_i^v \; \text{rejected and}\; R^v=r) &\leq  \mathbb{P}\left(|y_i|>\alpha  v_r +\frac{1}{3}(1-\alpha)  a_g w_g \; \text{and} \; \tilde{R}^v=r-1\right) \\
			&=  \mathbb{P}\left(|y_i|>\alpha  v_r +\frac{1}{3}(1-\alpha)  a_g w_g\right)     \mathbb{P}\left(\tilde{R}^v=r-1\right), 
		\end{align}
		where the second step from independence of $y$ and $\tilde{y}$. Now, from the definition of $v_r$, we have, for $i \in G_g$
		
		\begin{equation}
			v_r \geq \frac{F_\mathcal{N}^{-1} \left(1-\frac{q_vr}{2p}\right) -\frac{1}{3}(1-\alpha)  a_g w_g}{\alpha } \implies 1- F_\mathcal{N}\left(\alpha v_r +\frac{1}{3}(1-\alpha)  a_g w_g\right) \leq \frac{q_vr}{2p}.
		\end{equation}
		Hence,
		\begin{align}
			\mathbb{P}\left(|y_i|>\alpha  v_r +  \frac{1}{3}(1-\alpha)  a_g w_g\right) &= \mathbb{P}\left(y_i>\alpha  v_r + \frac{1}{3}(1-\alpha)  a_g w_g\right)+    \mathbb{P}\left(y_i<-\alpha  v_r - \frac{1}{3}(1-\alpha)  a_g w_g\right) \\
			&=1-  F_\mathcal{N}\left(\alpha v_r + (1-\alpha) w_g\right)+ F_\mathcal{N}\left(-\alpha v_r -  \frac{1}{3}(1-\alpha)  a_g w_g\right) \\
			&\leq \frac{q_vr}{2p}+\frac{q_vr}{2p} =\frac{q_vr}{p},
		\end{align}
		where $F_\mathcal{N}\left(-\alpha v_r -\frac{1}{3}(1-\alpha)  a_g w_g\right) =1-F_\mathcal{N}(\alpha v_r + \frac{1}{3}(1-\alpha)  a_g w_g)  \leq \frac{q_vr}{2p}$, by the symmetry of $y_i$, as $y_i \sim\mathcal{N}(0,1)$ for $i\leq p_0$. Hence, 
		\begin{equation}
			\mathbb{P}(H_i^v \; \text{rejected and}\; R^v=r) \leq \frac{q_vr}{p}\mathbb{P}(\tilde{R}^v=r-1).
		\end{equation}
		Plugging this into Equation (\ref{eqn:var_fdr})
		\begin{align}
			\text{vFDR} &= \sum_{r=1}^{p}\frac{1}{r} \sum_{i=1}^{p_0} \mathbb{P}(H_i^v \; \text{rejected and} \; R^v=r) \\
			&\leq \sum_{r=1}^{p}\frac{1}{r} \sum_{i=1}^{p_0} \frac{q_vr}{p}\mathbb{P}(\tilde{R}^v=r-1)\\
			&=\sum_{r=1}^{p}\sum_{i=1}^{p_0} \frac{q_v}{p} \mathbb{P}(\tilde{R} = r-1)\\
			&=\sum_{r=1}^{p} \frac{q_vp_0}{p}\mathbb{P}(\tilde{R} = r-1)\\
			&=\sum_{r\geq 1} \frac{q_vp_0}{p}\mathbb{P}(\tilde{R} = r-1), \;\; \text{by Lemma \ref{lemma:sgs_1} assumption} \\
			&= \frac{q_vp_0}{p},
		\end{align}
		which concludes the proof.
	\end{proof}
 \subsubsection{Lemma proofs.}\label{appendix:lemma_proofs}
	We now provides proofs for the lemmas used.
	To prove Lemma \ref{lemma:sgs_1}, we first prove a different lemma.
	
	\begin{lemma}\label{lemma:sgs_2}
		Consider nonincreasing and nonnegative sequences $y_1 \geq \cdots \geq y_p \geq 0, v_1 \geq \cdots \geq v_p \geq 0, w_1 \geq \cdots \geq w_m \geq 0$ and let $\hat{b}$ be the solution to the problem
		\begin{align}
			\min &f(b) := \frac{1}{2} \|y - b\|_2^2 + \alpha  \sum_{i=1}^{p}v_i b_i + (1-\alpha) \sum_{g=1}^{m} w_g \sqrt{p_g} \|b^{(g)}\|_2\\
			&\text{subject to}\; b_1 \geq \dots \geq b_p \geq 0, \sqrt{p_1}\|b^{(1)}\|_2 \geq \dots \geq \sqrt{p_m}\|b^{(m)}\|_2\geq 0.
		\end{align}
		Then, if the first $r$ $\hat{b}_i$ are positive for $i\in \{1,\dots,p\}$, then for every $j\leq r$:
		\begin{equation}\label{eqn:accept_condition}
			\sum_{i=j}^{r} y_i > \alpha  \sum_{i=j}^{r} v_i + \frac{1}{3}(1-\alpha)  h \sum_{g\in \mathbb{I}_j^v}a_gw_g,
		\end{equation}
		where $\mathbb{I}_j^v = \{i\in \{1,\dots,m\} \; |\;  \exists j \in G_i \cap \{j,\dots, r\}\}$ and $a_g = |G_g \cap \{j,\dots,r\}|$. Also, for every $j\geq r+1$:
		\begin{equation}
			\sum_{i=r+1}^{j}y_i \leq \alpha   \sum_{i=r+1}^{j}v_i+(1-\alpha)   \sum_{g\in \mathbb{J}_j^v} w_g\sqrt{p_g}\sqrt{\tilde{a}_g},
		\end{equation}
		where $\mathbb{J}_j^v =\{i\in \{1,\dots,m\} \; |\;  \exists j \in G_i \cap \{r+1,\dots, j\}\}$ and $\tilde{a}_g = |G_g \cap \{r+1,\dots,j\}|$.
	\end{lemma}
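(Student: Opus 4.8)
The plan is to read both families of inequalities off the first-order (KKT) optimality conditions of the problem, which is strongly convex on the convex feasible set $C=\{b:\,b_1\ge\dots\ge b_p\ge 0,\ \sqrt{p_1}\|b^{(1)}\|_2\ge\dots\ge\sqrt{p_m}\|b^{(m)}\|_2\ge 0\}$; hence $\hat b$ is unique and characterised by $0\in(\hat b-y)+\alpha v+(1-\alpha)\,s+N_C(\hat b)$, where $v=(v_1,\dots,v_p)$, $s$ is a subgradient of $\sum_g w_g\sqrt{p_g}\|b^{(g)}\|_2$, and $N_C(\hat b)$ is the normal cone of the ordering constraints. The key structural facts I would record first are: on a group with $\|\hat b^{(g)}\|_2>0$ the group term is smooth with nonnegative entries $s_i=w_g\sqrt{p_g}\,\hat b_i/\|\hat b^{(g)}\|_2$ for $i\in G_g$, summing over the active part of the group to $w_g\sqrt{p_g}\,\|\hat b^{(g)}_{\mathrm{act}}\|_1/\|\hat b^{(g)}\|_2$; and at a group that is zero, any admissible subgradient direction has $\ell_2$-norm at most $1$. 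This is the analogue, for the sparse-group penalty, of the monotone partial-sum characterisation used for pure SLOPE in \citep{Bogdan2015}.

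For the \emph{reject} inequalities ($j\ge r+1$) I would test the feasible upward perturbation $\hat b+t e$ with $e=\sum_{i=r+1}^{j}e_i$ and small $t>0$: since $\hat b_{r+1}=\dots=\hat b_p=0$ this stays in $C$, and optimality forces the one-sided directional derivative to be nonnegative. Writing it out, the smooth part contributes $\sum_{i=r+1}^{j}(\alpha v_i-y_i)$ and the group term contributes, through the zero-group subgradient, exactly $(1-\alpha)\sum_{g\in\mathbb J_j^v}w_g\sqrt{p_g}\,\|e^{(g)}\|_2=(1-\alpha)\sum_{g\in\mathbb J_j^v}w_g\sqrt{p_g}\sqrt{\tilde a_g}$ because $e$ has unit entries on $G_g\cap\{r+1,\dots,j\}$. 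Rearranging gives the claimed bound $\sum_{i=r+1}^{j}y_i\le\alpha\sum_{i=r+1}^{j}v_i+(1-\alpha)\sum_{g\in\mathbb J_j^v}w_g\sqrt{p_g}\sqrt{\tilde a_g}$, with the inequality non-strict precisely because the tested coordinates sit on the boundary $\hat b_i=0$.

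For the \emph{accept} inequalities ($j\le r$) I would instead use the inward perturbation $\hat b-t e$ on the positive block $\{j,\dots,r\}$, feasible for small $t$ since it only decreases those coordinates. The directional derivative condition then yields $\sum_{i=j}^{r}y_i\ge\sum_{i=j}^{r}\hat b_i+\alpha\sum_{i=j}^{r}v_i+(1-\alpha)\sum_{i=j}^{r}s_i$, and the strict inequality $\sum_{i=j}^{r}y_i>\alpha\sum_{i=j}^{r}v_i+\frac13(1-\alpha)h\sum_{g\in\mathbb I_j^v}a_g w_g$ is obtained from $\sum_{i}\hat b_i>0$ together with a lower estimate of the grouped smooth-subgradient mass $\sum_{i\in G_g}s_i$ in terms of $a_g w_g$. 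Checking the $j=r$ specialisation should reproduce the single-coordinate threshold $\alpha v_r+\frac13(1-\alpha)a_g w_g$ invoked in Lemma \ref{lemma:sgs_1}.

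The main obstacle I expect is twofold. First, I must justify that the block perturbations are genuinely feasible with respect to the \emph{group}-norm ordering constraints, i.e.\ control the normal-cone multipliers $N_C(\hat b)$ so that perturbing a block does not silently violate a higher-ranked group-norm inequality; this forces a clustering argument showing that the ordering-induced blocks are compatible with the partition $\mathcal G$ and that within a cluster the monotonicity of $v$ and $w$ lets the coordinatewise stationarity conditions telescope into the stated partial sums. Second, and more delicately, the accept-side estimate relating the true optimality term $(1-\alpha)\sum_{i\in G_g}s_i$ to the coarser quantity $\frac13 h\sum_{g\in\mathbb I_j^v}a_g w_g$ is where the non-tight constant $\tfrac13$ and the auxiliary factor $h$ enter: unlike the clean $\sqrt{\tilde a_g}$ factor on the reject side (which comes directly from $\|e^{(g)}\|_2$), the accept bound requires a uniform replacement of the group gradient by a quantity that is convenient for the downstream Gaussian tail computation, and verifying that this replacement is valid for every block length and group size is the crux of the argument.
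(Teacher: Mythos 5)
Your overall strategy---testing optimality of $\hat b$ along feasible block perturbations---is in substance the paper's own: the paper compares $f(\hat b)$ with $f(c)$ for $c=\hat b \pm h e$ ($e$ the indicator of the block) and lets $h\to 0$, which is exactly your one-sided directional-derivative test. On the reject side ($j\ge r+1$) your argument is correct and coincides with the paper's second case; the reverse triangle inequality it uses, $\|\hat b^{(g)}\|_2-\|c^{(g)}\|_2\ge -h\sqrt{\tilde a_g}$, is the finite-difference form of your zero-group subgradient bound $\|e^{(g)}\|_2=\sqrt{\tilde a_g}$. (One small imprecision: for groups in $\mathbb{J}_j^v$ that are only \emph{partially} null, the directional derivative of the group norm in your direction is $0$, not $w_g\sqrt{p_g}\sqrt{\tilde a_g}$; since all these terms are nonnegative the stated inequality follows a fortiori.) Your worry about feasibility of block perturbations with respect to the group-norm ordering constraints is reasonable but is shared by the paper, whose proof simply asserts feasibility of its perturbed points; it is not what separates your attempt from the paper's proof.

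The genuine gap is on the accept side ($j\le r$), and it is exactly the step you defer as ``the crux.'' Your directional-derivative computation gives
\begin{equation*}
\sum_{i=j}^{r}y_i \;>\; \alpha\sum_{i=j}^{r}v_i \;+\;(1-\alpha)\sum_{g\in\mathbb{I}_j^v} w_g\sqrt{p_g}\,\frac{\sum_{i\in G_g\cap\{j,\dots,r\}}\hat b_i}{\|\hat b^{(g)}\|_2},
\end{equation*}
and you must then replace each exact-gradient mass by $\tfrac13 a_g w_g$ (the factor $h$ in display (\ref{eqn:accept_condition}) is a typo propagated from the paper's proof, where it should vanish after dividing by $h$; compare how the bound is invoked in Lemma \ref{lemma:sgs_1}). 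This replacement cannot be postponed, because along your exact-gradient route it is \emph{false}: take an active group $G_g$ with $\hat b^{(g)}=(T,\varepsilon)$, $\varepsilon\ll T$, and a block meeting $G_g$ only in the $\varepsilon$-coordinate, so $a_g=1$; the gradient mass is $w_g\sqrt{2}\,\varepsilon/\sqrt{T^2+\varepsilon^2}$, arbitrarily small, while $\tfrac13 a_g w_g = w_g/3$ is fixed, and no constant in place of $1/3$ repairs this. The paper never passes through the exact gradient: it stays at finite $h$ and invokes a separate norm-difference estimate, Bound (i) of Lemma \ref{lemma:norm_inequality}, namely $\|\hat b^{(g)}\|_2-\|c^{(g)}\|_2\ge h a_g/(3\sqrt{p_g})$ (Equation (\ref{eqn:norm_inequality_1})); that lemma is the ingredient missing from your proposal and is where both the $1/3$ and the stray $h$ originate. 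You should also know your suspicion about this step is well founded: the same $(T,\varepsilon)$ configuration violates that finite-$h$ bound too, and the proof of Bound (i) substitutes $\|x\|_2/\|x\|_\infty$ by $\sqrt{p}$ in a direction that only an upper bound justifies; so citing the paper's lemma to close your gap would import the difficulty rather than resolve it.
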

	\begin{proof}\label{eqn:reject_condition}
		For the first claim, consider a new feasible (but suboptimal) solution 
		\begin{equation}
			c_i = \begin{cases}
				\hat{b}_i - h, & i\in \{j,\dots,r\}\\
				\hat{b}_i, & \text{otherwise},
			\end{cases}
		\end{equation}
		where $h<\hat{b}_j$ is a small positive scalar. By optimality of $\hat{b}$, we have $f(\hat{b}) -f(c) \leq0$. Hence,
		\begin{align}
			f(\hat{b})-f(c)  &= \frac{1}{2} \|y-\hat{b}\|_2^2 - \frac{1}{2} \|y-c\|_2^2 + \alpha  \sum_{i=1}^{p} v_i (\hat{b}_i-c_i)  + (1-\alpha)  \sum_{g=1}^{m} w_g\sqrt{p_g} (\|\hat{b}^{(g)}\|_2-\|c^{(g)}\|_2 ) \\
			&= \frac{1}{2} \sum_{i=1}^{p} \left[(y_i -\hat{b}_i )^2 - (y_i - c_i)^2\right] + \alpha  \sum_{i=1}^{p} v_i (\hat{b}_i- c_i)+ (1-\alpha)  \sum_{g=1}^{m} w_g\sqrt{p_g} (\|\hat{b}^{(g)}\|_2-\|c^{(g)}\|_2).
		\end{align}
		By definition of $c$, it follows 
		\begin{equation*}
			f(\hat{b})-f(c) = \underbrace{\frac{1}{2}\sum_{i=j}^{r} \left((y_i - \hat{b}_i)^2 - (y_i - c_i)^2\right)}_{:=T_1} + \underbrace{\alpha  \sum_{i=j}^{r} v_i ( \hat{b}_i-c_i) }_{:=T_2} + \underbrace{(1-\alpha)  \sum_{g\in \mathbb{I}_j^v} w_g\sqrt{p_g} (\|\hat{b}^{(g)}\|_2-\|c^{(g)}\|_2) }_{:=T_3}.
		\end{equation*}
		Working through each term separately:
		
		\underline{$T_1$}:
		\begin{align}
			T_1 &= \frac{1}{2}\sum_{i=j}^{r} \left((y_i - \hat{b}_i)^2 - (y_i - c_i)^2\right)\\
			& = \frac{1}{2} \sum_{i=j}^{r}(y_i^2 + \hat{b}_i^2 - 2y_i \hat{b}_i - y_i^2 + 2y_i c_i - c_i^2)\\
			&= \frac{1}{2}\sum_{i=j}^{r}(\hat{b}_i^2 - c_i^2)+ \sum_{i=j}^{r} y_i(c_i - \hat{b}_i)\\
			&= \frac{1}{2}\sum_{i=j}^{r}(\hat{b}_i^2-(\hat{b}_i-h)^2)+ \sum_{i=j}^{r} y_i(\hat{b}_i-h-\hat{b}_i)\\
			&= \frac{1}{2}\sum_{i=j}^{r}(\hat{b}_i^2 -\hat{b}_i^2 -h^2 + 2\hat{b}_i h)- \sum_{i=j}^{r} y_i h\\
			&=-\frac{1}{2} \sum_{i=j}^{r}h^2 - h\sum_{i=j}^{r}(y_i - \hat{b}_i).
		\end{align}
		
		\underline{$T_2$}:
		\begin{equation}
			T_2 =\alpha  \sum_{i=j}^{r} v_i (\hat{b}_i - c_i)  = \alpha  h \sum_{i=j}^{r}v_i.
		\end{equation}
		
		\underline{$T_3$}: Here, we can apply Bound (i) from Lemma \ref{lemma:norm_inequality}. So, for $g \in \{1,\dots,m\}$
		\begin{equation}\label{eqn:norm_inequality_1}
			\|b^{(g)}\|_2 - \|c^{(g)}\|_2 \geq \frac{ha_g}{3\sqrt{p_g}}.
		\end{equation}
		Hence,
		\begin{equation}
			T_3 = (1-\alpha)  \sum_{g\in \mathbb{I}_j^v} w_g\sqrt{p_g} (\|\hat{b}^{(g)}\|_2-\|c^{(g)}\|_2)  \geq  \frac{1}{3}(1-\alpha)  h \sum_{g\in \mathbb{I}_j^v}a_gw_g.
		\end{equation}
		Combining the three terms back together, we have that 
		\begin{align}
			&0\geq  f(\hat{b})-f(c)\geq -\frac{1}{2} \sum_{i=j}^{r}h^2 - h\sum_{i=j}^{r}(y_i - \hat{b}_i) + \alpha  h \sum_{i=j}^{r}v_i +  \frac{1}{3}(1-\alpha)  h \sum_{g\in \mathbb{I}_j^v}a_g w_g  \\
			&\implies -\frac{1}{2} \sum_{i=j}^{r}h^2 - h\sum_{i=j}^{r}(y_i - \hat{b}_i) + \alpha  h \sum_{i=j}^{r}v_i +  \frac{1}{3}(1-\alpha)  h \sum_{g\in \mathbb{I}_j^v}a_gw_g\leq 0 \\
			&\implies  -h\sum_{i=j}^{r}(y_i - \hat{b}_i) + \alpha  h \sum_{i=j}^{r}v_i +  \frac{1}{3}(1-\alpha)  h \sum_{g\in \mathbb{I}_j^v}a_g w_g  \leq -\frac{1}{2} \sum_{i=j}^{r}h^2.
		\end{align}
		We divide by $h$ and then take the limit as $h\rightarrow 0$ to obtain
		\begin{equation}
			\sum_{i=j}^{r}y_i -  \sum_{i=j}^{r}\hat{b}_i - \alpha   \sum_{i=j}^{r}v_i - \frac{1}{3}(1-\alpha)  h \sum_{g\in \mathbb{I}_j^v}a_g w_g
			\geq 0.
		\end{equation}
		Now, by assumption we have $\sum_{i=j}^{r} \hat{b}_i >0$, so
		\begin{align}
			&\sum_{i=j}^{r}y_i - \alpha   \sum_{i=j}^{r}v_i - \frac{1}{3}(1-\alpha)  h \sum_{g\in \mathbb{I}_j^v}a_g w_g> 0\\
			&\implies \sum_{i=j}^{r}y_i > \alpha   \sum_{i=j}^{r}v_i +  \frac{1}{3}(1-\alpha)  h \sum_{g\in \mathbb{I}_j^v} a_g w_g,
		\end{align}
		which proves the first claim. The second case is similar, but we instead consider a solution 
		\begin{equation}
			c_i = \begin{cases}
				h, & i\in \{r+1,\dots,j\}\\
				\hat{b}_i, & \text{otherwise},
			\end{cases}
		\end{equation}
		where $0<h<\hat{b}_r$. So, as before
		\begin{equation*}
			f(\hat{b})-f(c) = \underbrace{\frac{1}{2}\sum_{i=r+1}^{j} \left((y_i - \hat{b}_i)^2 - (y_i - c_i)^2\right)}_{:=T_1} + \underbrace{\alpha  \sum_{i=r+1}^{j} v_i ( \hat{b}_i-c_i) }_{:=T_2} + \underbrace{(1-\alpha)  \sum_{g\in \mathbb{J}_j^v} w_g\sqrt{p_g} (\|\hat{b}^{(g)}\|_2-\|c^{(g)}\|_2) }_{:=T_3}.
		\end{equation*}
		As $\hat{b}_i = 0$ for $j\geq r+1$, the calculations are simpler:
  
		\underline{$T_1$}:
		\begin{align}
			T_1 &= \frac{1}{2}\sum_{i=r+1}^{j} \left((y_i - \hat{b}_i)^2 - (y_i - c_i)^2\right)\\
			& = \frac{1}{2} \sum_{i=r+1}^{j}(y_i^2  - y_i^2 + 2y_i h - h^2)\\
			&=  h\sum_{i=r+1}^{j}y_i  - \frac{1}{2}\sum_{i=r+1}^{j} h^2.\\
		\end{align}

		\underline{$T_2$}:
		\begin{equation}
			T_2 =\alpha  \sum_{i=r+1}^{j} v_i (\hat{b}_i - c_i)  = -\alpha  h \sum_{i=r+1}^{j}v_i.
		\end{equation}
		
		\underline{$T_3$}:
		Using the reverse triangle inequality, we obtain
  \begin{equation}\label{eqn:norm_inequality_2}
			\|\hat{b}^{(g)}\|_2-\|c^{(g)}\|_2  \geq -h\sqrt{\tilde{a}_g}.
		\end{equation}
		The key thing to note here is that $c_i=h$ and $b_i = 0$ for $i\in \{r+1,\dots,j\}$, so that we have replaced the zeros with a positive scalar. Therefore $\|\hat{b}^{(g)}\|_2-\|c^{(g)}\|_2  \leq 0$, so that we have not had to change sign (and therefore could not have used 0 as a tighter bound). Hence, 
		\begin{equation}
			T_3 =   (1-\alpha)  \sum_{g\in \mathbb{J}_j^v} w_g\sqrt{p_g} (\|\hat{b}^{(g)}\|_2-\|c^{(g)}\|_2)
			\geq -(1-\alpha)  h \sum_{g\in \mathbb{J}_j^v} w_g\sqrt{p_g}\sqrt{\tilde{a}_g}. 
		\end{equation}
		Therefore,
		\begin{align}
			0 \geq f(\hat{b})-f(c) \geq h\sum_{i=r+1}^{j}y_i  - \frac{1}{2}\sum_{i=r+1}^{j} h^2 -\alpha  h \sum_{i=r+1}^{j}v_i-(1-\alpha)  h \sum_{g\in \mathbb{J}_j^v}w_g\sqrt{p_g}\sqrt{\tilde{a}_g}.
		\end{align}
		Dividing by $h$ and taking the limit as $h\rightarrow 0$ gives
		\begin{equation}
			\sum_{i=r+1}^{j}y_i \leq \alpha   \sum_{i=r+1}^{j}v_i+(1-\alpha)   \sum_{g\in \mathbb{J}_j^v} w_g\sqrt{p_g}\sqrt{\tilde{a}_g},
		\end{equation}
		proving the result.
	\end{proof}
	\textit{Note}: We used Bounds (i) from Lemmas \ref{lemma:norm_inequality} and \ref{lemma:norm_inequality_2} to obtain Equations (\ref{eqn:accept_condition}) and (\ref{eqn:reject_condition}). If we instead use Bounds (ii) from these lemmas, we obtain 
	\begin{equation}\label{eqn:alt_conditions}
		\sum_{i=j}^{r} y_i > \alpha  \sum_{i=j}^{r} v_i + (1-\alpha) \sum_{g\in \mathbb{I}_j^v}\frac{w_g\sqrt{p_g} a_g}{2\|b^{(g)}\|_2},\;
		\sum_{i=r+1}^{j}y_i \leq \alpha   \sum_{i=r+1}^{j}v_i+(1-\alpha)   \sum_{g\in \mathbb{J}_j^v} \frac{w_g \sqrt{p_g}\tilde{a}_g}{2\|b^{(g)}\|_2}.
	\end{equation}
	
	These will be useful in the proof of Lemma \ref{lemma:sgs_split}.
	
	\textbf{\textit{Proof of Lemma \ref{lemma:sgs_1}}:}
	We now use Lemma \ref{lemma:sgs_2} to prove Lemma \ref{lemma:sgs_1}. Taking $j=r$ and $R=r$ in Equation (\ref{eqn:accept_condition}) and $j=r+1$ and $R=r$ in Equation (\ref{eqn:reject_condition}), we obtain the following two expressions
	\begin{equation}
		|y|_{(r)} > \alpha  v_r +\frac{1}{3}(1-\alpha)  a_g w_g \;\; \text{and} \;\; |y|_{(r+1)} \leq \alpha  v_{r+1} + (1-\alpha) w_{g'}\sqrt{p_{g'}}\sqrt{\tilde{a}_{g'}}, 
	\end{equation}
	where $\mathbb{I}_{j=r} = g$ for $r \in G_g$ and $\mathbb{I}_{j=r+1} = g'$ for $r+1 \in G_{g'}$. To prove Lemma \ref{lemma:sgs_1}, we first want to show $\{y: H_i^v \; \text{rejected and} \; R=r\} = \{y: \hat{b}_i \neq 0 \; \text{rejected and} \; R=r\} \subset\{y: |y_i|>\alpha  v_r + \frac{1}{3}(1-\alpha)  a_g w_g \; \text{and} \; R=r \}$. The first equality is by definition, so we are only proving the subset. If we fix an $i \in \{1,\dots,p\}$ and suppose $\hat{b}_i$ is nonzero (so that we reject $H_i^v$), then it must hold that $|y_i| \geq |y|_{(r)} >\alpha  v_r +\frac{1}{3}(1-\alpha)  a_g w_g,$ proving  $\{y: H_i^v \; \text{rejected and} \; R=r\} \subset\{y: |y_i|>\alpha  v_r +\frac{1}{3}(1-\alpha)  a_g w_g  \; \text{and} \; R=r \}$.
	
	Conversely, to show the other direction, assume that $|y_i|> \alpha  v_r +\frac{1}{3}(1-\alpha)  a_g w_g $ and $R=r$. Then, we must reject $H_i^v$, since $|y_i| > \alpha  v_{r+1}+\frac{1}{3}(1-\alpha)  a_{g'} w_{g'} \geq |y|_{(r+1)}$. This shows that $\{y: H_i^v \; \text{rejected and} \; R=r\} \supset \{y: |y_i|>\alpha  v_r + \frac{1}{3}(1-\alpha)  a_g w_g  \; \text{and} \; R=r \}$, proving Lemma \ref{lemma:sgs_1}.
	\pushQED{\qed} 
	\qedhere
	\popQED
	
	\textbf{\textit{Proof of Lemma \ref{lemma:sgs_split}}:}
	We first assume without loss of generality that $y\geq \boldsymbol{0}$. The solution to Equation (\ref{eqn:sgs_convex_proof}) has $r$ strictly positive values. We need to prove that if $y_1$ is rejected, then the solution to
	\begin{equation}\label{eqn:lemma_programme}
		\min_{\tilde{b}} g(\tilde{b}) := \frac{1}{2}\sum_{i=1}^{p-1}(\tilde{y}_i - \tilde{b}_i)^2 + \alpha  \sum_{i=1}^{p-1}\tilde{v}_i |\tilde{b}|_{(i)} + (1-\alpha) \sum_{g=1}^{\tilde{m}}\tilde{w}_g \sqrt{\tilde{p}_g} \|\tilde{b}^{(g)}\|_2,
	\end{equation}
	has exactly $r-1$ non-zero values. Here, the removed $y_1$ can correspond either to a singleton group, in which $\tilde{m} = m-1$ and $\tilde{w} = w$, or it is part of a larger group, such that $\tilde{m} = m-1$ and $\tilde{w} \in \mathbb{R}^{m-1}$. We need to prove that the optimal solution $\hat{b}$ to Equation (\ref{eqn:lemma_programme}) has both at least and at most $r-1$ non-zero entries. For both, we argue by contradiction by using new suboptimal solutions, as in Lemma \ref{lemma:sgs_2}.
	
	\textit{At least $r-1$ non-zero entries:}
	Suppose by contradiction that $\hat{b}$ has $j-1$ non-zero values, $j<r$. If we denote $I = \{i: \tilde{y}_{i}\geq \tilde{y}_{j} \; \text{and} \; \tilde{y}_{i}\leq\tilde{y}_{r-1} \}$, then we can denote a new suboptimal solution as
	\begin{equation}
		c_i = \begin{cases}
			h, & i\in I\\
			\hat{b}_i, & \text{otherwise},
		\end{cases}
	\end{equation}
	where $0<h<c_{(j-1)}$. By optimality, we require $g(\hat{b}) - g(c) \leq 0$. To prove the contradiction, we will show the contrary. So, denoting  $\tilde{\mathbb{I}}_j^v = \{i\in \{1,\dots,\tilde{m}\} \; |\;  \exists j \in G_i \cap I \}$, we have 
	\begin{align}
		g(\hat{b}) - g(c) &= h\sum_{i=j}^{r-1} \tilde{y}_{(i)} - \frac{1}{2} \sum_{i=j}^{r-1} h^2 - \alpha h \sum_{i=j}^{r-1} \tilde{v}_i + (1-\alpha)  \sum_{g \in \tilde{\mathbb{I}}_j^v} \tilde{w}_g \sqrt{\tilde{p}_g} (\|\hat{b}^{(g)}\|_2 - \|c^{(g)}\|_2) \\
		&\geq h\sum_{i=j}^{r-1} \tilde{y}_{(i)} - \frac{1}{2} \sum_{i=j}^{r-1} h^2 - \alpha h \sum_{i=j}^{r-1} \tilde{v}_i - (1-\alpha)  \sum_{g \in \tilde{\mathbb{I}}_j^v} \frac{\tilde{w}_g \sqrt{\tilde{p}_g}\tilde{a}_g}{2 \|\tilde{b}\|_2}, \;\; \text{by Equation (\ref{eqn:norm_inequality_2})} \\  
		&= h\sum_{i=j}^{r-1} \tilde{y}_{(i-1)} - \frac{1}{2} \sum_{i=j+1}^{r} h^2 - \alpha h \sum_{i=j+1}^{r} \tilde{v}_i - (1-\alpha)  \sum_{g \in \tilde{\mathbb{I}}_j^v} \frac{\tilde{w}_g \sqrt{\tilde{p}_g}\tilde{a}_g}{2 \|\tilde{b}\|_2}, \;\; \text{as} \; \tilde{v}_i = v_{i-1}, \\
		&\geq  h\sum_{i=j}^{r-1} \tilde{y}_{(i)} - \frac{1}{2} \sum_{i=j+1}^{r} h^2 - \alpha h \sum_{i=j+1}^{r} \tilde{v}_i - (1-\alpha)  \sum_{g \in \tilde{\mathbb{I}}_j^v} \frac{\tilde{w}_g \sqrt{\tilde{p}_g}\tilde{a}_g}{2 \|\tilde{b}\|_2}, \;\; \text{as} \; \tilde{y}_{(i-1)} \geq y_{(i)} , \\
		&\geq  h\sum_{i=j}^{r-1} \tilde{y}_{(i)} - \frac{1}{2} \sum_{i=j+1}^{r} h^2 - \alpha h \sum_{i=j+1}^{r} \tilde{v}_i - (1-\alpha)  \sum_{g \in \mathbb{I}_j^v} \frac{w_g \sqrt{p_g}a_g}{2 \|b\|_2},
	\end{align}
	where the final inequality is due to the fact that under $\tilde{\mathbb{I}}_j^v$ we have either the same number of summations as under $\mathbb{I}_j^v$, or one less. Now, by selecting $h$ small enough, we obtain $g(\hat{b}) - g(c) >0$ by Equation (\ref{eqn:alt_conditions}),
	giving the desired contradiction. 
	
	\textit{At most $r-1$ non-zero entries:} The proof here is similar. We again argue by contradiction that $\hat{b}$ has $j$ non-zero values, $j \geq r$. Denoting $I = \{i: \tilde{y}_{i}\geq \tilde{y}_{r} \; \text{and} \; \tilde{y}_{i}\leq\tilde{y}_{j} \}$, we denote a new suboptimal solution as 
	\begin{equation}
		c_i = \begin{cases}
			\hat{b}_i - h, & i\in I\\
			\hat{b}_i, & \text{otherwise},
		\end{cases}
	\end{equation}
	where $0<h<c_{(j)}$. By optimality, we require $g(\hat{b}) - g(c) \leq 0$. To prove the contradiction, we will again show the contrary. So, 
	\begin{align}
		g(\hat{b}) - g(c) &= -\frac{1}{2}\sum_{i=r}^{j} h^2 - h\sum_{i=r}^{j} (\tilde{y}_{(i)} - \hat{b}_{(i)}) + \alpha  h \sum_{i=r}^{j} \tilde{v}_i + (1-\alpha)  \sum_{g \in \mathbb{I}_j^v} \tilde{w}_g \sqrt{\tilde{p}_g}(\|b^{(g)}\|_2 - \|c^{(g)}\|_2) \\
		&\geq -\frac{1}{2}\sum_{i=r}^{j} h^2 - h\sum_{i=r}^{j} (\tilde{y}_{(i)} - \hat{b}_{(i)}) + \alpha  h \sum_{i=r}^{j} \tilde{v}_i. 
	\end{align}
	Now,  
	\begin{equation}
		\sum_{i=r}^{j} (\tilde{y}_{(i)} - \alpha  \tilde{v}_i) =\sum_{i=r+1}^{j+1} (y_{(i)} - \alpha  v_i)  \leq 0,
	\end{equation}
	by Equation (B.4) in \cite{Bogdan2015}. Hence, by selecting $h$ to be very small, we obtain $g(\hat{b}) - g(c) >0$, giving a contradiction and finishing the proof.
	\pushQED{\qed} 
	\qedhere
	\popQED
	\subsubsection{Choice of penalty sequence}
	We can use Lemma \ref{lemma:sgs_1} to define a penalty sequence for the variables. Our aim is to choose $v_r$ such that
	\begin{equation}
		\mathbb{P}(H_i^v \; \text{rejected}) =\mathbb{P}(|y_i|>\alpha  v_r +  \frac{1}{3}(1-\alpha)  a_g w_g)  \leq \frac{q_vr}{p}.
	\end{equation}
	So,
	\begin{align}
		&\mathbb{P}( |y_i|>\alpha  v_r +   \frac{1}{3}(1-\alpha)  a_g w_g ) \leq \frac{q_vr}{p}\\
		\implies&\mathbb{P}(y_i>\alpha  v_r +  \frac{1}{3}(1-\alpha)  a_g w_g + \mathbb{P}( -y_i>\alpha  v_r + (1-\alpha) w_g ) \leq \frac{q_vr}{p}.
	\end{align}
	As $y_i\sim \mathcal{N}(0,1)$, because $i\leq p_0$, we have by symmetry 
	\begin{align}
		&   \mathbb{P}(y_i>\alpha  v_r +   \frac{1}{3}(1-\alpha)  a_g w_g) \leq \frac{q_vr}{2p},\\
		&   \implies 1- F_\mathcal{N}(\alpha  v_r +\frac{1}{3}(1-\alpha)  a_g w_g) \leq \frac{q_vr}{2p},\label{eqn:strict_condition}
	\end{align}
	for each $r \in \{1,\dots,p\}$, where $F_\mathcal{N}(\cdot)$ is the standard normal cdf. Hence, we seek 
	\begin{equation}
		\alpha  v_r +  \frac{1}{3}(1-\alpha)  a_g w_g  = F_\mathcal{N}^{-1} \left(1-\frac{q_vr}{2p}\right).
	\end{equation}
	So,
	\begin{equation}
		v_r = \frac{1}{\alpha}F_\mathcal{N}^{-1} \left(1-\frac{q_vr}{2p}\right) -  \frac{1}{3\alpha}(1-\alpha)  a_g w_g, \; r\in \{1,\dots,p\}.
	\end{equation}
	However, as we have no knowledge of group $g$, we take the sequence over the maximum possible, to ensure definite FDR-control:
	\begin{equation}
		v_r^\text{max} = \max_{g=1,\dots,m} \left\{\frac{1}{\alpha}F_\mathcal{N}^{-1} \left(1-\frac{q_vr}{2p}\right) -\frac{1}{3\alpha}(1-\alpha)  a_g w_g\right\}, \; r\in \{1,\dots,p\}.
	\end{equation}
	\subsection{Group FDR proof}\label{appendix:grp_proof}
	\begin{proof}[Proof of Theorem \ref{thm:sgs_grp_fdr_proof}]
		The proof is generally very similar to that of Theorem \ref{thm:sgs_var_fdr_proof}. Using orthogonality, we can again rewrite the problem as in Equation (\ref{eqn:sgs_convex_proof}) and we again consider, without loss of generality, the scenario where $p=n$ and $y \sim \mathcal{N}(\beta, \mathbf{I}_p)$. We have that
		\begin{equation}
			\mathbb{P}(H_i^g \;  \text{rejected}) = \mathbb{P}(\|\hat{\beta}^{(i)}\|_2 \neq 0 \; | \; \exists j\in G_i \; \text{s.t.} \; \hat{\beta}_j \neq 0).
		\end{equation}
		
		We set things up so that $m_0$ $H_i^g$ hypotheses are null, i.e, $\| \beta^{(i)}\|_2 = 0$ for $i \in \zeta_g \subset \{1,\dots,m\}$. We do not assume these are the first $m_0$ hypothesis, as is done for the variable proof. This is done to ensure both results can co-occur. Further, assume that variables corresponding to the $p_0$ null variable hypothesis sit within the null groups. Hence, we can define the group FDR as
		\begin{equation}\label{eqn:grp_fdr}
			\text{gFDR} = \sum_{r=1}^{m}\frac{1}{r} \sum_{i=1}^{m_0} \mathbb{P}(H_i^g \; \text{rejected and} \; R^g=r).
		\end{equation}
		To find the key quantity, $\mathbb{P}(H_i^g \; \text{rejected and} \; R^g=r)$, we follow a similar strategy as for the variable FDR. We make use of the following two lemmas (which are proved later):
		
		\begin{lemma}\label{lemma:sgs_3_grp}
			Let $H_i^g$ be a null hypothesis, so that $i \in \zeta_g$, and let $r\geq 1$. Then, 
			\begin{equation}
				\left\{y: H_i^g \; \text{rejected and} \; R^g=r\right\} =\left\{y:  \sum_{i \in G_r} |y_i| > \alpha  \sum_{i \in G_r}v_i + (1-\alpha) w_r p_r  \; \text{and} \; R^g=r \right\}.
			\end{equation}
		\end{lemma}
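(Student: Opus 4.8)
The plan is to mirror the variable-level argument of Theorem \ref{thm:sgs_var_fdr_proof} almost verbatim, replacing coordinate-wise perturbations by group-wise ones. After the orthogonal reduction to Equation (\ref{eqn:sgs_convex_proof}), I would first establish a group analogue of Lemma \ref{lemma:sgs_2}: a matched pair of \emph{accept} and \emph{reject} inequalities that characterise, for the solution $\hat b$ of the monotone-constrained problem, exactly which groups are active. I would then specialise this analogue to the rank-$r$ and rank-$(r+1)$ groups and, finally, deduce the claimed set equality by the same two-inclusion argument used for Lemma \ref{lemma:sgs_1}.

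For the group analogue of Lemma \ref{lemma:sgs_2}, I would order the groups by $\sqrt{p_g}\|\hat b^{(g)}\|_2$ and suppose the first $r$ are active. For the accept direction I would perturb the coordinates of the groups ranked $j,\dots,r$ downward by a common small $h>0$ (taking $y\geq \boldsymbol 0$ without loss of generality), leaving all other coordinates fixed; for the reject direction I would instead switch on the coordinates of a currently-null group ranked $r+1,\dots,j$. In each case optimality of $\hat b$ gives $f(\hat b)-f(c)\le 0$, and I would expand this difference into the same three pieces $T_1,T_2,T_3$ as in Lemma \ref{lemma:sgs_2}: $T_1$ and $T_2$ follow from the identical algebra, while $T_3$ is controlled using the group norm inequalities of Lemma \ref{lemma:norm_inequality}. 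Dividing by $h$, letting $h\to 0$, and using strict positivity of the active group norms then yields, for the rank-$r$ group,
\begin{equation*}
\sum_{k\in G_r}|y_k| > \alpha\sum_{k\in G_r}v_k + (1-\alpha)w_r p_r,
\end{equation*}
together with the complementary inequality (with $\le$ and the rank-$(r+1)$ penalties) for the rank-$(r+1)$ group.

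With these two inequalities in hand, the set equality follows by the identical two-inclusion argument used for Lemma \ref{lemma:sgs_1}. For the forward inclusion I would use the accept inequality for the borderline (rank-$r$) group: rejecting $H_i^g$ while $R^g=r$ places group $i$ among the $r$ active groups, so the rank-$r$ group clears its threshold and the monotone group ordering transfers the inequality as required. For the reverse inclusion I would use the reject inequality at rank $r+1$: once the threshold condition holds and $R^g=r$, the rank-$(r+1)$ group cannot be active, which forces $H_i^g$ to be rejected. Chaining the two inclusions proves the lemma.

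The main obstacle is the group analogue of Lemma \ref{lemma:sgs_2}, and within it the term $T_3$. Because the SLOPE part of the penalty is a \emph{globally} sorted $\ell_1$ norm, perturbing a whole group can in principle reshuffle the global ranking; I would neutralise this by working to first order in $h$, where the ranks are frozen, exactly as in the variable proof. The more delicate point is extracting the clean coefficient $(1-\alpha)w_r p_r$, with no $\tfrac13$ factor in contrast to Equation (\ref{eqn:accept_condition}); this requires choosing the group perturbation so that the relevant bound on $\|\hat b^{(g)}\|_2-\|c^{(g)}\|_2$ from Lemma \ref{lemma:norm_inequality} is applied with the factor $\sqrt{p_g}$ rather than the coordinatewise constant, giving $w_r\sqrt{p_r}\cdot\sqrt{p_r}=w_r p_r$. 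I would also keep the bookkeeping of Theorem \ref{thm:sgs_grp_fdr_proof} in mind: the null groups are \emph{not} assumed to be the first $m_0$, so the ordering and independence structure must be maintained consistently with the variable argument so that both FDR results can co-occur.
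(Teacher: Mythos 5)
Your proposal follows the paper's own proof essentially verbatim: the paper likewise first proves a group analogue of Lemma \ref{lemma:sgs_2} (its Lemma \ref{lemma:sgs_4}) via the same downward/switch-on perturbations and the $T_1,T_2,T_3$ decomposition, obtains the clean coefficient $(1-\alpha)w_g p_g$ by applying Bound (i) of Lemma \ref{lemma:norm_inequality} to the \emph{whole} group (so that $w_g\sqrt{p_g}\cdot h\sqrt{p_g}=h\,w_g p_g$, with no $\tfrac13$ factor), and then specialises to ranks $r$ and $r+1$ and concludes by the same two-inclusion argument as in Lemma \ref{lemma:sgs_1}. The approach and all key technical points match; no gaps.
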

		
		\begin{lemma}\label{lemma:sgs_split_grp}
			Consider applying SGS to $\tilde{y}$, which is $y$ with the observations from group $i$ removed, with weights $\tilde{w} = (w_2, \dots, w_p)$ and $\tilde{v} = v \backslash \{v_j: j \in G_i\}$, and let $\tilde{R}^g$ be the number of rejections generated. Then, for $r\geq 1$ and $i \in G_g$,
			\begin{equation}
				\left\{y:\sum_{i \in G_r} |y_i| > \alpha  \sum_{i \in G_r}v_i + (1-\alpha) w_r p_r  \; \text{and} \; R^g=r \right\} = \left\{y:   \sum_{i \in G_r} |y_i| > \alpha  \sum_{i \in G_r}v_i + (1-\alpha) w_r p_r   \; \text{and} \; \tilde{R}^g=r-1 \right\}.
			\end{equation}
		\end{lemma}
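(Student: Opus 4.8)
The plan is to mirror the proof of Lemma \ref{lemma:sgs_split} line by line, but to carry out every operation at the level of whole groups rather than individual coordinates. First I would reduce to the case $y \geq \boldsymbol{0}$ and relabel the groups so that the deleted group $i$ is the top-ranked one; by Lemma \ref{lemma:sgs_3_grp} the event on the left is precisely the event that the full SGS solution $\hat{b}$ has exactly $r$ groups with nonzero $\ell_2$-norm while the group-sum threshold for $G_r$ is exceeded. Since the threshold condition $\sum_{i\in G_r}|y_i| > \alpha\sum_{i\in G_r}v_i + (1-\alpha)w_rp_r$ appears verbatim on both sides, the entire content of the lemma is the claim that deleting group $i$ and re-solving the reduced programme, with the shifted weights $\tilde w_g = w_{g+1}$ and $\tilde v = v\setminus\{v_j : j\in G_i\}$, produces a solution with exactly $r-1$ nonzero groups. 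I would establish this by proving \emph{at least} $r-1$ and \emph{at most} $r-1$ nonzero groups separately, each by contradiction through a feasible suboptimal perturbation, exactly as in Lemma \ref{lemma:sgs_2}.

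For the \emph{at least} direction I would suppose the reduced solution $\hat b$ had only $j-1 < r-1$ nonzero groups, and switch on the groups ranked $j,\dots,r-1$ at a common small level $h$, setting $c_i = h$ on the index set spanned by those groups and $c_i=\hat b_i$ elsewhere. Expanding $g(\hat b)-g(c)$ into its quadratic, variable-penalty, and group-penalty parts, the group term is controlled from below by the reverse triangle inequality (the group analogue of Equation (\ref{eqn:norm_inequality_2})), and after applying the index shifts $\tilde v_i = v_{i-1}$, $\tilde w_g = w_{g+1}$ together with $\tilde y_{(i-1)} \geq y_{(i)}$, the difference is bounded below by the group accept condition underlying Lemma \ref{lemma:sgs_3_grp} (the analogue of Equation (\ref{eqn:alt_conditions})). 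Choosing $h$ small enough forces $g(\hat b)-g(c)>0$, contradicting optimality. The \emph{at most} direction is symmetric: I would assume $j\geq r$ nonzero groups, shrink the groups ranked $r,\dots,j$ by $h$, and reduce the objective difference to a telescoped sum of group-threshold terms which is nonpositive by the gSLOPE monotonicity inequality of \cite{Brzyski2015} (the group counterpart of Equation (B.4) in \cite{Bogdan2015}), again yielding $g(\hat b)-g(c)>0$ for small $h$, a contradiction.

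The main obstacle, I expect, is the group-penalty bookkeeping rather than any deep idea. Because each block enters through its $\ell_2$-norm, a uniform coordinate shift of $h$ changes $\|b^{(g)}\|_2$ nonlinearly, so the sign and magnitude of $\|\hat b^{(g)}\|_2 - \|c^{(g)}\|_2$ must be tracked through the reverse triangle inequality with the correct orientation (growth versus shrinkage), exactly as the two distinct bounds were used in the two halves of Lemma \ref{lemma:sgs_2}. Compounding this, deleting group $i$ reindexes both the sorted variable weights and the sorted group weights, so one must verify that under the reduced index set $\tilde{\mathbb{I}}_j^v$ there is either the same number of active blocks as under $\mathbb{I}_j^v$ or exactly one fewer, so that the comparison with the full-problem accept and reject conditions remains valid. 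Once these inequalities line up, the two contradictions pin the reduced count at exactly $r-1$, and the independence of $\tilde y$ from the removed block delivers the claimed set equality.
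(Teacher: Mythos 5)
Your proposal takes essentially the same route as the paper: reduce (after the positivity and relabelling step) to showing the reduced programme has exactly $r-1$ non-zero groups, then prove \emph{at least} and \emph{at most} $r-1$ separately by contradiction, using the switch-on-at-level-$h$ and shrink-by-$h$ perturbations, the group accept/reject conditions of Lemma \ref{lemma:sgs_4}, and the same index-shift bookkeeping ($\tilde{w}_g = w_{g+1}$, the comparison of $\tilde{y}$-sums with $y$-sums, and the variable-penalty comparison). The only cosmetic difference is in the \emph{at most} step, where you invoke the gSLOPE counterpart of Equation (B.4) of \cite{Bogdan2015} from \cite{Brzyski2015}, whereas the paper re-uses its own reject condition (Equation (\ref{eqn:reject_condition_grp})) with the variable-penalty term bounded below by zero --- the same inequality, obtained internally rather than by citation.
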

		Hence, using these lemmas, we calculate 
		\begin{align}
			\mathbb{P}(H_i^g \; \text{rejected and}\; R^g=r) &\leq  \mathbb{P}\left( \sum_{i \in G_r} |y_i| > \alpha  \sum_{i \in G_r}v_i + (1-\alpha) w_r p_r   \; \text{and}\; \tilde{R}^g=r-1\right) \\
			&=  \mathbb{P}\left( \sum_{i \in G_r} |y_i| > \alpha  \sum_{i \in G_r}v_i + (1-\alpha) w_r p_r \right)   \mathbb{P}\left(\tilde{R}^g=r-1\right), 
		\end{align}
		where the second step follows from independence of $y$ and $\tilde{y}$. Now, from the definition of $w_g$, we have, for $i \in G_g$
		\begin{equation}
			w_g \geq \frac{F_\text{FN}^{-1}(1-\frac{q_gr}{m})-\alpha  \sum_{i \in G_r}v_i }{(1-\alpha) p_r} \implies 1- F_\text{FN}\left(\alpha  \sum_{i \in G_r}v_i + (1-\alpha) p_r w_r\right) \leq \frac{q_gr}{m}. 
		\end{equation}
		Hence,
		\begin{equation}
			\mathbb{P}\left(\sum_{i \in G_r} |y_i|> \alpha  \sum_{i \in G_r}v_i + (1-\alpha) p_r w_r\right) \leq \frac{q_gr}{m}.
		\end{equation}
		Therefore, 
		\begin{equation}
			\mathbb{P}(H_i^g \; \text{rejected and}\; R^g=r) \leq \frac{qr}{m}\mathbb{P}(\tilde{R}^g=r-1).
		\end{equation}
		Plugging this into Equation (\ref{eqn:grp_fdr})
		\begin{equation}
			\text{gFDR} = \sum_{r=1}^{m}\frac{1}{r} \sum_{i=1}^{m_0} \mathbb{P}(H_i^g \; \text{rejected and} \; R^g=r) \leq \frac{q_gm_0}{m},
		\end{equation}
		which concludes the proof.
	\end{proof}
	We now provide the proofs for the lemmas. To prove Lemma \ref{lemma:sgs_3_grp}, we first prove a different lemma.
	\begin{lemma}\label{lemma:sgs_4}
		Consider nonincreasing and nonnegative sequences $  \sum_{i \in G_1} y_i \geq \cdots \geq \sum_{i \in G_m} y_i \geq 0, v_1 \geq \cdots \geq v_p \geq 0, w_1 \geq \cdots \geq w_m \geq 0$ and let $\hat{b}$ be the solution to the problem
		\begin{align}
			\min &f(b) := \frac{1}{2} \|y - b\|_2^2 + \alpha  \sum_{i=1}^{p}v_i b_i + (1-\alpha) \sum_{g=1}^{m} w_g \sqrt{p_g} \|b^{(g)}\|_2\\
			&\text{subject to}\; b_1 \geq \dots \geq b_p \geq 0, \sqrt{p_1}\|b^{(1)}\|_2 \geq \dots \geq \sqrt{p_m}\|b^{(m)}\|_2\geq 0.
		\end{align}
		Then, if there are exactly $r$ non-zero $\|\hat{b}^{(i)}\|_2$ for $i\in \{1,\dots,m\}$, then for every $j\leq r$:
		\begin{equation}\label{eqn:accept_condition_grp}
			\sum_{i\in \mathbb{I}_j^g} y_i > \alpha   \sum_{i\in \mathbb{I}_j^g}v_i + (1-\alpha)   \sum_{g=j}^r w_g p_g,
		\end{equation}
		where $\mathbb{I}_j^g = G_j \cup \dots \cup G_r \subset \{1,\dots,p\}$ and for every $j\geq r+1$:
		\begin{equation}\label{eqn:reject_condition_grp}
			\sum_{i\in \mathbb{J}_j^g} y_i \leq \alpha   \sum_{i\in \mathbb{J}_j^g}v_i + (1-\alpha)   \sum_{g=r+1}^{j} w_g p_g,
		\end{equation}
		where $\mathbb{J}_j^g = G_{r+1} \cup \dots \cup G_j \subset \{1,\dots,p\}$.
	\end{lemma}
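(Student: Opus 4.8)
The plan is to transcribe the proof of Lemma~\ref{lemma:sgs_2} from the variable level to the group level, trading the coordinate-wise perturbations used there for block (whole-group) perturbations. The hypotheses already supply the nonnegativity and the group ordering $\sum_{i\in G_1}y_i \ge \dots \ge \sum_{i\in G_m}y_i \ge 0$, so both monotonicity constraints of the program are compatible with perturbing a contiguous range of the active groups.

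For the accept condition (every $j \le r$), I would suppose exactly $r$ of the norms $\|\hat b^{(g)}\|_2$ are nonzero and form a feasible competitor $c$ by lowering every coordinate in the blocks $\mathbb{I}_j^g = G_j \cup \dots \cup G_r$ by a small $h>0$, leaving the remaining coordinates untouched; for $h$ small the global ordering $b_1 \ge \dots \ge b_p \ge 0$ and the group ordering survive because only the lower-ranked active groups are moved. Optimality of $\hat b$ gives $f(\hat b)-f(c) \le 0$, and I would split this into the three pieces $T_1$ (quadratic), $T_2$ (sorted $\ell_1$), $T_3$ (group norm) exactly as in Lemma~\ref{lemma:sgs_2}. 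Here $T_1$ and $T_2$ are computed verbatim, producing $-\tfrac12\sum h^2 - h\sum_{i\in\mathbb{I}_j^g}(y_i-\hat b_i)$ and $\alpha h\sum_{i\in\mathbb{I}_j^g}v_i$. For $T_3$ I would bound each group-norm decrease from below using the norm inequality of Lemma~\ref{lemma:norm_inequality}, now with $a_g = p_g$ since whole groups are perturbed. Dividing by $h$, letting $h \to 0$, and discarding the strictly positive term $\sum_{i\in\mathbb{I}_j^g}\hat b_i$ then yields inequality~(\ref{eqn:accept_condition_grp}).

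The reject condition (every $j \ge r+1$) is the mirror argument. The groups $G_{r+1},\dots,G_j$ all vanish, so I would raise their coordinates to a small positive $h$ and leave the rest fixed; the group-norm change is then controlled from below by the reverse-triangle estimate of Equation~(\ref{eqn:norm_inequality_2}), just as in the reject half of Lemma~\ref{lemma:sgs_2}, and the same divide-by-$h$, $h\to0$ passage in $f(\hat b)-f(c)\le0$ delivers inequality~(\ref{eqn:reject_condition_grp}). Once both conditions are in hand, the single-group thresholds of Lemma~\ref{lemma:sgs_3_grp} follow by specialising to $j=r$ and $j=r+1$, exactly as Lemma~\ref{lemma:sgs_1} was extracted from its variable-level predecessor.

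The main obstacle is the group-penalty term $T_3$ and, specifically, recovering the coefficient $p_g$ in the claim. A block perturbation by $-h\mathbf 1$ changes $\|\hat b^{(g)}\|_2$ at rate $\langle \hat b^{(g)},\mathbf 1\rangle/\|\hat b^{(g)}\|_2 = (\sum_{i\in G_g}\hat b_i)/\|\hat b^{(g)}\|_2$, which Cauchy--Schwarz bounds by $\sqrt{p_g}$; multiplying by the $\sqrt{p_g}$ in the penalty is what manufactures the $p_g$ factor, and choosing between this estimate and the tighter radial one is the analogue of the Bound~(i) versus Bound~(ii) choice made in Lemma~\ref{lemma:sgs_2} and recorded in Equation~(\ref{eqn:alt_conditions}). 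Because both the sorted $\ell_1$ penalty and the group penalty are non-separable, the delicate point throughout is to confirm that displacing an entire block neither reorders the coordinates nor the group norms in a way that corrupts the weight-to-position matching; keeping $h$ infinitesimal and invoking Lemma~\ref{lemma:norm_inequality} block-by-block is what I expect to control this.
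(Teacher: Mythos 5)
Your plan reproduces the paper's proof architecture exactly: the same whole-block perturbations (lower the active groups $G_j,\dots,G_r$ by $h$ for the accept condition, raise the zero groups $G_{r+1},\dots,G_j$ to $h$ for the reject condition), the same $T_1/T_2/T_3$ decomposition of $f(\hat b)-f(c)\le 0$, and the same divide-by-$h$, $h\to 0$ passage. Your reject half is sound and is precisely what the paper does: for a zero group, $\|\hat b^{(g)}\|_2-\|c^{(g)}\|_2=-h\sqrt{p_g}$ exactly, and the penalty factor $w_g\sqrt{p_g}$ turns this into the $(1-\alpha)\sum_{g=r+1}^{j}w_g p_g$ term of Equation (\ref{eqn:reject_condition_grp}).

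The gap is in the accept condition, at precisely the step you single out as the main obstacle. To obtain Equation (\ref{eqn:accept_condition_grp}) with the coefficient $(1-\alpha)\sum_{g=j}^{r}w_g p_g$, you need a \emph{lower} bound $\|\hat b^{(g)}\|_2-\|c^{(g)}\|_2\ge h\sqrt{p_g}$, since $T_3$ must be floored before it can be carried to the right-hand side of the inequality. But the estimate you give runs the other way: Cauchy--Schwarz says $\|\hat b^{(g)}\|_1\le\sqrt{p_g}\,\|\hat b^{(g)}\|_2$, so the rate of decrease $\|\hat b^{(g)}\|_1/\|\hat b^{(g)}\|_2$ is \emph{at most} $\sqrt{p_g}$, i.e.\ the block perturbation decreases the norm by at most $h\sqrt{p_g}$, with equality only when all coordinates of the group are equal. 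An upper bound on the decrease cannot floor $T_3$; at best it gives $T_3\ge 0$ and hence an accept condition with no group term at all. Your fallback, Lemma \ref{lemma:norm_inequality} ``with $a_g=p_g$'', does not rescue this either: the form of that bound actually used in Lemma \ref{lemma:sgs_2}, namely $h a_g/(3\sqrt{p_g})$, gives only $T_3\ge\tfrac{1}{3}(1-\alpha)h\sum_{g=j}^{r}w_g p_g$, hence an accept condition carrying a spurious factor $\tfrac{1}{3}$, strictly weaker than the lemma's claim. What the paper invokes is the unrelaxed form of Bound (i), $\|x\|_2-\|y\|_2\ge h\sqrt{p}-h\sqrt{p-m}$, with every coordinate of the group perturbed ($m=p=p_g$), which collapses to exactly $h\sqrt{p_g}$; that citation is the step that manufactures the constant, and your proposal never produces it. Indeed, your own Cauchy--Schwarz observation shows the needed bound can hold only with equality, so this step is a delicate point of the argument rather than a routine transcription, and any write-up must confront it explicitly.
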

	\begin{proof}
		Consider a new feasible solution (but suboptimal) solution
		\begin{equation}
			c_i = \begin{cases}
				\hat{b}_i - h, & i\in \mathbb{I}_j^g\\
				\hat{b}_i, & \text{otherwise},
			\end{cases}
		\end{equation}
		where $h$ is a small positive scalar. By optimality, we have $f(\hat{b}) -f(c) \leq0$. Hence, as before
		\begin{equation}
			f(\hat{b}) - f(c) = \frac{1}{2} \sum_{i=1}^{p} \left[(y_i - \hat{b}_i)^2 - (y_i - c)^2\right] + \alpha  \sum_{i=1}^{p} v_i (\hat{b}_i - c_i)+ (1-\alpha)  \sum_{g=1}^{m} w_g\sqrt{p_g} (\|\hat{b}^{(g)}\|_2 -\|c^{(g)}\|_2).
		\end{equation}
		By definition of $c$, it follows
		\begin{equation*}
			f(\hat{b}) - f(c) = \frac{1}{2}\sum_{i\in \mathbb{I}_j^g} \left((y_i - \hat{b}_i)^2 - (y_i - c)^2\right) + \alpha  \sum_{i\in \mathbb{I}_j^g} v_i (\hat{b}_i - c_i)  + (1-\alpha)  \sum_{g=j}^r w_g\sqrt{p_g} (\|\hat{b}^{(g)}\|_2 -\|c^{(g)}\|_2).
		\end{equation*}
		The first two terms are as in Lemma \ref{lemma:sgs_2}, but with different summation indices. Hence, we provide calculations only for the final term. For this term, we make use of Bound (i) in Lemma \ref{lemma:norm_inequality} with $m=p$, so that
		\begin{equation}
			\|\hat{b}^{(g)}\|_2 -\|c^{(g)}\|_2 \geq h \sqrt{p_g}.
		\end{equation}
		Therefore,
		\begin{align}
			T_3 &= (1-\alpha)  \sum_{g=j}^r w_g\sqrt{p_g} (\|\hat{b}^{(g)}\|_2 -\|c^{(g)}\|_2)  \\
			&\geq (1-\alpha)  h \sum_{g=j}^r w_g p_g.
		\end{align}
		Combining the three terms back together, we have that 
		\begin{equation}
			0\geq   f(\hat{b})-f(c)\geq  -\frac{1}{2}\sum_{i\in \mathbb{I}_j^g} h^2 - h \sum_{i\in \mathbb{I}^g_j} (y_i - \hat{b_i})  +\alpha  h \sum_{i\in \mathbb{I}^g_j}v_i +(1-\alpha)  h \sum_{g=j}^r w_g p_g.
		\end{equation}
		We divide by $h$ and then take the limit as $h\rightarrow 0$ to obtain
		\begin{equation}
			\sum_{i\in \mathbb{I}_j^g}y_i -  \sum_{g\in \mathbb{I}_j^g}\hat{b}_i - \alpha   \sum_{i\in \mathbb{I}_j^g}v_i-(1-\alpha)   \sum_{g=j}^r w_g p_g   \geq 0.
		\end{equation}
		Now, by assumption we have $\sum_{i\in \mathbb{I}_j^g} \hat{b}_i >0$, so
		\begin{align}
			&\sum_{i\in \mathbb{I}_j^g} y_i - \alpha   \sum_{i\in \mathbb{I}_j^g} v_i - (1-\alpha)   \sum_{g=j}^r w_g p_g > 0\\
			&\implies  \sum_{i\in \mathbb{I}_j^g} y_i > \alpha   \sum_{i\in \mathbb{I}_j^g}v_i + (1-\alpha)   \sum_{g=j}^r w_g p_g,
		\end{align}
		which proves the first claim. The second case is similar, but we instead consider a solution 
		\begin{equation}
			c = \begin{cases}
				h, & i\in \mathbb{J}_j^g\\
				\hat{b}_i, & \text{otherwise}.
			\end{cases}
		\end{equation}
		The calculation is the same as in Lemma \ref{lemma:sgs_2}, but with different indices and $\sqrt{\tilde{a}_g}$ replaced by $\sqrt{p_g}$. Hence, we obtain
		\begin{equation}
			\sum_{i\in \mathbb{J}_j^g} y_i \leq \alpha   \sum_{i\in \mathbb{J}_j^g}v_i + (1-\alpha)   \sum_{g=r+1}^{j} w_g p_g,
		\end{equation}
  proving the result.
	\end{proof}
	\textbf{\textit{Proof of Lemma \ref{lemma:sgs_3_grp}:}}
	We now use Lemma \ref{lemma:sgs_4} to prove Lemma \ref{lemma:sgs_3_grp}. Taking $j=r$ and $R=r$ in Equation (\ref{eqn:accept_condition_grp}) and $j=r+1$ and $R=r$ in Equation (\ref{eqn:reject_condition_grp}), we obtain the following two expressions
	\begin{equation}
		\sum_{i \in G_r} |y_i| > \alpha  \sum_{i \in G_r}v_i + (1-\alpha) w_r p_r  \;\; \text{and} \;\;   \sum_{i \in G_{r+1}} |y_i| \leq \alpha  \sum_{i \in G_{r+1}} v_i+ (1-\alpha) w_{r+1} p_{r+1}.
	\end{equation}
	We first want to show $\{y: H_i^g \; \text{rejected and} \; R=r\} \subset\{y:   \sum_{i \in G_r}|y_i| > \alpha  \sum_{i \in G_r}v_i + (1-\alpha) w_r p_r \; \text{and} \; R=r \}$. If we fix a group, $i \in \{1,\dots,m\}$, and suppose $\|\hat{b}\|_2 \neq 0$, then $\sum_{i \in G_1} |y_i| \geq \sum_{i \in G_r} |y|_{(i)} > \alpha  \sum_{i \in G_r}v_i + (1-\alpha) w_r p_r ,$ proving $\{y: H_i^g \; \text{rejected and} \; R=r\} \subset\{y: \sum_{i \in G_r}|y_i| > \alpha  \sum_{i \in G_r}v_i + (1-\alpha) w_r p_r \; \text{and} \; R=r \}$.
	
	To show the other direction, assume that $\sum_{i \in G_r}|y_i| > \alpha  \sum_{i \in G_r}v_i + (1-\alpha) w_r p_r$ and $R=r$. Then, we must reject $H_i^g$, since $\sum_{i \in G_r}|y_i| >\sum_{i \in G_{r+1}}|y_i|$. This shows that $\{y: H_i^g \; \text{rejected and} \; R=r\} \supset \{y:   \sum_{i \in G_r} |y_i| > \alpha  \sum_{i \in G_r}v_i + (1-\alpha) w_r p_r   \; \text{and} \; R=r \}$, proving Lemma \ref{lemma:sgs_3_grp}.
	
	\textbf{\textit{Proof of Lemma \ref{lemma:sgs_split_grp}:}} Assume without loss of generality that $y > 0$. The solution to Equation (\ref{eqn:sgs_convex_proof}) has $r$ non-zero groups. We aim to prove that if $\{y_i: i \in G_1\}$ is rejected, then the solution to
	\begin{equation}
		\min_{\tilde{b}} g(\tilde{b}) := \frac{1}{2}\sum_{i \in \mathcal{G}\backslash G_1} (\tilde{y}_i - \tilde{b}_i)^2 + \alpha  \sum_{i \in \mathcal{G}\backslash G_1} \tilde{v}_i |\tilde{b}|_{i} + (1-\alpha) \sum_{g=1}^{m-1} \tilde{w}_g \sqrt{\tilde{p}_g} \|\tilde{b}^{(g)}\|_2,
	\end{equation}
	has exactly $r-1$ non-zero groups. To prove this, we will prove it has at least and at most $r-1$ non-zero groups. We again use proof by contradiction, as in Lemma \ref{lemma:sgs_split}.

	\textit{At least $r-1$ non-zero groups:} Suppose by contradiction that $\hat{b}$ has $j-1$ non-zero groups, for $j<r$. Let $I = \{G_g: \sum_{i \in G_g}\tilde{y}_i \geq \sum_{i \in G_j}\tilde{y}_i \; \text{and} \; \sum_{i \in G_g}\tilde{y}_i \leq \sum_{i \in G_{r-1}}\tilde{y}_i \}.$ Denoting a new suboptimal solution as
	\begin{equation}
		c_i = \begin{cases}
			h, & i\in I\\
			\hat{b}_i, & \text{otherwise},
		\end{cases}
	\end{equation}
	where $0< h< \min\{G_{j-1}\}.$ By optimality, we should have $g(\hat{b}) - g(c) \leq 0$. However, 
	\begin{equation}
		g(\hat{b}) - g(c) \geq h\sum_{i \in I}\tilde{y}_i - \frac{1}{2}\sum_{i \in I}h^2 - \alpha  h \sum_{i \in I}\tilde{v}_i - (1-\alpha) h \sum_{g=j}^{r-1}\tilde{p}_g \tilde{w}_g.
	\end{equation}
	The proof here is very similar to that of Lemma \ref{lemma:sgs_split}, so we will only observe the following three facts 
	\begin{itemize}
		\item $\sum_{i \in I} \tilde{y}_i \geq \sum_{i \in I'} y_i,$ where $I' =  \{G_g: \sum_{i \in G_g}y_i \geq \sum_{i \in G_{j+1}}y_i \; \text{and} \; \sum_{i \in G_g}y_i \leq \sum_{i \in G_{r}}y_i\}.$
		\item $\tilde{w}_{i} = w_{i+1}$, by design.
		\item $\alpha  \sum_{i \in I} \tilde{v}_i \leq \alpha  \sum_{i \in I'} v_i$, as we are summing over more penalty terms in the latter.
	\end{itemize}
	Using these, and by setting $h$ very small, we can apply Equation (\ref{eqn:accept_condition_grp}) to show that $g(\hat{b}) - g(c) > 0$, which is a contradiction, so that we must have at least $r-1$ non-zero groups.
	
	\textit{At most $r-1$ non-zero groups:} Suppose by contradiction that $\hat{b}$ has $j$ non-zero groups, with $j\geq r$. We again define an indicator set $I = \{G_g: \sum_{i \in G_g}\tilde{y}_i \geq \sum_{i \in G_r}\tilde{y}_i \; \text{and} \; \sum_{i \in G_g}\tilde{y}_i \leq \sum_{i \in G_{j}}\tilde{y}_i \}$, and a new suboptimal solution
	\begin{equation}
		c_i = \begin{cases}
			\hat{b}_i - h, & i\in I\\
			\hat{b}_i, & \text{otherwise},
		\end{cases}
	\end{equation}
	with $0<h<\min\{G_j\}$.
	Now,
	\begin{align}
		g(\hat{b}) - g(c) &> -\frac{1}{2} \sum_{i \in I}h^2 - h\sum_{i \in I} \tilde{y}_i + \alpha  h \sum_{i \in I} \tilde{v}_i + (1-\alpha) h \sum_{g=r}^{j} \tilde{w}_g\tilde{p}_g \\
		& \geq  -\frac{1}{2} \sum_{i \in I}h^2 - h\sum_{i \in I} \tilde{y}_i  + (1-\alpha) h \sum_{g=r}^{j} \tilde{w}_g\tilde{p}_g \\
		& =  -\frac{1}{2} \sum_{i \in I}h^2 - h\sum_{i \in I'} y_i  + (1-\alpha) h \sum_{g=r+1}^{j+1} w_g p_g, \;\; \text{by definition},
	\end{align}
	where $I' =  \{G_g: \sum_{i \in G_g}y_i \geq \sum_{i \in G_{r+1}}y_i \; \text{and} \; \sum_{i \in G_g}y_i \leq \sum_{i \in G_{j+1}}y_i\}$. Now, by looking at the proof of Lemma \ref{lemma:sgs_4}, we see that we can bound $\alpha  \sum_{i\in \mathbb{I}^g_j}v_i\geq 0$, so that instead of obtaining Equation (\ref{eqn:accept_condition_grp}), we have
	\begin{equation}
		\sum_{i\in \mathbb{I}_j^g} y_i - (1-\alpha)   \sum_{g=r+1}^j w_g p_g\leq 0.
	\end{equation}
	By picking $h$ to be very small, from this we see that we must have $g(\hat{b}) - g(c) > 0$, which is a contradiction, so that the solution has at most $r-1$ non-zero groups, proving the lemma.
 	\pushQED{\qed} 
	\qedhere
	\popQED
	\subsubsection{Choice of penalty sequence}
	We can now use Lemma \ref{lemma:sgs_3_grp} to define a penalty sequence for the groups. Our aim is to choose $w_r$ such that
	\begin{equation}
		\mathbb{P}(H_i^g \; \text{rejected}) =\mathbb{P}\left(  \sum_{i \in G_r} |y_i| > \alpha  \sum_{i \in G_r}v_i + (1-\alpha) w_r p_r\right) \leq \frac{q_gr}{p}.
	\end{equation}
	For $r \in \{1,\dots,m\}$, this is given by
	\begin{equation}
		w_r = \frac{F_\text{FN}^{-1}(1-\frac{q_gr}{p})-\alpha  \sum_{i \in G_r}v_i }{(1-\alpha) p_r}.
	\end{equation}
 Again, taking the maximum gives
	\begin{equation}
		w_r =\max_{g=1,\dots,m}\left\{\frac{F_\text{FN}^{-1}(1-\frac{q_gr}{p})-\alpha  \sum_{i \in G_r}v_i }{(1-\alpha) p_g}\right\}.
	\end{equation}
 \newpage
	\subsection{Norm results}
	\begin{lemma}\label{lemma:norm_inequality}
		For a vector $x \in \mathbb{R}^+$, $p>0$, suppose we create another vector 
		\begin{equation}
			y = \begin{cases}
				x_i - h, & i\in M\\
				x_i, & \text{otherwise},
			\end{cases}
		\end{equation}
		where $0<h<\min_{i \in M} x_i, M \subset \{1,\dots,p\}, |M|=m>0$. Then, the following two bounds hold
		\begin{align}
			&(i) \; \; \|x\|_2 - \|y\|_2 \geq h\sqrt{p} - h\sqrt{p-m}\geq \frac{hm}{3\sqrt{p}} \geq 0. \\
			&(ii) \; \; \|x\|_2 - \|y\|_2 \geq \frac{h}{2} \frac{\|x\|_{1,M}}{\|x\|_2} \geq 0.
		\end{align}
		For Bound (ii), we require a slightly stronger assumption on $h$; that is, $0<h\leq  \bar{x}/2$. 
	\end{lemma}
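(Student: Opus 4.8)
The plan is to derive both bounds from a single algebraic identity. Writing $\|x\|_{1,M} := \sum_{i\in M} x_i$ for the $\ell_1$ mass carried by the reduced coordinates, I would first observe that $x$ and $y$ agree off $M$, so
\[
\|x\|_2^2 - \|y\|_2^2 = \sum_{i\in M}\bigl(x_i^2-(x_i-h)^2\bigr) = \sum_{i\in M}(2hx_i - h^2) = h\bigl(2\|x\|_{1,M}-mh\bigr).
\]
Since $0<h<\min_{i\in M}x_i$ we have $mh<\|x\|_{1,M}$, so the right-hand side is strictly positive and in particular $\|y\|_2\le\|x\|_2$. Factoring the difference of squares then gives the master identity
\[
\|x\|_2-\|y\|_2 = \frac{h\bigl(2\|x\|_{1,M}-mh\bigr)}{\|x\|_2+\|y\|_2} > 0,
\]
from which the two trailing positivity claims ($\ge 0$) are immediate.

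Bound (ii) then falls out by estimating numerator and denominator separately. As $\|y\|_2\le\|x\|_2$, the denominator is at most $2\|x\|_2$; and because $mh<\|x\|_{1,M}$ (which the hypothesis $h\le\bar x/2$ guarantees with room to spare), the numerator satisfies $2\|x\|_{1,M}-mh\ge\|x\|_{1,M}$. Substituting both estimates into the master identity yields
\[
\|x\|_2-\|y\|_2 \ge \frac{h\,\|x\|_{1,M}}{2\|x\|_2} = \frac{h}{2}\,\frac{\|x\|_{1,M}}{\|x\|_2},
\]
which is exactly Bound (ii). This half is routine.

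For Bound (i) the \emph{second} inequality is elementary: rationalising gives $\sqrt p-\sqrt{p-m} = m/(\sqrt p+\sqrt{p-m}) \ge m/(2\sqrt p)$, so $h(\sqrt p-\sqrt{p-m}) \ge hm/(2\sqrt p) \ge hm/(3\sqrt p)\ge 0$, and I would dispatch it this way. The \emph{first} inequality, $\|x\|_2-\|y\|_2 \ge h\sqrt p-h\sqrt{p-m}$, is where I expect the real obstacle to lie. The natural route is a worst-case argument, since $h\sqrt p-h\sqrt{p-m}$ is precisely the value of $\|x\|_2-\|y\|_2$ at the boundary configuration $x=h\mathbf{1}$; one would then want the uniform vector to \emph{minimise} the norm decrease over admissible $x$. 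The difficulty is that the master identity shows the decrease behaves like $h\|x\|_{1,M}/(\|x\|_2+\|y\|_2)$, which shrinks as the unreduced coordinates inflate $\|x\|_2$, while the reverse triangle inequality only supplies the opposite-direction estimate $\|x\|_2-\|y\|_2\le h\sqrt m$ and no positive lower bound independent of $x$. I would therefore expect the first inequality of Bound (i) to require an extra hypothesis controlling the unreduced coordinates (for instance a normalisation of $\|x\|_2$, or restriction to near-uniform $x$), and I would flag pinning down the precise admissible setting in which it holds as the crux of the lemma.
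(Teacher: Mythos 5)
Your master identity route is correct in everything it proves, and for Bound (ii) it is genuinely cleaner than the paper's argument: the paper writes $\|y\|_2=\|x\|_2(1+t)^{1/2}$ with $t=\frac{h}{\|x\|_2^2}\bigl(mh-2\|x\|_{1,M}\bigr)$ and applies the tangent-line (Bernoulli) bound $(1+t)^{1/2}\le 1+t/2$, which is why it introduces the hypothesis $h\le\bar{x}/2$ as a sufficient condition for $t\ge-1$; your difference-of-squares factorisation reaches the same inequality using only $mh\le\|x\|_{1,M}$, which already follows from $h<\min_{i\in M}x_i$, so the ``slightly stronger assumption'' in the statement is in fact unnecessary. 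Your treatment of the second inequality in Bound (i) also matches the paper's rationalisation step (the paper bounds the denominator $\sqrt{p}+\sqrt{p-m}$ by $3\sqrt{p}$; your $2\sqrt{p}$ is sharper).

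The most important part of your answer is the refusal to prove the first inequality of Bound (i), and your suspicion is justified: that inequality is false as stated, and the paper's own proof of it contains a direction error. Concretely, take $p=2$, $M=\{1\}$, $x=(2h,T)$, so that $y=(h,T)$; then
\begin{equation*}
\|x\|_2-\|y\|_2=\frac{3h^2}{\sqrt{4h^2+T^2}+\sqrt{h^2+T^2}}\longrightarrow 0\qquad\text{as } T\to\infty,
\end{equation*}
while the claimed lower bound $h\sqrt{2}-h$ is a fixed positive number --- precisely the inflation phenomenon your master identity predicts. In the paper's proof, the intermediate estimate $\|x\|_2-\|y\|_2\ge h\bigl(\|x\|_2/\|x\|_\infty-\sqrt{p-m}\bigr)$ is followed by replacing $\|x\|_2/\|x\|_\infty$ with $\sqrt{p}$ on the grounds that $\|x\|_\infty\ge p^{-1/2}\|x\|_2$; but that inequality gives $\|x\|_2/\|x\|_\infty\le\sqrt{p}$ (equality only for constant vectors), so the substitution goes the wrong way and the displayed conclusion does not follow. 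Your diagnosis --- that the bound holds only near the uniform configuration and in general requires extra control of the coordinates outside $M$, or a restatement with $\|x\|_2/\|x\|_\infty$ in place of $\sqrt{p}$ --- is exactly right. Note that this is not cosmetic for the paper: Bound (i) is what yields Equation (\ref{eqn:norm_inequality_1}) in the proof of Lemma \ref{lemma:sgs_2}, and hence the $\frac{1}{3}(1-\alpha)a_gw_g$ terms in the penalty sequence of Theorem \ref{thm:sgs_var_fdr_proof}, so the gap propagates into the variable FDR-control argument.
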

	\begin{proof}
		For Bound (i): For $i \notin M$, we rewrite $y_i = x_i - h + h$ and denote vectors $\tilde{h} \in \mathbb{R}^p$, where $\tilde{h}_i = h, \forall i$, and $\eta \in \mathbb{R}^p$ such that $\eta_i = h$ for $i \notin M$ and $0$ otherwise. Then, we can rewrite $y$ as $y = x - \tilde{h} + \eta$.
		Using the triangle inequality, we have
		\begin{equation}
			\|y\|_2 = \|x - \tilde{h} + \eta\|_2 \leq  \|x - \tilde{h} \|_2 - \|\eta\|_2.
		\end{equation}
		Therefore,
		\begin{equation}
			\|x\|_2 - \|y\|_2 \geq \|x\|_2 - \|x-\tilde{h}\| - \|\eta\|_2.
		\end{equation}
		For any $i \in \{1,\dots,p\}$, it holds $x_i-\tilde{h}_i = x_i(1-h/x_i) \leq x_i (1-h/\max_{i \in \{1,\dots,p\}}x_i) =x_i (1-h/\|x\|_{\infty})$, so that  $\|x - \tilde{h} \|_2 \leq \|x (1-h/\|x\|_{\infty})\|_2$. Therefore,
		\begin{equation}
			\|x\|_2 - \|x-\tilde{h}\| + \|\eta\|_2 \geq \|x\|_2 - \|x(1-h/\|x\|_{\infty})\|_2 - \|\eta\|_2.
		\end{equation}
		Now, as $(1-h/\|x\|_{\infty})\in \mathbb{R}$ is a real scalar, the reverse triangle inequality becomes an equality, so that
		\begin{align}
			\|x\|_2 - \|x(1-h/\|x\|_{\infty})\|_2 &= \|x - x(1-h/\|x\|_{\infty})\|_2 \\
			&= \|x h/\|x\|_{\infty}\|_2 \\
			&= \frac{h\|x\|_2}{\|x\|_{\infty}}.
		\end{align}
		Additionally, we have $\|\eta\|_2 = \sqrt{\sum_{i\notin M}h^2} = \sqrt{(p-m)h^2} = h \sqrt{p-m}$. Therefore, 
		\begin{align}
			\|x\|_2 - \|y\|_2 &\geq \frac{h\|x\|_2}{\|x\|_{\infty}}- h\sqrt{p-m} \\
			&=h\left(\frac{\|x\|_2}{\|x\|_{\infty}} -\sqrt{p-m} \right).
		\end{align}
		As $\|x\|_\infty \geq p^{-1/2} \|x\|_2$, it follows
		\begin{equation}
			\|x\|_2 - \|y\|_2 \geq h\left(\frac{\|x\|_2}{p^{-1/2} \|x\|_2} -\sqrt{p-m} \right) = h\sqrt{p} - h\sqrt{p-m}.
		\end{equation}
		Rewriting $\sqrt{p} - \sqrt{p-m}$ as $\frac{m}{\sqrt{p}  + \sqrt{p-m}}$, we obtain
		\begin{equation}
			\|x\|_2 - \|y\|_2 \geq \frac{hm}{\sqrt{p}  + \sqrt{p-m}} \geq \frac{hm}{2\sqrt{p}  + \sqrt{m}} \geq \frac{hm}{3\sqrt{p}  } \geq 0.
		\end{equation}

		For Bound (ii): We can rewrite $y$ as
		\begin{align}
			\|y\|_2 &=  \left(\sum_{i\notin M}x_i^2 + \sum_{i\in M}(x_i - h)^2 \right)^{1/2} \\
			&=\left(\sum_{i = 1}^{p}x_i^2 - 2h \sum_{i\in M} x_i + mh^2\right)^{1/2} \\
			&=\left(\|x\|_2^2 + h\left(mh -2 \sum_{i\in M} x_i\right)\right)^{1/2} \\
			&=\left(\|x\|_2^2\left(1+ \frac{h}{\|x\|_2^2}\left(mh -2 \sum_{i\in M} x_i\right)\right)\right)^{1/2} \\
			&=\|x\|_2\left(1+\underbrace{ \frac{h}{\|x\|_2^2}\left(mh -2 \sum_{i\in M} x_i\right)}_{=: t}\right)^{1/2}.
		\end{align}
		Now, to apply Bernoulli's inequality, we require
		\begin{align}
			&t =  \frac{h}{\|x\|_2^2}\left(mh -2 \sum_{i\in M} x_i\right) \geq -1\\
			&\implies h\left(mh -2 \sum_{i\in M} x_i\right) \geq -\|x\|_2^2 \\
			&\implies \sum_{i \in M}x_i^2 - 2h \sum_{i \in M}x_i +mh^2 \geq 0. 
		\end{align}
		It is clear that $mh^2 \geq 0$, so we only require 
		\begin{align}
			&\sum_{i \in M}(x_i^2 - 2h x_i)\geq 0 \\
			&\implies \sum_{i \in M}x_i^2 \geq 2h\sum_{i \in M} x_i \\
			&\implies \sum_{i \in M}x_i \geq 2hm \\
			& \implies h \leq  \frac{1}{2m}\sum_{i \in M}x_i = \bar{x}/2.
		\end{align}
		Hence, assuming $h \leq \bar{x}/2$, we apply the Bernoulli inequality to obtain
		\begin{align}
			\left(1+ \frac{h}{\|x\|_2^2}\left(mh -2 \sum_{i\in M} x_i\right)\right)^{1/2} \leq 1+\frac{h}{2\|x\|_2^2}\left(mh -2 \sum_{i\in M} x_i\right).
		\end{align}
		Therefore, 
		\begin{equation}
			\|y\|_2 \leq \|x\|_2 + \frac{h}{2\|x\|_2}\left(mh -2 \sum_{i\in M} x_i\right).
		\end{equation}
		Hence,
		\begin{align}
			\|x\|_2 - \|y\|_2 &\geq  \frac{h}{2\|x\|_2}\left(2 \sum_{i\in M} x_i - mh\right) \\
			&=  \frac{h}{2\|x\|_2}\left(\sum_{i\in M} x_i +\sum_{i\in M} x_i- mh\right) \\
			&\geq  \frac{h}{2\|x\|_2}\left(\sum_{i\in M} x_i\right), \;\; \text{as} \sum_{i \in M}x_i -mh \geq 0 \\
			&= \frac{h \|x\|_{1,M}}{2\|x\|_2}, \;\; \text{where} \; \|x\|_{1,M} = \sum_{i\in M} x_i,
		\end{align}
		proving Bound (ii).
	\end{proof}
	
	\begin{lemma}\label{lemma:norm_inequality_2}
		Suppose we have a vector $x \in \mathbb{R}^+$, $p>0$, where $M = \{i: x_i = 0\}, |M|=m$, and suppose further that we create another vector 
		\begin{equation}
			y = \begin{cases}
				h, & i\in M\\
				x_i, & \text{otherwise},
			\end{cases}
		\end{equation}
		where $0<h<\min_{i \in M} x_i$, and $h\leq 1$. Then, the following bound holds
		\begin{equation}
			0 \geq  \|x\|_2 - \|y\|_2 \geq -\frac{h m}{2 \|x\|_2}.
		\end{equation}
	\end{lemma}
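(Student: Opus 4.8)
The plan is to reduce the whole statement to one exact computation of $\|y\|_2^2$ followed by a single scalar inequality, mirroring Bound (ii) of Lemma \ref{lemma:norm_inequality}. First I would note that $x$ and $y$ agree on every coordinate outside $M$, while on $M$ we have $x_i = 0$ and $y_i = h$. Hence the squared norms differ only through the $m$ replaced entries, giving the identity
\begin{equation}
	\|y\|_2^2 = \sum_{i \notin M} x_i^2 + \sum_{i \in M} h^2 = \|x\|_2^2 + m h^2.
\end{equation}
This is an equality, not a bound, which is what makes the rest routine.

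The upper bound is then immediate: since $m h^2 \geq 0$ we have $\|y\|_2^2 \geq \|x\|_2^2$, and as both norms are nonnegative this gives $\|y\|_2 \geq \|x\|_2$, i.e. $\|x\|_2 - \|y\|_2 \leq 0$. Intuitively, filling the zero entries with a positive scalar can only enlarge the Euclidean norm.

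For the lower bound I would factor out $\|x\|_2$ and apply the same Bernoulli-type estimate used in the earlier norm lemma. Writing $\|y\|_2 = \|x\|_2\,(1 + m h^2 / \|x\|_2^2)^{1/2}$ and using $(1+t)^{1/2} \leq 1 + t/2$ for $t \geq 0$ yields
\begin{equation}
	\|y\|_2 \leq \|x\|_2 \left(1 + \frac{m h^2}{2 \|x\|_2^2}\right) = \|x\|_2 + \frac{m h^2}{2 \|x\|_2},
\end{equation}
so that $\|x\|_2 - \|y\|_2 \geq -\frac{m h^2}{2\|x\|_2}$. The final step invokes the hypothesis $h \leq 1$: since then $h^2 \leq h$, we have $\frac{m h^2}{2\|x\|_2} \leq \frac{h m}{2\|x\|_2}$, which tightens the negative bound into the claimed $\|x\|_2 - \|y\|_2 \geq -\frac{h m}{2\|x\|_2}$.

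There is essentially no hard step here — the result is a one-line computation followed by a scalar inequality, and the only hypotheses actually used are $h>0$ and $h\leq 1$ (the stated condition on $\min_{i\in M} x_i$ is vacuous and is not needed). The only points requiring care are bookkeeping: confirming that the two vectors genuinely differ only on $M$, so the expression for $\|y\|_2^2$ is exact, and tracking the direction of the inequalities so that the final bound comes out with the correct negative sign. The assumption $h \leq 1$ serves purely to downgrade the sharper $h^2$ estimate to the $h$ form in which the lemma is later applied.
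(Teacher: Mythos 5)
Your proposal is correct and follows essentially the same route as the paper's own proof: the exact identity $\|y\|_2^2 = \|x\|_2^2 + mh^2$, the factorisation $\|y\|_2 = \|x\|_2(1+mh^2/\|x\|_2^2)^{1/2}$, the concavity bound $(1+t)^{1/2}\leq 1+t/2$, and finally $h\leq 1$ to replace $h^2$ by $h$. Your side remark is also apt: since $x_i=0$ for $i\in M$, the stated condition $0<h<\min_{i\in M}x_i$ cannot be used (it is in fact unsatisfiable as written, presumably a typo carried over from Lemma \ref{lemma:norm_inequality}), and the proof indeed only needs $0<h\leq 1$.
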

	\begin{proof}
		The proof is similar to that of Lemma \ref{lemma:norm_inequality}. We again rewrite $y$ as
		\begin{align}
			\|y\|_2 &=  \left(\sum_{i\notin M}x_i^2 + \sum_{i\in M}h^2 \right)^{1/2} \\
			&=\left(\|x\|_2^2 + h^2 m\right)^{1/2} \\
			&=\|x\|_2\left(1+ \frac{h^2m}{\|x\|_2^2}\right)^{1/2}.
		\end{align}
		Now, as $f(x) = \sqrt{x}$ is a concave function, we can bound it from above using a linear approximation (obtained by use of Taylor's expansion) to give
		\begin{equation}
			\left(1+ \frac{h^2m}{\|x\|_2^2}\right)^{1/2} \leq 1+\frac{h^2m}{2\|x\|_2^2}.
		\end{equation}
		Hence, 
		\begin{align}
			&\|y\|_2 \leq \|x\|_2 + \frac{h^2m}{2\|x\|_2} \\
			&\implies \|x\|_2 - \|y\|_2 \geq \frac{-h^2m}{2\|x\|_2} \geq \frac{-hm}{2\|x\|_2},
		\end{align}
		where the last inequality follows from $h\leq 1$, proving the result.
	\end{proof}
 \newpage
\section{Simulation study}\label{appendix:simulation_study}
	\begin{figure}[H]
	\vspace{-5pt}
	\includegraphics[width=1\textwidth]{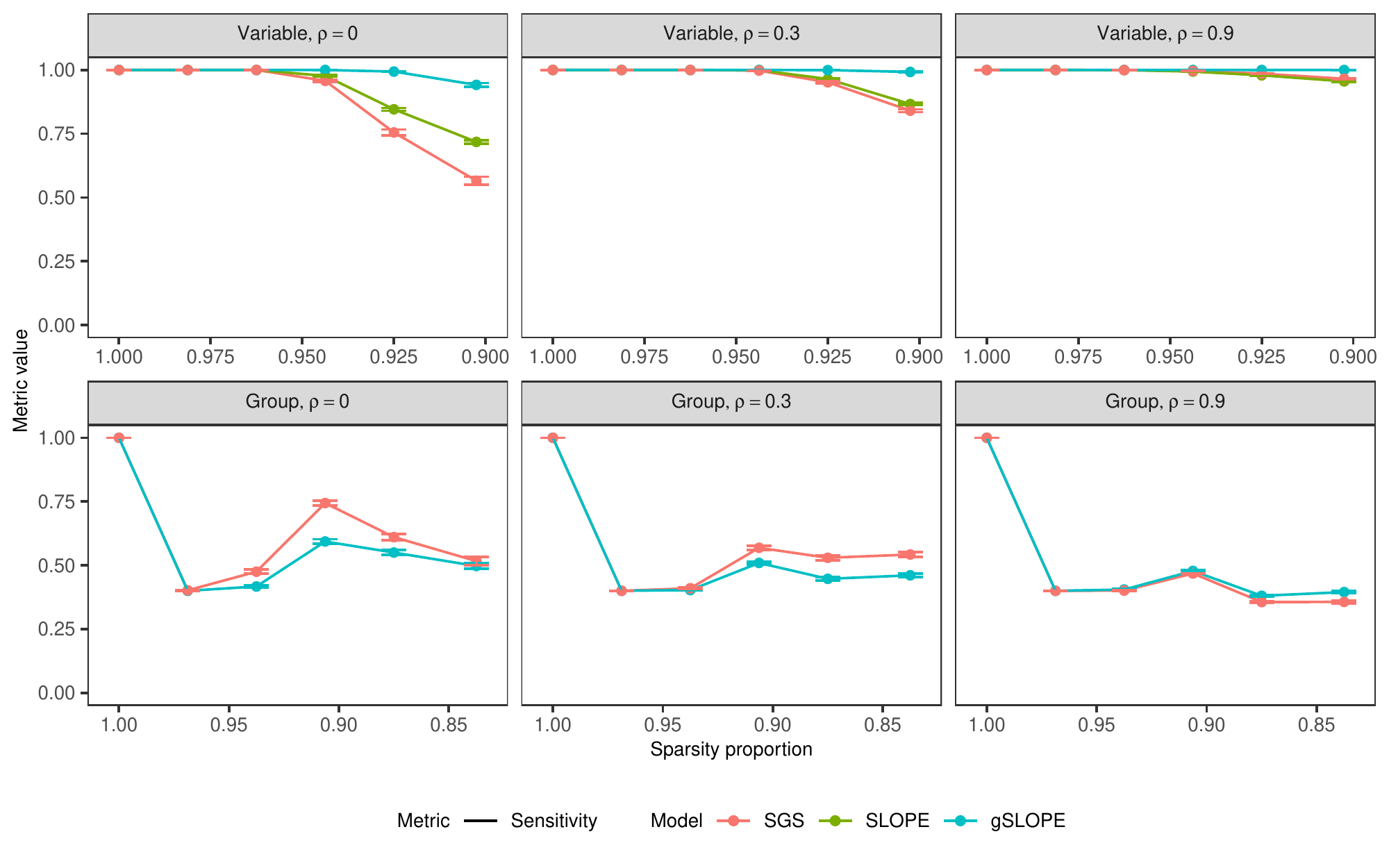}
	\vspace{-25pt}
	\caption[width=0.8\textwidth]{Sensitivity shown as a function of decreasing sparsity proportion, for the SLOPE-based models. This is shown for the different correlation cases and split by the type of selection, with standard errors shown. 100 MC repetitions performed per sparsity proportion and correlation case.}
	\label{fig:slope_models_non_orthog_bd_fixedsnr_sens}
\end{figure}
	\begin{figure}[H]
	\includegraphics[width=1\textwidth]{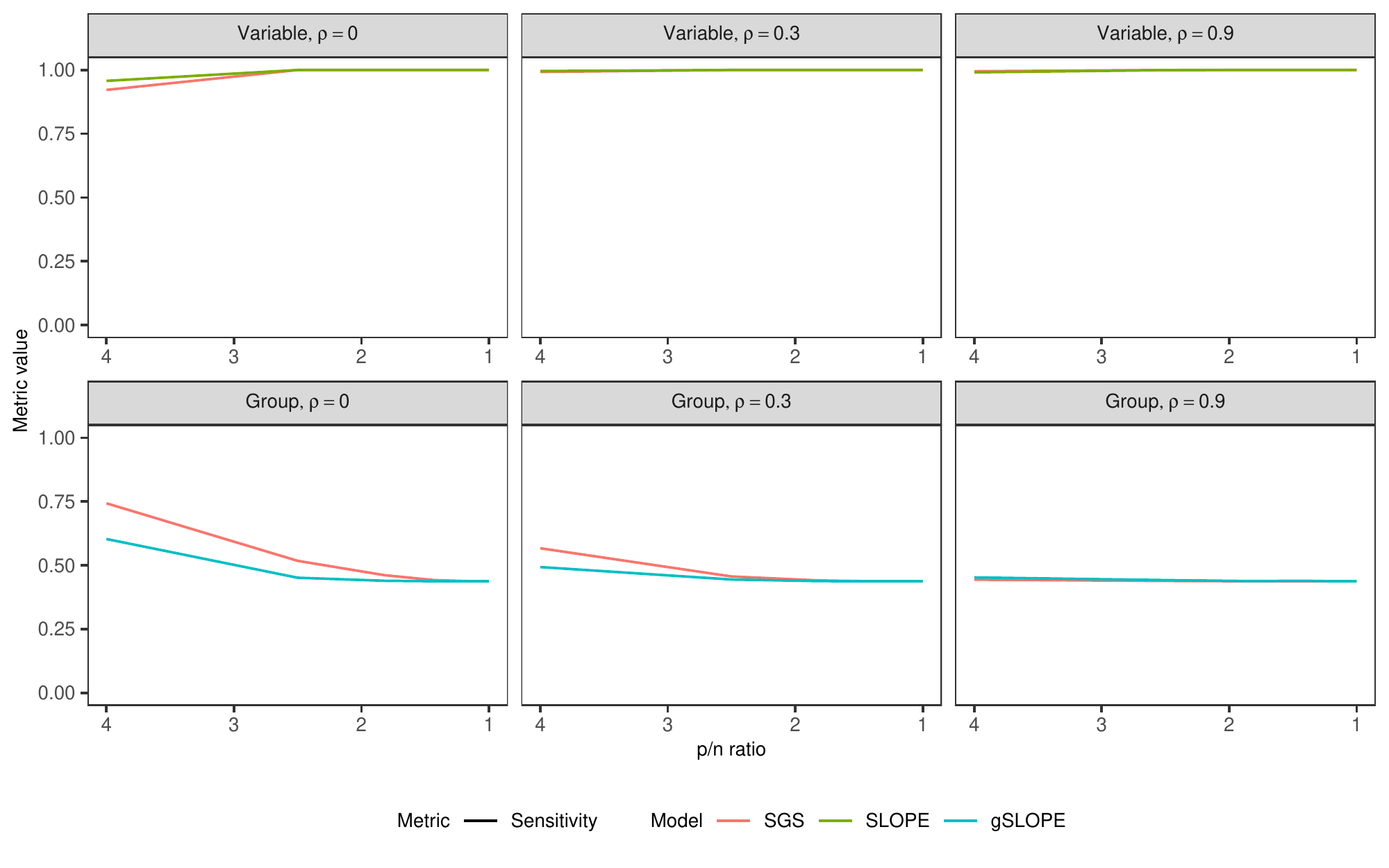}
	\vspace{-25pt}
	\caption[width=0.8\textwidth]{Sensitivity shown as a function of decreasing $p/n$ ratio, for the SLOPE-based models. This is shown for the different correlation cases and split by the type of selection. 100 MC repetitions performed per $p/n$ ratio and correlation case.}
	\vspace{-5pt}
	\label{fig:slope_models_non_orthog_bd_in_fixedsnr_sens}
\end{figure}
\section{Real data}\label{appendix:real_data}
	\begin{figure}[H]
		\vspace{-5pt}
		\includegraphics[width=1\textwidth]{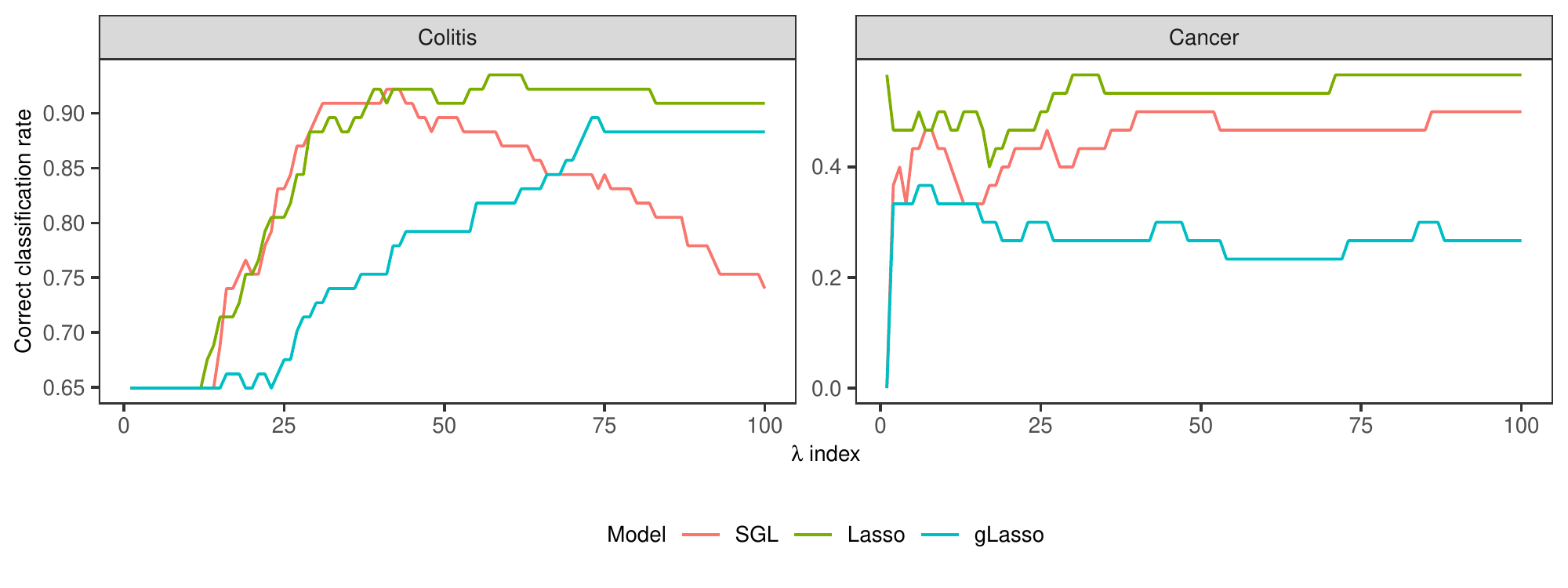}
		\vspace{-25pt}
		\caption[width=0.8\textwidth]{Correct classification rate (\%) ($\uparrow$) shown for SGL, the lasso, and gLasso applied to the colitis and cancer datasets, along a $100$-$\lambda$ regularisation path.}
		\vspace{-10pt}
		\label{fig:real_data_lasso_models}
	\end{figure}
 \begin{table}[H]
		\centering
		\begin{tabular}{r|llc}
			\hline
			Symbol&Gene name&Pathway&$\hat{\beta}$\\
			\hline
			NCK2&NCK adaptor protein 2&MIR6867\_5P&$-0.432$\\ 
			 SUZ12&Suppressor of zeste 12 homolog (Drosophila)&MIR607&$0.357$ \\
			GOLGA8N	&Golgin subfamily A member 8N&MIR3662&$0.283$\\  
			ARPC5L&Actin related protein 2/3 complex, subunit 5‐like&MIR4659A\_3P\_MIR4659B\_3P&$0.256$ 	 \\ 
			BASP1&Brain abundant, membrane attached signal protein 1&LET\_7A\_3P&$-0.179$\\ 
            C5AR1 & Complement component 5a receptor 1&MIR153\_5P&$-0.158$ \\
            TMEM158&Transmembrane protein 158&MIR5582\_3P&$-0.107$ \\
            APP& Amyloid beta (A4) precursor protein&MIR3662&$-0.0692$ \\
            RAP1A & RAP1A, member of RAS oncogene family&MIR3662&$-0.00356$ \\
			\hline
		\end{tabular}
		\caption{The nine active genes as found by the optimal SGS solution for the colitis dataset, given with their estimated coefficient value.}
		\label{tbl:colitis_solution}
	\end{table}	
  \begin{table}[H]
		\centering
		\begin{tabular}{r|llc}
			\hline
			Symbol&Gene name&Pathway&$\hat{\beta}$\\
			\hline
			COX6A1&Cytochrome C Oxidase Subunit 6A1&M40014&$-0.678$\\ 
			 SUSD3&Sushi Domain Containing 3&M40023&$-0.665$ \\
			TRIM46	&Tripartite Motif Containing 46&M39067&$-0.656$\\  
			MMP10&Matrix Metallopeptidase 10&M41652&$-0.638$ 	 \\ 
			CROCC&Ciliary Rootlet Coiled-Coil, Rootletin&M39136&$-0.360$\\ 
            CD320 & CD320 Molecule&M39018&$0.336$ \\
            RAP1GAP2&RAP1 GTPase Activating Protein 2&M40014&$-0.315$ \\
            SLC37A1& Solute Carrier Family 37 Member 1&M39064&$0.307$ \\
            ACCS & 1-Aminocyclopropane-1-Carboxylate Synthase Homolog (Inactive)&M45728&$-0.279$\\ 
            CABLES2& Cdk5 And Abl Enzyme Substrate 2&M39070&$0.236$  \\
			\hline
		\end{tabular}
		\caption{The top ten active genes as found by the optimal SGS solution for the breast cancer dataset, given with their estimated coefficients.}
		\label{tbl:cancer_solution}
	\end{table}	
 \begin{figure}[H]
		\vspace{-5pt}
		\includegraphics[width=1\textwidth]{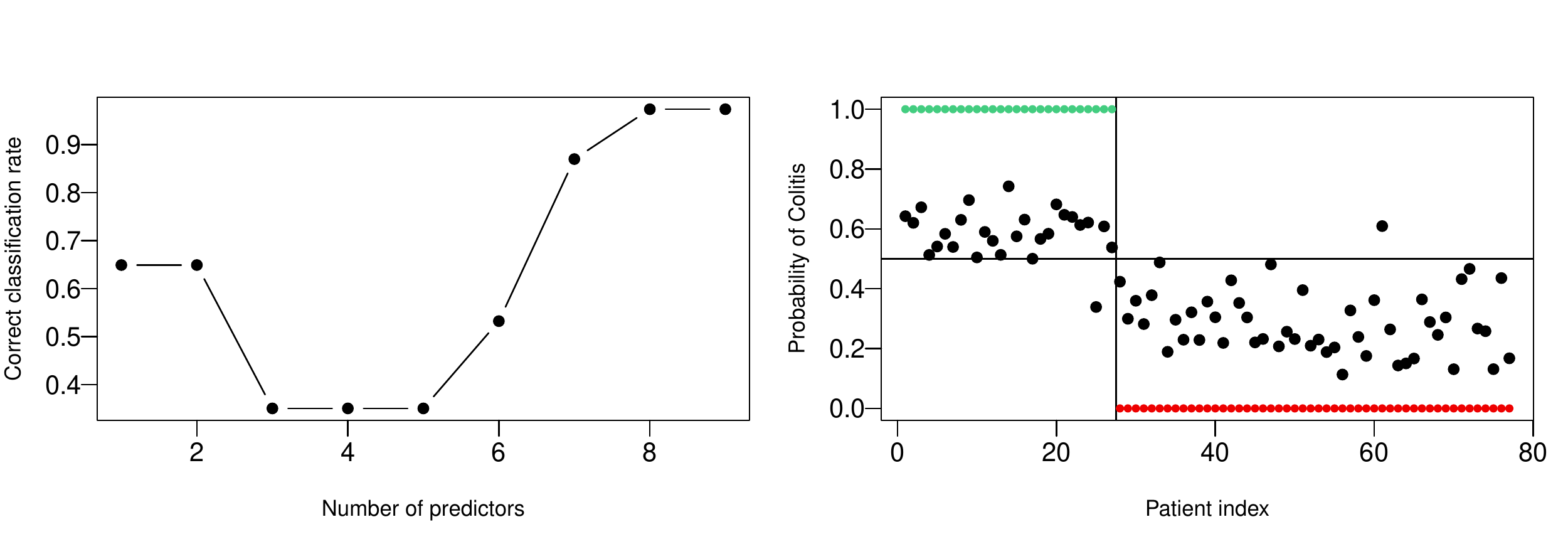}
		\vspace{-25pt}
		\caption[width=0.8\textwidth]{Left: correct classification rate (\%) as a function of the number of predictors in the model, for the optimal SGS model applied to the colitis dataset. The genes enter the model in order of their effect size. Right: the probability of a patient having colitis, according to the fitted SGS model. The decision boundaries are shown and the patients are grouped into whether they have the disease. Two misidentifications can be observed.}
		\vspace{-10pt}
		\label{fig:colitis_solution}
	\end{figure}
	\begin{table}[H]
		\centering
		\begin{tabular}{ll|c|cccc}
			\hline
			% & model & spar\_1 & spar\_2 & spar\_3 & spar\_4 & spar\_5 & spar\_6 & avg\_mse & avg\_mae \\ 
			\multirow{2}{*}{} && \multicolumn{1}{c}{SGS} & \multicolumn{4}{c}{Dataset information}\\
			Dataset& Gene set& Peak classification (\%) &$\#$ genes & $\#$ pathways & Pathway sizes & Avg. pathway size \\
			\hline
			\multirow{ 9}{*}{Colitis} &C1& 93.5& 12321&292 &[1,470] &42 \\
			&C2& 94.8& 12091& 1193& [1,888]& 10\\
			&C3& 97.4& 12031& 1408& [1,723]& 9\\
			&C4& 94.8& 8482& 613& [1,287]& 14\\
			&C5& 93.5& 11555& 614& [1,1034]& 19\\
			&C6& 93.5& 8749& 185& [1,169]&47\\
			&C7& 96.1& 12084& 936& [1,172]& 13\\
			&C8& 94.8& 11027& 601& [1,1007]& 18\\
			&H& 97.4& 3988& 50& [8,193]& 80\\
			\hline
			\multirow{ 9}{*}{Cancer} &C1& 63.3& 7233& 287& [1,338] &25\\
			&C2& 63.3& 7145& 1041& [1,449]& 7\\
			&C3& 66.7& 7088&1132& [1,449] &6\\
			&C4& 56.6& 4106& 475& [1,140] &9\\
			&C5& 60.0& 6636& 548& [1,546]&12\\
			&C6& 63.3& 4529& 183& [2,84]& 25\\
			&C7& 60.0& 7163& 896& [1,83]&8\\
			&C8& 66.7& 6375& 550& [1,533]&12 \\
			&H& 53.3& 583& 217& [1,18]&3\\
			\hline
		\end{tabular}
		\caption{Peak correct classification rate (\%) ($\uparrow$) for SGS applied to all gene sets, alongside dataset information for each gene set.}
		\label{tbl:real_data_full_results}
	\end{table}

\end{document}